\documentclass[letter,11pt]{article}

\usepackage[margin=1in]{geometry}
\usepackage{hyperref}
\newcommand{\email}[1]{\href{mailto:#1}{\nolinkurl{#1}}}
\usepackage[numbers]{natbib}
\usepackage{xspace}
\usepackage{xcolor}
\usepackage{graphicx}
\usepackage{amsmath,amssymb,amsthm}
\usepackage{tikz}
\definecolor{dgray}{RGB}{160,160,160}
\definecolor{lgray}{RGB}{235,235,235}
\usetikzlibrary{chains,arrows}
\usepackage{todonotes}
\usepackage{bbm} 
\usepackage{enumerate}
\usepackage[labelfont=bf]{caption}
\usepackage{booktabs} 
\newcommand{\ra}[1]{\renewcommand{\arraystretch}{#1}} 

\newtheorem*{key-definition}{Key Definition (special case)}
\newtheorem*{main-theorem}{Main Theorem (informal)}

\usepackage{color}

%
%


\theoremstyle{plain}
\newtheorem{proposition}{Proposition}[section]
\newtheorem{theorem}{Theorem}[section]
\newtheorem{lemma}{Lemma}[section]

\newtheorem{claim}{Claim}[section]
\theoremstyle{definition}
\newtheorem{example}{Example}[section]
\newtheorem{definition}{Definition}[section]
\newtheorem{remark}{Remark}[section]



\newcommand{\Xcomment}[1]{{}}



\DeclareSymbolFont{bbold}{U}{bbold}{m}{n}
\DeclareSymbolFontAlphabet{\mathbbold}{bbold}
\DeclareMathOperator*{\argmax}{arg\,max}
\DeclareMathOperator*{\E}{\mathbb{E}}

\newcommand{\growingmid}{\mathrel{}\middle|\mathrel{}}

\newcommand{\ind}{\mathbbold{1}}
\newcommand{\bx}{\mathbf{x}}
\newcommand{\bv}{\mathbf{v}}
\newcommand{\bb}{\mathbf{b}}
\newcommand{\bp}{\mathbf{p}}
\newcommand{\by}{\mathbf{y}}
\newcommand{\btv}{\mathbf{\tilde{v}}}
\newcommand{\F}{\mathcal{F}}

\newcommand{\ab}{$(\alpha,\beta)$-balanced\xspace}
\newcommand{\abb}{weakly $(\alpha,\beta_1,\beta_2)$-balanced\xspace}
\newcommand{\AB}{$(\alpha,\beta)$-balancedness\xspace}

\newcommand{\ALG}{\mathsf{ALG}\xspace}
\newcommand{\alg}{\mathsf{ALG}\xspace}

\newcommand{\GRE}{\mathsf{GRD}\xspace}
\newcommand{\GRD}{\mathsf{GRD}\xspace}
\newcommand{\OPT}{\mathsf{OPT}\xspace}
\newcommand{\feas}{\mathcal{F}}
\newcommand{\feasx}{\feas_{\bx}}
\newcommand{\feasxell}{\feas_{\bx^\ell}} 

\newcommand{\Ell}{\ensuremath{\mathcal{L}}}

\newcommand{\be}{\begin{equation}}
\newcommand{\ee}{\end{equation}}

\newcommand{\vect}[1]{\ensuremath{\mathbf{#1}}}

\newcommand{\br}{\mathbf{r}}
\newcommand{\bz}{\mathbf{z}}

\def \reals {{\mathbb R}}

\newcommand{\dist}{\mathcal{D}}
\newcommand{\disti}[1][i]{{\mathcal{D}_{#1}}}

\newcommand{\dists}{\vect{\dist}}

\newcommand{\bid}{b}
\newcommand{\bids}{\vect{\bid}}
\newcommand{\bidsmi}[1][i]{\bids_{\text{-}#1}}

\newcommand{\bidiprime}[1][i]{{\bid'_{#1}}}

\newcommand{\val}{v}
\newcommand{\vals}{\vect{\val}}
\newcommand{\valsprime}{\vect{\val}'}
\newcommand{\valsmi}[1][i]{\vals_{\text{-}#1}}
\newcommand{\vali}[1][i]{{\val_{#1}}}
\newcommand{\valiprime}[1][i]{{\val'_{#1}}}

\newcommand{\util}{u}

\newcommand{\utili}[1][i]{\util_{#1}}

\newcommand{\alloc}{x}
\newcommand{\allocs}{\vect{\alloc}}
\newcommand{\allocsprime}{\vect{\alloc}'}

\newcommand{\alloci}[1][i]{{\alloc_{#1}}}
\newcommand{\allociprime}[1][i]{{\alloc'_{#1}}}

\usepackage[ruled,vlined]{algorithm2e}

\begin{document}

\title{Prophet Inequalities Made Easy:\\ Stochastic Optimization by Pricing Non-Stochastic Inputs}

\author{%
	Paul D\"utting%
	\thanks{%
	Department of Mathematics,
	London School of Economics,
	Houghton Street,
	London WC2A 2AE,
	UK,
	Email: \email{p.d.duetting@lse.ac.uk}.} 
	\and
	Michal Feldman%
	\thanks{%
	Blavatnic School of Computer Science,
	Tel Aviv University,
	P.O.B. 39040, Ramat Aviv, Tel Aviv, Israel.
	Email: \email{mfeldman@tau.ac.il}}
	\and
	Thomas Kesselheim%
	\thanks{
	TU Dortmund,
	Otto-Hahn-Str.\ 14,
	44221 Dortmund,
	Germany.
	Email: \email{thomas.kesselheim@cs.tu-dortmund.de}.
	This work was done while the author was at Max Planck Institute for Informatics and Saarland University, supported in part by the DFG through Cluster of Excellence MMCI, and while he was visiting Simons Institute for the Theory of Computing.
	}
	\and
	Brendan Lucier%
	\thanks{
	Microsoft Research,
	1 Memorial Drive \#1, Cambridge, MA 02142, USA.
	Email: \email{brlucier@microsoft.com}}
	}

\maketitle


\begin{abstract}
We present a general framework for stochastic online maximization problems with combinatorial feasibility constraints. The framework establishes prophet inequalities by constructing price-based online approximation algorithms, a natural extension of threshold algorithms for settings beyond binary selection. Our analysis takes the form of an extension theorem: we derive sufficient conditions on prices when all weights are known in advance, then prove that the resulting approximation guarantees extend directly to stochastic settings.  Our framework unifies and simplifies much of the existing literature on prophet inequalities and posted price mechanisms, and is used to derive new and improved results for combinatorial markets (with and without complements), multi-dimensional matroids, and sparse packing problems. Finally, we highlight a surprising connection between the smoothness framework for bounding the price of anarchy of mechanisms and our framework, and show that many smooth mechanisms can be recast as posted price mechanisms with comparable performance guarantees.

\end{abstract}


\section{Introduction}

A concert is being held in a local theatre, and potential audience members begin calling to reserve seats.  The organizer doesn't know individuals' values for seats in advance, but has distributional knowledge about their preferences.  Some need only a single seat, others require a block of seats.  Some think seats are very valuable, others are only willing to attend if tickets are very cheap.  Some prefer front-row seats, some prefer to sit a few rows back, and some prefer the balcony.  The organizer needs to decide which seats, if any, to allocate to each individual as they call.  The goal is to maximize the total value (i.e., social welfare) of the seating arrangement.

%
%
Such stochastic online optimization problems have been studied for decades.
A common goal
is to attain ``prophet inequalities'' that compare the performance of an online algorithm to that of an omniscient offline planner.  A classic result
is that if the goal is to choose exactly one element (i.e., there is only a single seat to allocate), then a simple threshold strategy---choosing the first value higher than a certain pre-computed threshold---yields at least half of the expected maximium value~\cite{KrengelS77,KrengelS78,Samuel84}.
This solution has the appealing property that it corresponds to posting a take-it-or-leave-it price and allocating to the first interested buyer.
A natural question
is whether more complex allocation problems (like the concert example above) can be approximated by posting prices and allowing buyers to select their preferred outcomes in sequence.

%
%
Driven in part by this connection to posted prices, prophet inequalities have seen a
resurgence in theoretical computer science. Recent work has established new prophet inequalities
for a variety of allocation problems, including matroids \cite{ChawlaHMS10,KleinbergW12}, unit-demand bidders \cite{ChawlaHMS10,Alaei14}, and combinatorial auctions \cite{FeldmanGL15}.
In this paper we develop a 
framework for proving prophet inequalities and constructing posted-price mechanisms.
Our framework, which is based on insights from economic theory, unifies and simplifies many existing results and gives rise to new and improved prophet inequalities in a host of online settings.

\subsection{Example: Combinatorial Auctions}
\label{sec:intro.example}

To introduce our framework we will consider a combinatorial auction problem.  There is a set $M$ of $m$ items for sale and $n$ buyers. 
Each buyer $i$ has a valuation function $v_i\colon 2^M \to \mathbb{R}_{\geq 0}$ that assigns non-negative value to every subset of at most $d$ items.\footnote{Alternatively, we can suppose that there is a cardinality constraint that no buyer can receive more than $d$ items.}  Valuations are non-decreasing and normalized so that $v_i(\emptyset) = 0$, but otherwise arbitrary.  The goal is to assign items to buyers to maximize total value.
Write $\bv(\bx) = \sum_{i=1}^n v_i(x_i)$ for the total value of allocation $\bx = (x_1, \dotsc, x_n)$, where $x_i \subseteq M$ for all $i$.  There is a simple $O(d)$-approximate  greedy algorithm for this problem and a lower bound of $\Omega(d / \log d)$ assuming $P \neq NP$ \cite{Trevisan01}.  Our goal is to match this $O(d)$ approximation as a prophet inequality with posted item prices.
That is, given distributions over the valuations, compute prices for the items so that, when buyers arrive in an arbitrary order and each chooses his most-desired bundle from among the unsold items, the expected total value is an $O(d)$ approximation to the expected optimum.\footnote{There is a straightforward lower bound of $\Omega(d)$ on the approximation of any posted item prices.  Suppose there are $d$ items and two agents.  The first agent is unit-demand and has value $1$ for any single item.  The second agent values the set of all $d$ items for value $d$, and has value $0$ for any subset.  If all items have price greater than $1$, then neither agent purchases anything.  If any item has price less than $1$, then the unit-demand agent (who chooses first) will purchase the cheapest single item and the other agent will purchase nothing, generating a total value of $1$ whereas the optimum is $d$.  One can avoid issues of tie-breaking by perturbing the values by an arbitrarily small amount.}

Let's first consider the simpler full information case where all valuations are known in advance.
This problem is still non-trivial, and in fact
there may not exist prices that lead to the optimal allocation.\footnote{For example, suppose there are three items and four single-minded bidders.  The first three bidders each have value $2$ for a different pair of items, and the last bidder has value $3$ for the set of all three items, so at most one bidder can get positive value, and it is optimal to allocate all items to the last bidder.  However, at any item prices where the last bidder is willing to purchase, one of the other bidders will purchase first if arriving before the last bidder. This leads to a $3/2$ approximation in the worst arrival order.}
%
Intuitively, what we need for an approximation result are prices that balance between two forces.  They should be small enough that high-valued buyers are willing to purchase their optimal bundles if available, but also large enough that those items will not first be scooped up by bidders with much lower values.
Such ``balanced'' prices can be obtained as follows: Given valuation profile $\bv$, consider the welfare-maximizing allocation $\bx^*$ (which we can assume allocates all items). Then for each item $j$, say $j \in x^*_i$, set the price of $j$ to $p_j = v_i(x^*_i)/2|x^*_i|$.
These prices are low enough that the total price of all items is at most $1/2 \cdot \bv(\bx^*)$, which is significantly less than the total value of $\bx^*$. At the same time, prices are high enough that, for any set of goods $S$, the total price of $S$ is at least $1/2d$ of the value of allocations in the optimal allocation $\bx^*$ that intersect $S$.  So, in particular, a bidder that purchases $S$ must have value at least that high.

To see why these prices yield an $O(d)$ approximation, let $\bx$ denote the purchase decisions of the players and let $I \subseteq N$ be the set of players $i$ such that $x^*_i$ intersects with $\bx$. The welfare achieved by $\bx$ is equal to the revenue generated plus the sum of buyer utilities.  The revenue is the sum of prices of the items sold, and since prices are ``balanced'' this is at least $(1/2d) \cdot \sum_{i \in I} v_i(x^*_i)$.  Also, each buyer $i \not\in I$ could have chosen to purchase $x_i^*$, and therefore must get at least as much utility as they would by purchasing $x^*_i$, which is $v_i(x^*_i)$ minus the price of $x^*_i$.  Again, since prices are balanced, this means the sum of buyer utilities is at least $\sum_{i \not\in I} v_i(x^*_i) - 1/2 \cdot \bv(\bx^*)$.
Multiplying this by $1/2d$ and adding the revenue gives an $O(d)$ approximation.\footnote{
For simplicity we assumed here that $\sum_{i \not\in I} v_i(x^*_i) - 1/2 \cdot \bv(\bx^*) \geq 0$.
More generally, since utilities are non-negative, the sum of buyer utilities is at least $\max\{\sum_{i \not\in I} v_i(x^*_i) - 1/2 \cdot \bv(\bx^*), 0\}$. If the maximum is attained at $0$, then $\sum_{i \in I} v_i(x^*_i) > 1/2 \cdot \bv(\bx^*)$ and the revenue alone exceeds $(1/4d) \cdot \bv(\bx^*)$, as desired.}.

The argument above was for the full information 
case.  Perhaps surprisingly, the existence of sufficiently ``balanced'' prices for full information
instances \emph{also} establishes an $O(d)$-approximate prophet inequality for the general stochastic problem, where one has only distributional knowledge about valuations.
Our main result is this reduction from the stochastic setting to the full information 
setting, which holds for a broad class of allocation problems.

\subsection{A Framework for Prophet Inequalities}


Consider a more general combinatorial allocation problem, where the cardinality constraint $d$ is replaced with an arbitrary downward-closed feasibility constraint $\feas$ and each $v_i$ is drawn independently from an arbitrary distribution $\mathcal{D}_i$.
While our framework applies for more general outcome spaces (see Sections~\ref{sec:model} and~\ref{sec:extension}), combinatorial allocation problems provide a sweet spot between expressiveness and clarity.
Our key definition is the following notion of balanced prices for full-information instances.
For each $\bx \in \feas$ we write $\OPT(\bv \mid \bx)$ for the \emph{optimal residual allocation}: the allocation that maximizes $\sum_i v_i(x'_i)$ over $\bx' \in \feas$ with $\bx, \bx'$ disjoint and $\bx \cup \bx' \in \feas$.
Given a fixed valuation profile $\bv$, a \emph{pricing rule} defines a price $p^{\bv}_i(x_i)$ for every bundle that we can assign to buyer $i$. For example, the item prices described in Section~\ref{sec:intro.example} define a pricing rule $p^{\bv}_i(x_i) = \sum_{j \in x_i} p_j$.  Below we also extend the definition to dynamic prices, i.e., prices that depend on which allocations have already been made.

\begin{key-definition}[($\alpha,\beta$)-balanced prices]
Let $\alpha, \beta > 0$.
A pricing rule $\bp^{\bv} = (p_1^\bv,\dots,p_n^\bv)$ defined by functions $p_i^\bv: 2^M \rightarrow \mathbb{R}_{\ge 0}$ is \emph{$(\alpha, \beta)$-balanced} with respect to valuation profile $\bv$ if for all $\bx \in \feas$ and all $\bx' \in \feas$ with $\bx,\bx'$ disjoint and $\bx \cup \bx' \in \feas$,
\begin{enumerate}
\item[(a)] $\sum_i p_i^{\bv}(x_i) \geq \frac{1}{\alpha} ( \bv(\OPT(\bv)) - \bv(\OPT(\bv \mid \bx)))$ ,
\item[(b)] $\sum_i p_i^{\bv}(x_i') \leq \beta \, \bv(\OPT(\mathbf{v} \mid \mathbf{x}))$ .
\end{enumerate}
\end{key-definition}

The first condition formalizes what it means that prices are high enough: the sum of prices for $\bx$ should partially cover the welfare lost due to allocating $\bx$.  The second condition formalizes ``low enough:'' the sum of prices for any $\bx'$ that is still feasible ``after'' allocating $\bx$ should not be much higher than the optimal residual welfare.

Our 
main result is that 
the existence of balanced prices for 
full information instances directly implies 
a price-based prophet inequality for the stochastic setting.
The idea to choose balanced prices is a natural one and has appeared in the prophet inequality literature before, most explicitly in the notion of balanced thresholds of Kleinberg and Weinberg \cite{KleinbergW12}. 
Previous definitions, however, applied to the stochastic setting directly, which made the construction and analysis of balanced thresholds inherently probabilistic.  A main advantage of our framework is that it suffices to reason about the simpler full-information setting.


\begin{main-theorem}
Consider the setting where valuations are drawn from product distribution $\mathcal{D}$. Suppose that the pricing rule $\bp^{\bv}$ is $(\alpha, \beta)$-balanced with respect to valuation profile $\bv$. Then posting prices
\[
p_i(x_i) = \frac{\alpha}{1 + \alpha \beta} \mathbf{E}_{\tilde{\bv} \sim \mathcal{D}} \left[ p_i^{\tilde{\bv}}(x_i) \right]
\]
achieves welfare at least $\frac{1}{1 + \alpha \beta} \mathbf{E}[\bv\left(\OPT(\bv)\right)]$. 
\end{main-theorem}


In other words, to construct appropriate prices for a stochastic problem instance, it suffices to construct balanced prices for the full-information instances in its support and then post the expected values of those prices, scaled by an appropriate factor.
The proof of our main theorem is similar in spirit to proofs in the Price of Anarchy literature \cite{Roughgarden15,SyrgkanisT13} or for establishing algorithmic stability \cite{HardtRS16}, in that it uses ``ghost samples.'' It is, however, considerably more involved because of the sequential, online aspect of our problem. 

\begin{remark}[Weakly Balanced Prices]
We also define a notion of weakly balanced prices, in which it suffices to upper bound the prices by $\beta \bv(\OPT(\bv))$. In this case, we can show that posting an appropriately scaled version of the expected prices yields a $1/4\alpha\beta$-approximate prophet inequality.
\end{remark}

\begin{remark}[Computation]
It is sometimes easier to compute prices that are balanced with respect to an approximation algorithm $\ALG$ rather than $\OPT$.  Our result still applies in this case, with $\OPT$ replaced by $\ALG$ in the welfare guarantee.
We also note that if the price rule $\bp$ in the main theorem is perturbed to some $\hat{\bp}$ with $||\bp - \hat{\bp}||_{\infty} < \epsilon$, then the welfare guarantee degrades by at most an additive $O(n\epsilon)$ term.  This robustness is desirable in itself, and also implies that appropriate prices can be computed for bounded values with $\textsc{poly}(n,m,1/\epsilon)$ samples using standard concentration bounds, as has been observed for various posted price settings~\cite{ChawlaHK07,FeldmanGL15}.
\end{remark}

\begin{remark}[Static vs. Dynamic, Anonymous vs. Discriminatory, Bundle vs. Item Pricing]
We have described our framework for static, discriminatory, bundle prices. In general, our construction has the property that if the full-information balanced prices $\bp^{\bv}$ 
are dynamic, anonymous, and/or take the form of item prices, then the derived prices for the stochastic setting will have these properties as well.
For example, our result holds also for dynamic prices, replacing $p_i(x_i)$ and $p_i(x'_i)$ with $p_i(x_i \mid \bx_{[i-1]})$ and $p_i(x'_i \mid \bx_{[i-1]})$ where the conditioning on $\bx_{[i-1]}$ indicates that the price to player $i$ may depend on the purchase decisions of players that precede him. See Sections \ref{sec:model} and \ref{sec:extension} for details.
\end{remark}

\begin{remark}[Arrival Order]
Balancedness can depend on player arrival order.  In the applications we consider, our results hold even if the arrival order is chosen by an adaptive adversary that observes previous realized values and purchase decisions before selecting the next player to arrive. 
\end{remark}

Let's return to our example from Section~\ref{sec:intro.example}.
We established the \emph{existence} of weakly $(d,1)$-balanced prices (simply undo the scaling by $1/2$), so our main result implies a $O(d)$-approximate prophet inequality.
What about computation?  We can compute prices in polynomial time by basing them on the $O(d)$-approximate greedy algorithm (rather than the optimal allocation), but then we only get a $O(d^2)$-approximate solution.  It turns out that we can further improve this to $O(d)$ in polynomial time, as we hoped for in Section~\ref{sec:intro.example}, by applying our main theorem to a fractional relaxation of the auction problem.  See Section~\ref{section:applications} for more details.

\vspace*{-10pt}

\paragraph{Composition} We also show that balanced prices ``compose'', as was shown for mechanism smoothness in \cite{SyrgkanisT13}. This means that to derive a prophet inequality for a complex setting it often suffices to show balancedness for a simpler problem.
See Appendix \ref{app:composition}.

\subsection{Unification of Existing Prophet Inequality Proofs}

Our framework unifies and simplifies many of the existing prophet inequality proofs.
We list some representative examples below.
We discuss the first example in more detail in Appendix \ref{app:single-item}.
The other two examples are covered
in Appendices \ref{app:cas} and \ref{app:matroid}.

\begin{example}[Classic Prophet Inequality, \cite{KrengelS77,KrengelS78}]
\label{ex:single-item}
The goal is to pick the single highest-value element $v_i$. The pricing rule $\bp^\bv$ defined by $p_i^\bv(x_i) = \max_i v_i$ for all $i$ is $(1,1)$-balanced.
\end{example}

\begin{example}[Matroids, \cite{KleinbergW12}]
\label{ex:matroid}
The goal is to pick a maximum weight independent set in a matroid. Encode sets $S$ by $n$-dimensional vectors $\bx$ over $\{0,1\}$ such that $x_i = 1$ if $i \in S$. Then one can define a dynamic pricing rule $\bp^\bv$ by $p_i(x_i \mid \by) = \bv(\OPT(\bv \mid \by)) - \bv(\OPT(\bv \mid \by \cup x_i))$ for all $i$, where $\by$ is the set of previously-selected elements. This pricing rule is $(1,1)$-balanced.
\end{example}

\begin{example}[XOS Combinatorial Auctions, \cite{FeldmanGL15}]
\label{ex:xos}
The goal is to assign $m$ goods to $n$ buyers with XOS valuations\footnote{A valuation $v$ is XOS if there is a collection of additive functions $a_1(\cdot),\ldots ,a_k(\cdot)$, such that for every set $S$, $v(S) = \max_{1 \leq i \leq k} a_i(S)$.  This is a generalization of submodular valuations \cite{LehmanLN01}.}. Let $\bx^* = \OPT(\bv)$ and let $a_1, \dots, a_n$ be the corresponding additive supporting functions. Set item prices $p_j = a_i(j)$ for $j \in x^*_i$.
This pricing rule is $(1,1)$-balanced.
\end{example}

The final example illustrates the power of our composition results (see Appendix \ref{app:composition}): the existence of $(1, 1)$-balanced prices for XOS combinatorial auctions, and hence a $2$-approximate prophet inequality, follows directly from the existence of $(1, 1)$-balanced prices for a single item, despite being significantly more complex.  It also yields a $O(\log m)$-approximate prophet inequality for subadditive valuations by approximating subadditive valuations with XOS valuations \cite{Dobzinski07,BhawalkarR11}.


\subsection{New and Improved Prophet Inequalities}
\label{sec:intro.new}

We also establish new prophet inequalities using our framework; see Table~\ref{table:applications}.
%
Our first result is a poly-time $(4k-2)$-approximate prophet inequality for MPH-$k$ combinatorial auctions\footnote{The maximum over positive hypergraphs-$k$ (MPH-$k$) hierarchy of valuations \cite{FeigeFIILS15} is an inclusive hierarchy, where $k$ measures the degree of complementarity.}.

\begin{table*}\centering
\ra{1.3}
\begin{tabular}{@{}p{2.4cm}p{1.9cm}p{5cm}p{3.5cm}p{1.6cm}@{}}
\toprule
\textbf{Feasibility Constraint} & \textbf{Valuation Class} & \textbf{Pricing Model} & \textbf{Upper Bound} & \textbf{Query Model} \\
\midrule
Combinatorial Auction & XOS & Static, anonymous item prices & \mbox{$\frac{2e}{e-1}$~\cite{FeldmanGL15}} \hspace{1.1cm} 2~[this work] & XOS, \ Demand  \\
Combinatorial Auction & MPH-$k$ & Static, anonymous item prices & \mbox{$O(k^2)$~\cite{FeldmanGL15}} \hspace{1.5cm} $4k-2$~[this work] & MPH, \hspace{1cm}Demand  \\
Matroid & Submodular & Dynamic prices & $2\text{ (existential)}$  \hspace{1cm} $4$ (computational) & Value  \\
Knapsack & Additive & Static, anonymous prices & $3$ & Explicit \\
$d$-Sparse PIPs & Additive & Static, anonymous prices & $8d$ & Explicit \\
\bottomrule
\end{tabular}
\caption{Overview of applications. Results are computational unless otherwise stated.  The query model refers to the valuation access needed for the computational upper bounds, where ``explicit'' indicates that valuations can be described explicitly. All results are order oblivious (see Section \ref{sec:model}).}
\label{table:applications}
\vspace*{-10pt}
\end{table*}

\begin{theorem}[Combinatorial auctions with MPH-$k$ valuations]
\label{thm:mphk}
For combinatorial auctions with MPH-$k$ valuations, a $(4k-2 + \epsilon)$-approximate posted-price mechanism, with static item prices, can be computed in $\textsc{poly}(n,m,1/\epsilon)$ demand and MPH-$k$ queries.
\end{theorem}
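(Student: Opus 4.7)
The plan is to apply the Main Theorem: it suffices to exhibit, for each realized profile $\bv$ with MPH-$k$ valuations, static item prices that are weakly $(\alpha,\beta)$-balanced with respect to a computable benchmark allocation, where the resulting $4\alpha\beta$ factor (times any slack between the benchmark and $\OPT$) is at most $4k-2$. By the Remark on computation and robustness of the Main Theorem, replacing $\Ex[p_i^{\tilde{\bv}}(x_i)]$ by an empirical mean over $\textsc{poly}(n,m,1/\epsilon)$ samples from $\mathcal{D}$ loses at most an extra $\epsilon$, provided each sample can be processed in $\textsc{poly}(n,m)$ time using demand and MPH-$k$ queries.

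For each full-information profile $\bv$, I would first compute a benchmark allocation $\bx^\ALG$ by solving the configuration LP via a separation oracle built from demand queries and rounding the fractional optimum. Next, I would issue one MPH-$k$ query per bidder to obtain non-negative hypergraph weights $w_i$ on hyperedges of size at most $k$ satisfying $v_i(x_i^\ALG) = \sum_{e\subseteq x_i^\ALG} w_i(e)$ and $v_i(S) \ge \sum_{e\subseteq S} w_i(e)$ for every $S$. With this representation I would define static item prices by spreading each hyperedge's weight uniformly over its items,
\[
p_j^\bv \;=\; \sum_i \sum_{\substack{e \subseteq x_i^\ALG\\ j\in e}} \frac{w_i(e)}{|e|},
\]
or a size-dependent variant $\lambda_{|e|}\,w_i(e)/|e|$ to be optimized in the sharpening step below.

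Condition (b) of weak balancedness is immediate: for any feasible $\bx'$, $\sum_i p_i^\bv(x'_i) = \sum_\ell \sum_{e\subseteq x_\ell^\ALG} \frac{|e\cap \cup_i x'_i|}{|e|}\,w_\ell(e) \le \bv(\bx^\ALG)$, giving $\beta=1$. For condition (a), with $T = \cup_i x_i$ the key inequality
\[
v_i(x_i^\ALG) - v_i(x_i^\ALG \setminus T) \;\le\; \sum_{\substack{e \subseteq x_i^\ALG \\ e \cap T \ne \emptyset}} w_i(e)
\]
holds because any hyperedge meeting $T$ contributes to $v_i(x_i^\ALG)$ but not to $v_i(x_i^\ALG\setminus T)$; each such edge is already counted in $\sum_{j\in T} p_j^\bv$ with weight at least $1/|e|\ge 1/k$, yielding weakly $(k,1)$-balanced prices and a $4k$-approximate prophet inequality via the Main Theorem.

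The main obstacle is sharpening $4k$ to $4k-2$. The slack comes from the fact that size-$1$ hyperedges (the purely additive part of each $v_i$) contribute to condition (a) with factor $1$ rather than $1/k$, and intermediate edges of size $s$ contribute with factor $1/s$. The plan is to choose the size-dependent weighting $\lambda_s$ to tightly balance condition (a) against the increase it induces in condition (b), and then to verify that the resulting $(\alpha,\beta)$ pair together with $1+\alpha\beta$ (or $4\alpha\beta$ in the weak case) matches the $4k-2$ target. The remainder of the theorem---polynomial-time construction via the LP and samples, and the final $\epsilon$ loss---then follows from the computational Remark and standard concentration.
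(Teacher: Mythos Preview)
Your high-level plan---configuration LP, supporting PH-$k$ representations, static item prices from hyperedge weights, then balancedness plus sampling---is exactly the paper's route. The $4k$ bound you derive is correct. The gap is in the sharpening step.

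Tuning a size-dependent multiplier $\lambda_s$ within the \emph{single}-$\beta$ weak balancedness framework cannot reach $4k-2$. With $\lambda_s$, condition~(a) forces $\alpha \ge \max_s s/\lambda_s$ and the weak condition~(b) gives $\beta \ge \max_s \lambda_s$, so $\alpha\beta \ge k$ no matter how you choose $\lambda_s$; the weak bound $4\alpha\beta$ is stuck at $4k$. The paper instead uses the refined two-parameter notion of weakly $(\alpha,\beta_1,\beta_2)$-balancedness (Definition~\ref{def:ab2} and Theorem~\ref{thm:refined-extension}), whose guarantee is $\alpha(2\beta_1+4\beta_2)$. The right choice is the opposite of yours: take $\lambda_s = s$, i.e., do \emph{not} divide by $|e|$, setting $p_j=\sum_i\sum_{e\subseteq x_i^{\ALG},\, j\in e} w_i(e)$. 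Then $\alpha=1$ trivially. For condition~(b), each hyperedge $e$ is counted in $\sum_i p_i(x'_i)$ exactly $|e\cap(\cup_i x'_i)|$ times; since $\bx'$ is disjoint from $\bx$, this is at most $|e|$ when $e$ misses $\bx$ and at most $|e|-1$ when $e$ meets $\bx$. Writing $A$ for the total weight of hyperedges disjoint from $\bx$ and $B$ for the rest, one gets $kA+(k-1)B = A + (k-1)(A+B) \le \bv(\OPT(\bv,\feasx)) + (k-1)\,\bv(\ALG(\bv))$, hence $\beta_1=1,\ \beta_2=k-1$ and $\alpha(2\beta_1+4\beta_2)=4k-2$.

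One computational remark: do not round the LP. The paper prices directly from the \emph{fractional} optimum $x^*_{i,S}$, shows the same $(1,1,k-1)$-balancedness in the fractional outcome space, and then observes that at linear item prices every buyer has an integral utility-maximizing response, so the $(4k-2)$ guarantee transfers to the integral mechanism without any rounding loss. Rounding first would tie your benchmark to whatever integrality gap or rounding factor you can prove for MPH-$k$, which is unnecessary here.
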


Theorem \ref{thm:mphk} improves the poly-time result of \cite{FeldmanGL15} from $O(k^2)$ to $O(k)$.
We note two interesting special cases. 
First, combinatorial auctions with bundle size $d$ (from Section~\ref{sec:intro.example}) belong to MPH-$d$, so Theorem~\ref{thm:mphk} captures the polytime $O(d)$ approximation discussed above.
Second, XOS valuations coincide with MPH-$1$,  
so Theorem \ref{thm:mphk} improves the previously best known poly-time result of \cite{FeldmanGL15} from
$2e/(e-1)$ to $2$, matching the existential lower bound.
See Section \ref{section:applications} and Appendix \ref{app:cas}.



The second set of new results includes Knapsack feasibility constraints and $d$-sparse Packing Integer Programs (PIPs), for which we obtain a constant- and a $O(d)$-approximation, respectively. These settings are presented in Sections \ref{section:applications} and \ref{app:sparse-pip}, respectively. 
%

\begin{theorem}[Knapsack]
\label{thm:knapsack}
For Knapsack constraints, a factor $(5+\epsilon)$-approximate posted-price mechanism, with static prices, can be computed in $\textsc{poly}(n,1/\epsilon)$.
This improves to a $(3+\epsilon)$ approximation if no individual demands more than half of the total capacity.
\end{theorem}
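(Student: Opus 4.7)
My plan is to apply the Main Theorem by exhibiting static, anonymous, per-unit-capacity prices that are $(\alpha,\beta)$-balanced with $\alpha\beta = 2$ in the small-item setting and $\alpha\beta = 4$ in the general setting; these would yield the approximation factors $3$ and $5$ respectively, and the additive $\epsilon$ would arise from estimating the expected prices prescribed by the Main Theorem using $\textsc{poly}(n,1/\epsilon)$ samples from $\mathcal{D}$ and invoking the robustness noted in the Remark on computation.

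For the small-item case (every $s_i \le C/2$), for each realized valuation profile $\bv$ I would post a single per-unit-capacity rate
\[
\tau^{\bv} \;=\; \frac{\bv(\OPT(\bv))}{2C},
\]
so that the full-information price charged to buyer $i$ for acceptance is $p^{\bv}_i(1) = \tau^{\bv} s_i$. Condition~(b) is essentially immediate: any $\bx'$ that is feasible with $\bx$ satisfies $\sum_i s_i x'_i \le C - \sum_i s_i x_i$, so $\sum_i p^{\bv}_i(x'_i) \le \tau^{\bv}(C - \sum_i s_i x_i)$, and I would argue this is at most $\bv(\OPT(\bv \mid \bx))$ by exhibiting a fractional residual solution of comparable value and invoking the factor-$2$ integrality gap of Knapsack in the small-item regime; this yields $\beta=1$. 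For condition~(a), the claim to prove is that consuming $\sum_i s_i x_i$ units of capacity costs at most $2\tau^{\bv} \sum_i s_i x_i$ in welfare against the residual optimum, which I would verify by noting that the optimal per-unit-capacity dual of the LP relaxation satisfies $\tau^* \ge \tau^{\bv}$ and that the LP objective is piecewise linear and concave in the capacity; together these give $\alpha = 2$ and hence $(2,1)$-balance.

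For the general case, I would decompose $\bv(\OPT(\bv)) \le \OPT^S(\bv) + \OPT^L(\bv)$, where $\OPT^S$ is the knapsack optimum restricted to small items ($s_i \le C/2$) and $\OPT^L$ is the maximum value of a single large item (since at most one fits in the knapsack). For the large items I would apply the classical $(1,1)$-balanced single-item threshold from Example~\ref{ex:single-item}, and for the small items the scheme above, combining the two into one size-dependent static anonymous price schedule. The combination loses a factor of $2$ in each balance constant because each subscheme effectively competes for a share of the capacity and of the welfare, giving overall $(2,2)$-balance and thus a $(5+\epsilon)$-approximation. The hardest step will be verifying condition~(a) in the small-item case when the elements of $\bx$ are not themselves in the integer optimum: the LP-based opportunity-cost accounting has to transfer cleanly from the fractional to the integer residual with the factor-$2$ integrality gap entering only multiplicatively once; a secondary difficulty is gluing the small- and large-item prices into a single anonymous static schedule without doubling any constant more than necessary.
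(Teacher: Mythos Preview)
Your Condition~(b) argument in the small-item case does not go through with the canonical choice of $\feasx$ (allocations disjoint from $\bx$ that fit in the residual capacity). Consider two agents, each of size $C/2$, with values $1$ and $\epsilon$. Then $\bv(\OPT(\bv))=1+\epsilon$ and $\tau^{\bv}=(1+\epsilon)/(2C)$. If $\bx$ accepts agent~$1$, the residual $\feasx$ contains $\bx'=\{2\}$, for which $\sum_i p_i^{\bv}(x'_i)=\tau^{\bv}\cdot C/2=(1+\epsilon)/4$, while $\bv(\OPT(\bv\mid\bx))=\epsilon$. No bounded $\beta$ works. Your proposed fix, ``exhibit a fractional residual solution of comparable value and invoke the integrality gap,'' cannot help: the only item left has value $\epsilon$, so the fractional residual is also $\epsilon$. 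The LP/duality story you sketch for Condition~(a) has a symmetric difficulty: the loss $\bv(\OPT(\bv))-\bv(\OPT(\bv\mid\bx))$ is an \emph{integer} quantity that need not behave like a linear function of consumed capacity.

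The missing idea is that $\feasx$ in Definition~\ref{def:ab} need not be the canonical residual set. The paper takes $\feasx=\feas$ when $\sum_i s_i x_i<C/2$ and $\feasx=\emptyset$ otherwise; this is exchange compatible precisely because every item has size at most $C/2$. With that choice, both conditions become one-liners: in the first case Condition~(a) is vacuous (the right-hand side is nonpositive) and Condition~(b) follows from $\sum_i s_i x'_i\le C$; in the second case Condition~(b) is vacuous and Condition~(a) uses only $\sum_i s_i x_i\ge C/2$. This yields $(2,1)$-balanced prices with $p_i(x_i)=s_i\cdot\bv(\ALG(\bv))/C$ (no factor $1/2$), and hence a $3$-approximation via Theorem~\ref{thm:extension}, with $\ALG$ the knapsack FPTAS for the polytime claim.

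For the unrestricted case, the paper does not attempt to fuse the small- and large-item prices into a single schedule. It simply estimates (by sampling) the expected welfare of the small-item mechanism and of the single-item mechanism applied to large agents, and posts whichever is better; since the two contributions to $\OPT$ are covered with factors $3$ and $2$ respectively, the worse of the two mechanisms still achieves at least a $1/5$ fraction. Your ``combining into one size-dependent schedule'' would require a fresh balancedness proof and, as written, does not obviously avoid the same $\feasx$ pitfall.
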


\begin{theorem}[Sparse PIPs]
\label{thm:sparse-pip}
For $d$-sparse Packing Integer Programs (PIPs) with constraint matrix $\mathbf{A} \in \mathbb{R}^{m \times n}_{\ge 0}$ where $a_{j,i} \leq 1/2$ for all $i,j$ and unit capacities, a factor $(8d+\epsilon)$-approximate posted-price mechanism, with static prices, can be computed in time $\textsc{poly}(n,m,1/\epsilon)$. 
\end{theorem}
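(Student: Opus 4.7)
The plan is to invoke the Main Theorem by exhibiting, for each realized valuation profile $\bv$, a pricing rule $\bp^{\bv}$ that is (weakly) $(\alpha,\beta)$-balanced with $\alpha\beta = O(d)$. The framework then yields a posted-price mechanism whose expected welfare is within an $O(d)$ factor of $\E[\bv(\OPT(\bv))]$, and the computational remark following the Main Theorem converts this into a polynomial-time construction at the cost of an additive $\epsilon$ loss.

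For each $\bv$ I would solve the LP relaxation $\max v^{\top}x$ subject to $Ax\le \mathbf{1}$ and $0\le x\le 1$, together with its dual, in $\operatorname{poly}(n,m)$ time, obtaining an optimal primal-dual pair $(\bx^{LP},\by^{*},\bz^{*})$. I would then define item prices
\[
p_i^{\bv} \;=\; 2 \sum_{j} a_{j,i}\,y_{j}^{*},
\]
i.e.\ twice the LP dual ``congestion price'' of item $i$. The factor $2$ is calibrated to $a_{j,i}\le 1/2$: any constraint $j$ allows capacity at most $1$ in total, while any single additional item consumes at most $1/2$, so on any constraint that is already ``blocked'' by $\bx$ the price paid by $\bx$ on that constraint dominates $y_{j}^{*}$.

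Verifying condition~(b) is the easy step. For any $\bx'$ with $\bx\cup\bx'\in\F$ one has $\sum_{i}a_{j,i}x'_{i}\le 1$ in every constraint, so
\[
\sum_{i}p_i^{\bv}(x'_i)\;\le\;2\sum_{j}y_j^{*}\;\le\;2\,V_{LP}(\bv)
\]
by LP duality, and the $O(d)$ integrality gap for $d$-sparse PIPs with $a_{j,i}\le 1/2$ upgrades this to $O(d)\cdot\OPT(\bv)$, giving weak $\beta=O(d)$. For condition~(a) the key observation is that whenever an item $i^{*}$ of the integer optimum is blocked by $\bx$, some constraint $j(i^{*})$ must satisfy $\sum_{i}a_{j(i^{*}),i}\,x_i>1-a_{j(i^{*}),i^{*}}\ge 1/2$, so the prices paid by $\bx$ on that constraint alone are at least $2 y_{j(i^{*})}^{*}\cdot \tfrac{1}{2} = y_{j(i^{*})}^{*}$. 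Charging each blocked optimum item to a blocking constraint, and invoking dual feasibility $v_{i^{*}}\le\sum_{j}a_{j,i^{*}}\,y_j^{*}+z_{i^{*}}^{*}$ together with $d$-sparsity, will then bound the total welfare of blocked OPT items by a constant multiple of $\sum_{i}p_i^{\bv}(x_i)$, yielding $\alpha=O(1)$ and hence $\sum_{i}p_i^{\bv}(x_i)\ge\OPT(\bv)-\OPT(\bv\mid\bx)$ up to constants.

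The main obstacle will be the amortization in condition~(a). Dual feasibility bounds the value of a blocked item by a sum of $y_{j}^{*}$ over \emph{all} constraints that contain it, whereas the prices paid by $\bx$ only involve constraints that $\bx$ actually loads; closing this asymmetry is where both the $d$-sparsity and the $a_{j,i}\le 1/2$ condition have to be used simultaneously, and tuning the resulting constants is what produces the $8d$ factor in the theorem. Once balancedness is established, the weakly-balanced version of the Main Theorem delivers the $\Omega(1/d)$-approximation prophet inequality, and the robustness remark after the Main Theorem supplies the polynomial-time, $\epsilon$-perturbed static-price mechanism as claimed.
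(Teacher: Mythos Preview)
Your dual-based pricing scheme has a genuine gap in condition~(a) that cannot be closed by the argument you sketch.  The problem is exactly the $z_i^*$ term you flag: when the box constraints $x_i\le 1$ are tight in the LP, the value of those items is carried by $z_i^*$ rather than by the congestion duals $y_j^*$, and your prices see none of it.

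Here is a concrete failure.  Take one constraint, three agents, $a_{1,i}=1/2$ for all $i$, values $v_1=M$, $v_2=v_3=1$.  The LP optimum has $x_1^{LP}=x_2^{LP}=1$, $x_3^{LP}=0$, value $M+1$.  The \emph{unique} dual optimum is $y_1^*=2$, $z_1^*=M-1$, $z_2^*=z_3^*=0$, so your prices are $p_i = 2\cdot\tfrac12\cdot 2 = 2$ for every $i$.  Now let $\bx=\{2,3\}$.  Then no residual allocation can include agent~$1$ (since $3\cdot\tfrac12>1$), so $\bv(\OPT(\bv))-\bv(\OPT(\bv\mid\bx)) = M+1$, while $\sum_i p_i^{\bv}(x_i)=4$.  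Condition~(a) would force $\alpha\ge (M+1)/4$, which is unbounded in $M$ even though $d=1$.  No choice of dual solution helps, and $d$-sparsity gives you nothing here.

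The paper sidesteps this entirely by using \emph{primal}-based prices: for each constraint $j$ it sets $\rho_j=\sum_{i:\,a_{j,i}>0} v_i\,x_i^*$, i.e.\ the total value of the reference allocation restricted to agents touching $j$, and then $p_i(x_i)=x_i\sum_j a_{j,i}\rho_j$.  In the example above $\rho_1=M+1$ and each $p_i=(M+1)/2$, so the revenue from $\bx=\{2,3\}$ is $M+1$ as needed.  The paper shows these prices are weakly $(2,0,d)$-balanced with respect to any allocation rule $\ALG$ (in particular the LP optimum), using an $\feas_\bx$ that declares a constraint ``closed'' once it is more than half full; the factor $2$ in $\alpha$ comes precisely from this half-capacity threshold, and the $d$ in $\beta_2$ from each agent touching at most $d$ constraints.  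Plugging into Theorem~\ref{thm:refined-extension} gives the $1/(8d)$ bound.
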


To the best of our knowledge, Theorems~\ref{thm:knapsack} and~\ref{thm:sparse-pip} are the first prophet inequalities for these settings.  We note that~\cite{FeldmanSZ15} derived a prophet inequality for closely-related fractional knapsack constraints, with approximation factor $\approx 11.657$.  We obtain an improved prophet inequality for this fractional setting: a corollary of Theorem~\ref{thm:mphk} (with $k=1$) is that one can obtain a $2$-approximation for a fractional knapsack constraint using a static per-unit price, even when knapsack weights are private and arbitrarily correlated with buyer values.  See Section~\ref{section:applications} for more details.


Finally, in Appendix \ref{app:matroid} we generalize the matroid prophet inequalities of \citet{KleinbergW12} to settings where players make choices regarding multiple elements of a matroid, and have submodular preferences over subsets of elements.

\begin{theorem}[Multi-Dimensional Matroids]
\label{thm:submodular-matroid}
For matroid feasibility constraints and submodular valuations, there is a $(4+\epsilon)$-approximate posted-price mechanism, with dynamic prices, that can be computed in $\textsc{poly}(n,1/\epsilon)$ value queries.
\end{theorem}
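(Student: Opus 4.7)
The plan is to apply the Main Theorem with an approximation algorithm $\ALG$ replacing $\OPT$, as permitted by the computational remark in Section~1.2. I would take $\ALG$ to be the classical Fisher--Nemhauser--Wolsey greedy for submodular welfare subject to a matroid constraint; this is a $2$-approximation computable with polynomially many value queries. For any history $\by$ occupying footprint $Y=\bigcup_j y_j$, let $\ALG(\bv\mid\by)$ denote the same greedy run on the residual problem obtained by contracting the matroid by $Y$ and replacing each remaining valuation by its submodular residual $v_j^{y_j}(S)=v_j(y_j\cup S)-v_j(y_j)$. The goal is to construct $(1,1)$-balanced dynamic prices with respect to $\ALG$; the Main Theorem then yields expected welfare at least $\tfrac12\,\E[\bv(\ALG(\bv))]\ge\tfrac14\,\E[\bv(\OPT(\bv))]$, i.e.\ a factor $4$ modulo the $\epsilon$ from sampling.

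For the pricing rule I would generalize the Kleinberg--Weinberg marginal-contribution construction from single-element additive settings to multi-dimensional submodular valuations. For player $i$ arriving after history $\bx_{[i-1]}$, price any candidate bundle $S$ at
\[
p_i^\bv(S \mid \bx_{[i-1]}) \;=\; \bv\bigl(\ALG(\bv\mid\bx_{[i-1]})\bigr)\;-\;\bv\bigl(\ALG(\bv\mid\bx_{[i-1]},(i,S))\bigr),
\]
the drop in the algorithm's residual welfare caused by committing $S$ to player $i$. Condition~(a) of $(1,1)$-balancedness is immediate by telescoping: $\sum_i p_i^\bv(x_i\mid\bx_{[i-1]})=\bv(\ALG(\bv))-\bv(\ALG(\bv\mid\bx))$. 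Condition~(b) requires, for every feasible $\bx'$ disjoint from $\bx$ with $\bx\cup\bx'$ feasible, the bound $\sum_i p_i^\bv(x'_i\mid\bx_{[i-1]})\le\bv(\ALG(\bv\mid\bx))$.

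Establishing condition~(b) is the main obstacle and the technical heart of the argument: the single-element exchange of Kleinberg--Weinberg must be extended so as to interleave matroid basis-exchange steps with submodular-marginal comparisons at the level of bundles. The natural route, for each $i$ in sequence, is to pair the elements of $x'_i$ with elements drawn from $\ALG$'s residual allocation via an exchange in the matroid contracted by the already-taken footprint, and then to upper-bound each per-step drop in algorithm welfare by a sum of submodular marginal contributions of the matched elements lying inside $\ALG(\bv\mid\bx)$; summing these bounds over $i$ telescopes to the desired inequality. Once balancedness holds, the Main Theorem prescribes the posted prices $p_i(\cdot)=\tfrac12\,\E_{\tilde{\bv}\sim\D}[p_i^{\tilde{\bv}}(\cdot)]$. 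The expectations are estimated by Monte Carlo averaging, and by the robustness remark in Section~1.2, $\mathrm{poly}(n,1/\epsilon)$ samples suffice to introduce only an $O(\epsilon)$ additive perturbation in the pricing rule, which translates into an additive $\epsilon$ in the approximation ratio and yields the claimed $(4+\epsilon)$-approximation.
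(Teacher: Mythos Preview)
Your plan has a genuine gap at condition~(b). The sketch you give---pair elements of $x'_i$ with elements of $\ALG(\bv\mid\bx)$ via matroid exchange, then bound per-step drops by submodular marginals---does not go through as stated. The paper's proof of condition~(b) for matroids rests on two ingredients: first, monotonicity of the prices, which follows from submodularity of the function $U\mapsto \bv(\OPT(\bv\mid U))$ on a matroid when $\bv$ is \emph{additive}; second, the generalized Rota exchange theorem applied after replacing all the histories $\bx_{[i-1]}$ by the single terminal history $\bx$. Your prices are defined through \emph{greedy} on \emph{submodular} valuations, and for that choice there is no reason the residual-welfare function $U\mapsto\bv(\GRD(\bv\mid U))$ is submodular in $U$, so the monotonicity step (replacing $\bx_{[i-1]}$ by $\bx$) is not available. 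Without it, you are trying to bound a sum of marginal drops taken at $n$ different contraction points simultaneously, and the Rota exchange argument does not apply.

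The paper sidesteps this difficulty entirely by a reduction to the additive case. It first proves $(1,1)$-balancedness of the $\OPT$-based marginal prices for \emph{additive} valuations (where monotonicity and Rota exchange both work cleanly), then invokes the composition theorem ``closure under maximum'' to lift this to XOS and hence submodular valuations with respect to any \emph{consistent} allocation rule. Greedy is shown to be consistent, so one runs greedy once on the submodular profile $\bv$, extracts the additive supporting valuation $\tilde{\bv}$ of the allocation $\GRD(\bv)$, and then uses the additive-case prices $p_i^{\tilde{\bv}}(x_i\mid\by)=\tilde{\bv}(\OPT(\tilde{\bv}\mid\by))-\tilde{\bv}(\OPT(\tilde{\bv}\mid\by\cup x_i))$. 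These are $(1,1)$-balanced with respect to $\GRD$, and since $\GRD$ is a $2$-approximation the Main Theorem gives the factor~$4$. The point is that all the hard matroid combinatorics is done once, for additive valuations, and the submodular case is handled by a black-box composition step rather than by re-proving condition~(b) from scratch.
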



%
%
%


\subsection{From Price of Anarchy to Prophet Inequalities}

In the proof sketch in Section~\ref{sec:intro.example}, we derived a lower bound on buyer utility by considering a deviation to a certain purchasing decision.  This deviation argument, which appears in the proof of our main result, is also useful for establishing Price of Anarchy bounds~\cite{Roughgarden15,SyrgkanisT13}.
%
%
%
There is a subtle but important difference, however.  In smoothness proofs one considers deviations against a fixed strategy profile, while the prophet inequality problem is inherently temporal and agents deviate at different points in time.
As it turns out, many smoothness proofs have a built-in charging scheme (which we refer to as outcome smoothness) that, under the assumption that critical payments are monotonically increasing, implies prophet inequalities with the same (asymptotic) approximation guarantee.  We provide examples showing that both outcome smoothness and monotonicity are necessary for this result to hold.  We also provide two ``black-box reductions'' for binary single-parameter settings, where PoA guarantees of $O(\gamma)$ established by (normal) smoothness imply $O(\gamma^2)$-approximate prophet inequalities.  See Section~\ref{sec:smoothness}.

\begin{theorem}[informal]
For general multi-parameter problems, if the first-price (i.e., pay-your-bid) mechanism based on declared welfare maximization has a Price of Anarchy of $O(\gamma)$ provable via outcome smoothness, and critical payments are monotonically increasing, then posting a scaled version of the critical payments yields a $O(\gamma)$-approximate price-based prophet inequality.
\end{theorem}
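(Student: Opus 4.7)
My plan is to reduce the claim to the Main Theorem by exhibiting a pricing rule built from critical payments and verifying its balancedness. First I would unpack what the hypothesis ``PoA $O(\gamma)$ via outcome smoothness'' actually supplies for the first-price (pay-your-bid) declared-welfare-maximization mechanism: for every bid profile $\bb$ and every target outcome $\bx^*$, it exhibits a concrete per-agent deviation---bidding (just above) the critical payment $\theta_i(x_i^*, \bb_{-i})$ needed to secure $x_i^*$---together with an inequality lower-bounding the sum of deviators' utilities by $\Omega(1/\gamma)\,\bv(\bx^*)$ minus a term proportional to the realized welfare and to the critical payments actually charged on $\bx(\bb)$. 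Using these same critical payments I would define the full-information pricing rule
\[
p_i^{\bv}(x_i) \;=\; \theta_i(x_i,\bv_{-i}),
\]
so that $p_i^{\bv}(x_i)$ is exactly the payment that agent $i$ would owe in the first-price mechanism to be assigned bundle $x_i$ given truthful declarations of the others (with the obvious dynamic analog when the price is history-dependent). The prices posted in the stochastic setting are then the scaled expected values of $p_i^{\bv}$ prescribed by the Main Theorem.

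\paragraph{Verifying $(\alpha,\beta)$-balancedness.} Next I would check that $\bp^{\bv}$ is $(O(\gamma), O(1))$-balanced, so that the Main Theorem gives the claimed $O(\gamma)$-approximate prophet inequality. For the lower bound (a), I would instantiate the outcome-smoothness inequality at a bid profile whose declared-welfare-maximizing outcome is $\bx \cup \OPT(\bv\mid\bx)$ with target $\OPT(\bv)$; rearranging and using that in a first-price mechanism the payment equals the bid, this turns into
\[
\sum_i p_i^{\bv}(x_i) \;\geq\; \frac{1}{O(\gamma)}\bigl(\bv(\OPT(\bv)) - \bv(\OPT(\bv\mid\bx))\bigr),
\]
which is exactly condition (a) with $\alpha = O(\gamma)$. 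For the upper bound (b), the monotonicity hypothesis is what drives the argument: it ensures that the critical payment needed to acquire $x_i'$ after $\bx$ is already allocated is no larger than the critical payment for the same bundle inside the declared-welfare-maximizing outcome on the residual instance $\feas$ restricted to allocations disjoint from $\bx$. Summing over $i$ whose $x_i'$ appears in some feasible completion, the total is bounded by the revenue of the first-price mechanism on that residual instance, which is in turn at most its welfare $\bv(\OPT(\bv\mid\bx))$; this yields $\beta = O(1)$.

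\paragraph{Main obstacle.} The delicate step is condition (b). Without monotonicity, critical payments for residually-feasible bundles could exceed the corresponding payments in the unconstrained optimum, defeating any uniform charging back to residual welfare; monotonicity is precisely what permits charging $\sum_i p_i^{\bv}(x_i')$ to $\bv(\OPT(\bv\mid\bx))$, which is why this hypothesis appears in the theorem. A secondary subtlety is that ordinary smoothness is insufficient---as flagged in \secref{sec:smoothness}, there exist examples where ordinary smoothness does not upgrade to a prophet inequality---so I must genuinely exploit the \emph{charging} structure inside the smoothness proof, which is precisely what the term ``outcome smoothness'' captures: the deviation prescribed for agent $i$ depends only on the target bundle and on the portion of the outcome visible to $i$, and its utility is charged back to a specific piece of the realized welfare rather than to the grand total. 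Once both pieces are in place with $\alpha = O(\gamma)$ and $\beta = O(1)$, the Main Theorem immediately yields a price-based prophet inequality of approximation factor $1+\alpha\beta = O(\gamma)$.
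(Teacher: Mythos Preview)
Your overall plan---define prices from critical payments, verify $(\alpha,\beta)$-balancedness, invoke the Main Theorem---matches the paper's architecture. However, you have the roles of the two hypotheses essentially swapped between conditions (a) and (b), and each of your sketched arguments has a real gap.

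\textbf{Condition (a).} You propose to instantiate outcome smoothness at a bid profile realizing $\bx\cup\OPT(\bv\mid\bx)$ with target $\OPT(\bv)$ and ``rearrange.'' But outcome smoothness gives an \emph{upper} bound on the critical payments for the \emph{target} $\bx'$; it says nothing that lower-bounds $\sum_i p_i^{\bv}(x_i)$ for the actual allocation $\bx$. The paper does not use smoothness here at all. Instead it exploits directly that the mechanism is first-price and declared-welfare-maximizing: the critical payment to force $x_i$ (given the history $\bx_{[i-1]}$) is at least the drop in optimal residual welfare $\bv(\OPT(\bv\mid\bx_{[i-1]}))-\bv(\OPT(\bv\mid\bx_{[i]}))$, and telescoping yields (a). To make this work the paper introduces an extended instance with three copies of each agent (so that the ``buying'' copy competes against a copy holding $\bv$), a technical device your sketch does not anticipate.

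\textbf{Condition (b).} You use monotonicity correctly to pass from $\bx_{[i-1]}$ to $\bx$, but then assert that the sum of critical payments in the residual instance ``is bounded by the revenue of the first-price mechanism on that residual instance.'' This is the gap: critical payments are thresholds, not revenue, and the inequality $\sum_i \theta_i(x_i',\bv_{-i}\mid\bx)\le O(1)\cdot\bv(\OPT(\bv\mid\bx))$ is precisely the permeability-type statement that \emph{requires} outcome smoothness. The paper proves it by applying $(\lambda,0,\mu)$-outcome smoothness with valuation profile $\epsilon\bv'$ and bid profile $\bv$, then sending $\epsilon\to 0$ so that only the payment terms survive. A further subtlety you miss is that outcome smoothness only certifies this for the specific $\bx'(\bv)$ it produces, not for arbitrary $\bx'$; the paper handles this by \emph{defining} $\feasx$ to be the range of $\bx'(\cdot\mid\bx)$, which is legitimate because balancedness only requires (b) to hold for $\bx'\in\feasx$.

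In short: use first-price plus declared-welfare-maximization (telescoping) for (a), and outcome smoothness (the $\epsilon\to 0$ trick) plus monotonicity for (b)---not the other way around.
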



Using these results we can, for example, rederive the classic prophet inequality \cite{KrengelS77,KrengelS78} from the smoothness of the first-price single-item auction~\cite{SyrgkanisT13} or the matroid prophet inequality \cite{KleinbergW12} from the smoothness of the pay-your-bid, declared welfare maximizing mechanism for selecting a maximum-weight basis~\cite{LucierS15}.

\subsection{Further Related Work}


Prophet inequalities and their applicability as posted-price mechanisms were (re-)discovered in theoretical computer science by \cite{HajiaghayiKS07}. Subsequently, threshold-based prophet inequalities and posted-price mechanisms were developed for matroids and matroid intersection  \cite{ChawlaHMS10,KleinbergW12,AzarKW14}, polymatroids \cite{DuettingK15b}, unit-demand bidders \cite{ChawlaHMS10,Alaei14}, and combinatorial auctions \cite{Alaei14,FeldmanGL15}.

Not all prophet inequalities in the literature are based on explicit thresholds.
Examples include prophet inequalities for the generalized assignment problem \cite{AlaeiHL12,AlaeiHL13}, matroids and matroid intersection \cite{FeldmanSZ16}, and for general binary feasibility constraints \cite{Rubinstein16}.
%
On the other hand, many posted-price mechanisms from the literature are constructed either without explicit reference to prophet inequalities or via different techniques.  Chawla et al.~\cite{ChawlaHK07} developed approximately-optimal (revenue-wise) posted-price mechanisms for unit-demand buyers.  Posted-price mechanisms have subsequently been developed for a variety of other auction settings~\cite{DobzinskiNS06,BalcanBM08,ChakrabortyHK09,AlaeiHNPY15}.
Dynamic posted prices that give optimal welfare for unit-demand buyers were established in \cite{CAEFF16}.
Recently, dynamic posted prices for various online settings have been considered, including $k$-server on the line and metrical task systems \cite{CohenEFJ15}, and makespan minimization for scheduling problems \cite{FeldmanFR17}.


Most recently, and in parallel to this work combinatorial prophet inequalities were developed in \cite{RubinsteinS17} and \cite{CaiZ17}. The former, amongst others, proves prophet inequalities for subadditive CAs, but considers a different allocation model
and is therefore imcomparable. The latter, in turn, focuses on revenue and not welfare as we do here. Finally, \cite{AbolhassaniEEHKL17} re-considers the classic prophet inequality setting, but in a large market setting and assuming random or best order.

The notion of smooth games was introduced by Roughgarden \cite{Roughgarden15} as a tool for bounding the price of anarchy, which measures the inefficiency that can be incurred in equilibrium. This notion has been extended to mechanisms by Syrgkanis and Tardos \cite{SyrgkanisT13}.
Notions of outcome smoothness were considered in \cite{DaskalakisS16,LykourisST16}.

\section{General Model and Notation}
\label{sec:model}

\paragraph{Problem Formulation} 
There is a set $N$ of $n$ agents.  
For each agent $i \in N$ there is an outcome space $X_i$
containing 
a null outcome $\emptyset$.  We write $X = X_1 \times \dotsc \times X_n$ for the joint outcome space.
Given outcome profile $\bx \in X$ and a subset of agents $S \subseteq N$, we will write $\bx_S$ for the outcome in which each $i \in S$ receives $x_i$ and each $i \not\in S$ receives $\emptyset$.  
Specifically, we will write $\bx_{[i-1]}$ for allocation $\bx$ with the outcomes of agents $i, \dotsc, n$ set to $\emptyset$.
There is a subset $\feas \subseteq X$ of \emph{feasible} outcomes.  We will assume that $\feas$ is downward-closed, so that if $\bx \in \feas$ then also $\bx_S \in \feas$ for all $S \subseteq N$. 


A \emph{valuation function} for agent $i$ is a function $\vali \colon X_i \to \reals_{\geq 0}$.  We will assume values are bounded, and without loss of generality scaled to lie in $[0,1]$.  
Each agent $i$'s valuation $\vali$ is drawn independently from a publicly known distribution $\disti$.  
We write $\dists = \disti[1] \times \cdots \times \disti[n]$ for the product distribution over the set $V = V_1 \times \cdots \times V_n$ of valuation profiles.  
We often suppress dependence on $\dists$ from our notation when clear from context.
Agent utilities are quasilinear: if agent $i$ receives outcome $\alloci$ and makes a payment $\pi_i$, his utility is $\utili = \vali(\alloci) - \pi_i$.


The \emph{welfare} of outcome $\allocs$ is $\vals(\allocs) = \sum_i \vali(\alloci)$.  An outcome rule $\alg$ maps each valuation profile to a feasible outcome.  $\alg_i(\vals)$ denotes the outcome of agent $i$ on input $\vals$.  We will write $\OPT(\vals, \feas) = \argmax_{\bx \in \feas}\{\bv(\bx)\}$ for the welfare-maximizing outcome rule for $\feas$, omitting the dependence on $\feas$ when it is clear from context.  

\paragraph{Pricing Rules and Mechanisms}
A \emph{pricing rule} is a profile of functions $\bp = (p_1, \dotsc, p_n)$ that assign prices to outcomes. We write $p_i(x_i \mid \by)$ for the (non-negative) price assigned to outcome $x_i \in X_i$, offered to agent $i$, given partial allocation $\by \in \feas$.  Define $p_i(\alloci) = p_i(\alloci \mid \mathbf{\emptyset})$ for convenience.  We require that $p_i(\alloci \mid \by) = \infty$ for any $\alloci$ such that $(\alloci, \by_{-i}) \not\in \feas$.  
A pricing rule is said to be \emph{monotone non-decreasing} if $p_i(\alloci \mid \by) \geq p_i(\alloci \mid \by_S)$ for all $i$, $\alloci \in X_i$, $\by \in X$, $(\alloci, \by_{-i}) \in \feas$, and $S \subseteq N$.
In general, we allow prices to be dynamic and discriminatory.
We refer to prices that do not depend on the partial allocation (apart from feasibility) as \emph{static} and to prices that do not depend on the identity of the agent as \emph{anonymous}.


A \emph{posted-price mechanism} is defined by a pricing rule $\bp$ and an ordering over the agents.  This pricing rule can, in general, depend on the distributions $\dists$.  The agents are approached sequentially.  Each agent $i$ is presented the menu of prices determined by $p_i$, given all previous allocations, and selects a utility-maximizing outcome.  
A posted-price mechanism is \emph{order-oblivious} if it does not require the agents to be processed in a specific order.  In all of the applications we consider, the mechanisms we construct are order-oblivious.
It is well-known that every posted-price mechanism is \emph{truthful}~\cite{ChawlaHMS10}.  

\paragraph{Online Allocations and Prophet Inequalities}
We consider stochastic allocation algorithms that can depend on the value distributions $\dists$.  That is, an allocation algorithm $\mathcal{A}$ maps a value profile and distribution to a feasible outcome.  We say $\mathcal{A}$ is an online allocation algorithm if $\mathcal{A}_i(\vals, \dists)$ does not depend on the entries of $\vals$ that occur after $i$ in some ordering over the indices.
Extending the notion of competitive ratio from the worst-case analysis of online algorithms, we'll say the \emph{(stochastic) competitive ratio} of online allocation algorithm $\mathcal{A}$ is 
\[ \max_{\dists} \frac{\E_{\vals \sim \dists}[\vals(\OPT(\vals))]}{\E_{\vals \sim \dists}[\vals(\mathcal{A}(\vals, \dists))]}. \]
We somtimes refer to a competitive ratio using its inverse, when convenient.  A \emph{prophet inequality} for constraint $\feas$ is an upper bound on the stochastic competitive ratio of an online allocation algorithm for $\feas$.
We note that a posted-price mechanism describes a particular form of an online allocation algorithm.

\section{A Framework for Prophet Inequalities}
\label{sec:extension}

In this section we state and prove our main result, which reduces prophet inequalities to finding balanced prices for the simpler full information setting.
%
%
We say that a set of outcome profiles $\mathcal{H} \subseteq X$ is \emph{exchange compatible} with $\allocs \in \feas$ if for all $\by \in \mathcal{H}$ and all $i \in N$, $(y_i, \bx_{-i}) \in \feas$.
We call a family of sets $(\feasx)_{\bx \in X}$ \emph{exchange compatible} if $\feasx$ is exchange compatible with $\bx$ for all $\bx \in X$.


\begin{definition}\label{def:ab}
Let $\alpha > 0$, $\beta \ge 0$. Given a set of feasible outcomes $\F$ and a valuation profile $\bv$, a pricing rule $\bp$ is \emph{\ab} with respect to an allocation rule $\ALG$, an exchange-compatible family of sets $(\feasx)_{\bx \in X}$, and an indexing of the players $i = 1, \dots, n$ if for all $\bx \in \feas$
\begin{enumerate}[(a)]
\item \label{conda} $\sum_{i \in N} p_i(x_i \mid \bx_{[i-1]}) \geq \frac{1}{\alpha} \cdot \big( \bv(\ALG(\bv)) -  \bv(\OPT(\bv, \feasx)\big)$, and
\item \label{condb} for all $\bx' \in \feasx$: $\sum_{i \in N} p_i(x'_i \mid \bx_{[i-1]}) \leq \beta \cdot \bv(\OPT(\bv, \feasx))$.
\end{enumerate}
\end{definition}

The definition provides flexibility in the precise choice of $\feasx$.  As $\feasx$ becomes larger (more permissive), both inequalities become easier to satisfy since $\bv(\OPT(\bv, \feasx))$ increases.  On the other hand, a larger set $\feasx$ means that the second condition must be satisfied for more outcomes $\bx' \in \feasx$.
We say that a collection of pricing rules $(p^\bv)_{\bv \in V}$ is $(\alpha,\beta)$-balanced if there exists a choice of $(\feas_\bx)_{\bx\in X}$ such that, for each $\bv$, the pricing rule $p^\bv$ is balanced with respect to $(\feas_\bx)_{\bx \in X}$.

The definition of \AB captures sufficient conditions for a posted-price mechanism to guarantee high welfare when agents have a known valuation profile $\bv$.
Our interest in \ab pricing rules comes from the fact that this result extends to Bayesian settings.

\begin{theorem}\label{thm:main}
Suppose that the collection of pricing rules $(\bp^\bv)_{\bv \in V}$ for feasible outcomes $\feas$ and valuation profiles $\bv \in V$ is \ab with respect to allocation rule $\ALG$ and indexing of the players $i =1, \dots, n$.
Then for $\delta = \tfrac{\alpha}{1+ \alpha \beta}$ the posted-price mechanism with pricing rule $\delta \bp$, where $p_i( x_i \mid \by ) = \E_{\btv}[p^{\btv}_i(x_i \mid \by)]$, generates welfare at least $\tfrac{1}{1+\alpha\beta} \cdot \E_{\bv}[\bv(\ALG(\bv))]$ when approaching players in the order they are indexed.
\label{thm:extension}
\end{theorem}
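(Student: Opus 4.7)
The plan is to extend the full-information ``welfare $=$ revenue $+$ buyer utility'' decomposition sketched in Section~\ref{sec:intro.example} to the Bayesian setting via a ghost-sample deviation argument, in the spirit of the Price of Anarchy and algorithmic stability proofs cited after the statement. First I would fix a realization $\bv$ of the valuations and let $\bx = \bx(\bv)$ denote the outcome produced by the posted-price mechanism with prices $\delta \bp$. Because the mechanism processes agents in the order $1, \dots, n$, the prefix $\bx_{[i-1]}$ is a deterministic function of $v_1, \dots, v_{i-1}$; in particular $v_i$ is independent of $\bx_{[i-1]}$.

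Next, I would introduce an independent ghost sample $\btv \sim \dists$ and, for each agent $i$, define a hypothetical deviation $y_i$ as the $i$-th coordinate of $\OPT(\btv, \feas_{\bx_{[i-1]}})$ (or possibly $\OPT(\btv, \feas_\bx)$). Exchange-compatibility of the family $(\feasx)_{\bx \in X}$ and downward-closure of $\feas$ guarantee that $y_i$ is a feasible option for agent $i$ given $\bx_{[i-1]}$, so the utility-maximizing choice yields
\[
v_i(x_i) - \delta p_i(x_i \mid \bx_{[i-1]}) \;\geq\; v_i(y_i) - \delta p_i(y_i \mid \bx_{[i-1]}),
\]
which upon summation and rearrangement gives
\[
\bv(\bx) \;\geq\; \delta \sum_i p_i(x_i \mid \bx_{[i-1]}) \;+\; \sum_i v_i(y_i) \;-\; \delta \sum_i p_i(y_i \mid \bx_{[i-1]}).
\]

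From here I would take expectations over $\bv$ and $\btv$ (invoking Fubini via independence) and bound the three terms. For the revenue term, I would expand $p_i(\cdot) = \E_{\btv}[p_i^{\btv}(\cdot)]$, apply condition~(a) of balancedness with valuation $\btv$ at outcome $\bx$, and use symmetry $\btv \stackrel{d}{=} \bv$ to identify $\E[\btv(\ALG(\btv))]$ with $\E[\bv(\ALG(\bv))]$, obtaining a lower bound $\tfrac{\delta}{\alpha}\E[\bv(\ALG(\bv))] - \tfrac{\delta}{\alpha}\E[\btv(\OPT(\btv, \feas_\bx))]$. For the deviation-price term I would apply condition~(b), using an independent copy of the ghost for the pricing expectation, to get an upper bound of the form $\delta\beta\,\E[\btv(\OPT(\btv, \feas_\bx))]$. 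For the deviation-value term, since $y_i$ is a function of $(\btv, \bv_{[i-1]})$ and hence independent of $v_i$, I would couple via the swap $v_i \leftrightarrow \tilde{v}_i$---which preserves the joint law because the coordinates of $\bv$ and $\btv$ are i.i.d.---to lower-bound $\E[\sum_i v_i(y_i)]$ by the same $\E[\btv(\OPT(\btv, \feas_\bx))]$. Combining the three bounds and plugging in $\delta = \alpha/(1+\alpha\beta)$, for which the identity $\delta/\alpha + \delta\beta = 1$ holds, makes the OPT-residual terms cancel and leaves the desired $\E[\bv(\bx)] \geq \tfrac{1}{1+\alpha\beta} \E[\bv(\ALG(\bv))]$.

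The hard part will be reconciling the online aspect of the deviation with the global form of conditions~(a) and~(b). To keep $y_i$ independent of $v_i$ it is natural to draw it from $\feas_{\bx_{[i-1]}}$, but then the vector $(y_i)_i$ does not obviously sit inside a single $\feas_{\bx'}$ as required to apply condition~(b) in one shot; conversely, defining $y_i$ from $\feas_\bx$ directly invokes (b) cleanly but reintroduces a dependence on $v_i$ that spoils the independence needed for the swap. The delicate coupling that bridges these two requirements---so that $\E[\sum_i v_i(y_i)]$ is bounded below by the same expected ghost OPT that appears in the bounds from (a) and (b)---is where the sequential structure of the mechanism genuinely complicates the argument relative to its full-information analog.
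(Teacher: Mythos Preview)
Your high-level plan---decompose welfare into revenue plus buyer utility, bound revenue via condition~(a), bound utility via a ghost-sample deviation and condition~(b), then cancel the residual terms using $\delta=\alpha/(1+\alpha\beta)$---is exactly the paper's argument, and your diagnosis of the tension between ``define $y_i$ from $\feas_{\bx_{[i-1]}}$ for independence'' versus ``define $y_i$ from $\feas_\bx$ to invoke~(b)'' is precisely the crux. What you are missing is the specific deviation that resolves it.

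Neither of your two candidate deviations works; the paper instead takes, for each $i$,
\[
y_i \;=\; \OPT_i\bigl((v_i,\btv_{-i}),\,\feas_{\bx(\tilde v_i,\bv_{-i})}\bigr),
\]
i.e.\ the $i$-th coordinate of the optimum for the \emph{hybrid} valuation $(v_i,\btv_{-i})$ over the residual feasible set determined by running the mechanism on the \emph{complementary hybrid} $(\tilde v_i,\bv_{-i})$. This $y_i$ is a legal option for player~$i$ facing prefix $\bx_{[i-1]}(\bv)$: exchange-compatibility gives $(y_i,\bx(\tilde v_i,\bv_{-i})_{-i})\in\feas$, and since $\bx_{[i-1]}(\tilde v_i,\bv_{-i})=\bx_{[i-1]}(\bv)$, downward closure yields feasibility against $\bx_{[i-1]}(\bv)$. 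The payoff of this choice is that under the single-coordinate swap $v_i\leftrightarrow\tilde v_i$ (which preserves the joint law, and leaves $\bx_{[i-1]}(\bv)$ unchanged) the deviation becomes \emph{exactly}
\[
\OPT_i\bigl(\btv,\,\feas_{\bx(\bv)}\bigr) \;=\; x'_i(\bv,\btv),
\]
so that after the swap all players' deviations lie in the \emph{single} set $\feas_{\bx(\bv)}$ and condition~(b) applies in one shot, while the value term becomes $\tilde v_i(x'_i(\bv,\btv))$, summing to $\btv(\OPT(\btv,\feas_{\bx(\bv)}))$ as needed. In short, the trick is not to choose between your two options but to interleave $\bv$ and $\btv$ in opposite ways inside the valuation argument and the feasibility argument of the deviation; the swap then untangles them simultaneously. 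Once you have this, the rest of your sketch goes through verbatim.
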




\begin{proof}
We denote the exchange-compatible family of sets with respect to which the collection of pricing rules $(\bp^\bv)_{\bv\in V}$ is balanced by $(\feas_\bx)_{\bx \in X}$.
We will first use Property (\ref{condb}) to show a lower bound on the utilities of the players, and Property (\ref{conda}) to show a lower bound on the revenue of the posted-price mechanism.  We will then add these together to obtain a bound on the social welfare.

We will write $\bx(\bv)$ for the allocation returned by the posted-price mechanism on input valuation profile $\bv$ and
$\bx'(\bv, \bv') = \OPT( \bv' , \feas_{\bx(\bv)})$ for the welfare-maximizing allocation with respect to valuation profile $\bv'$ under feasibility constraint  $\feas_{\bx(\bv)}$.

\medskip
\noindent{\bf Utility bound:}
We obtain a lower bound on the expected utility of a player as follows. We sample valuations $\bv' \sim \dist$. Player $i$ now considers buying
$\OPT_i( (v_i,\bv'_{-i}) , \feas_{\bx(v'_i,\bv_{-i})})$ at price $\delta{\cdot}p_i( \OPT_i( (v_i,\bv'_{-i}) , \feas_{\bx(v'_i,\bv_{-i})}) \mid \bx_{[i-1]}(\bv))$.
Taking expectations and exploiting that $\bx_{[i-1]}(\bv)$ does not depend on $v_i$ we obtain
\begin{align*}
	\E_{\bv}[u_i(\bv)]
	&\geq \E_{\bv,\bv'} \left[ v_i\left(\OPT_i( (v_i,\bv'_{-i}) , \feas_{\bx(v'_i,\bv_{-i})})\right) - \delta \cdot p_i\Big( \OPT_i( (v_i,\bv'_{-i}) , \feas_{\bx(v'_i,\bv_{-i})}) \growingmid \bx_{[i-1]}(\bv)\Big) \right]\\
	&= \E_{\bv,\bv'} \left[ v'_i\Big(x_i'(\bv, \bv')\Big) - \delta \cdot p_i\Big(x_i'(\bv, \bv') \growingmid \bx_{[i-1]}(\bv)\Big)\right].
\end{align*}
Summing the previous inequality over all agents we get
\begin{align}
	\E_{\bv}\left[\sum_{i \in N} u_i(\bv)\right]
	&\geq \E_{\bv,\bv'} \left[  \sum_{i \in N} v'_i\Big(x_i'(\bv, \bv')\Big) \right] - \E_{\bv,\bv'} \left[ \sum_{i \in N} \delta \cdot p_i\bigg(x_i'(\bv, \bv') ~\bigg|~ \bx_{[i-1]}(\bv)\bigg) \right] \notag\\
	&= \E_{\bv,\bv'} \left[ \bv'\Big(\OPT(\bv' , \feas_{\bx(\bv)})\Big) \right] - \E_{\bv,\bv'} \left[ \sum_{i \in N} \delta \cdot p_i\bigg(x_i'(\bv, \bv') ~\bigg|~ \bx_{[i-1]}(\bv)\bigg) \right]. \label{eq:ext1}
\end{align}
We can upper bound the last term in the previous inequality by using Property (\ref{condb}). This gives
\[
	\sum_{i \in N} \delta \cdot p_i\Big(x_i'(\bv, \bv') ~\Big|~ \bx_{[i-1]}(\bv)\Big)
	\leq \delta\beta \cdot \E_{\btv} \left[ \btv\Big(\OPT(\btv , \feas_{\bx(\bv)} )\Big) \right]
\]
pointwise for any $\bv$ and $\bv'$, and therefore also
\begin{align}
	\E_{\bv,\bv'} \left[ \sum_{i \in N} \delta \cdot p_i\Big(x_i'(\bv, \bv') ~\Big|~ \bx_{[i-1]}(\bv)\Big) \right]
	&\leq \delta\beta \cdot \E_{\bv,\btv} \left[ \btv\Big(\OPT(\btv , \feas_{\bx(\bv)})\Big) \right]. \label{eq:ext2}
\end{align}
Replacing $\bv'$ with $\btv$ in Inequality (\ref{eq:ext1}) and combining it with Inequality (\ref{eq:ext2}) we obtain
\begin{align}
	\E_{\bv}\left[\sum_{i \in N} u_i(\bv)\right]
	\geq (1 - \delta\beta) \cdot \E_{\bv,\btv} \left[ \btv\Big(\OPT(\btv , \feas_{\bx(\bv)})\Big) \right]. \label{eq:ext3}
\end{align}

\medskip
\noindent
{\bf Revenue bound:}
The second step is a lower bound on the revenue achieved by the posted-price mechanism. Applying Property (\ref{conda}) we obtain
\begin{align*}
	\sum_{i \in N} \delta\cdot p_i(x_i(\bv) \mid \bx_{[i-1]}(\bv))
	&= \delta \cdot \sum_{i \in N} \E_{\btv}\left[p_i^{\btv}(x_i(\bv) \mid \bx_{[i-1]}(\bv))\right]\\
	&\ge \frac{\delta}{\alpha} \cdot \E_{\btv} \Big[ \btv(\ALG(\btv)) - \btv(\OPT(\btv , \feas_{\bx(\bv)})) \Big].
\end{align*}
Taking expectation over $\bv$ this shows
\begin{align}
	\E_{\bv} \left[\sum_{i \in N} \delta \cdot p_i(x_i(\bv) \mid \bx_{[i-1]}(\bv)) \right]
	&\geq \frac{\delta}{\alpha} \cdot \E_{\btv} \left[ \btv(\ALG(\btv)) \right] - \frac{\delta}{\alpha} \cdot \E_{\btv,\bv} \left[ \btv(\OPT(\btv, \feas_{\bx(\bv)})) \right]. \label{eq:ext4}
\end{align}

\medskip
\noindent
{\bf Combination:}
It remains to show how the two bounds can be combined so that they imply the approximation guarantee. By quasi-linearity we can rewrite the expected social welfare that is achieved by the posted-price mechanism as the sum of the expected utilities plus the expected revenue.
Using $\delta = \alpha / (1 + \alpha \beta)$ and Inequalities (\ref{eq:ext3}) and (\ref{eq:ext4}), this gives
\begin{align*}
	\E_{\bv}\left[\sum_{i \in N} v_i(x_i(\bv))\right]
	&\geq \E_{\bv}\left[\sum_{i \in N} u_i(\bv)\right] + \E_{\bv}\left[\sum_{i \in N} \delta \cdot p_i\left(x_i(\bv) \mid \bx_{[i-1]}(\bv)\right)\right]  \\
	&\geq (1 - \delta \beta) \E_{\bv,\btv} \left[ \btv(\OPT(\btv , \feas_{\bx(\bv)})) \right] + \frac{\delta}{\alpha} \E_{\btv} \left[ \btv(\ALG(\btv)) \right] - \frac{\delta}{\alpha} \E_{\btv,\bv} \left[ \btv(\OPT(\btv,  \feas_{\bx(\bv)})) \right] \\
	& = \frac{1}{1 + \alpha \beta} \E_{\btv} \left[ \btv(\ALG(\btv) \right]. \qedhere
\end{align*}
\end{proof}


In what follows, we provide an alternative definition of balancedness, in which Property (\ref{condb}) is refined.
This definition will be useful for some applications, as exemplified in Section \ref{section:applications}.
\begin{definition}\label{def:ab2}
Let $\alpha > 0, \beta_1, \beta_2 \geq 0$. Given a set of feasible outcomes $\feas$ and a valuation profile $\bv$, a pricing rule $\bp$ is \emph{weakly $(\alpha, \beta_1, \beta_2)$-balanced} with respect to allocation rule $\ALG$, an exchange-compatible family of sets $(\feas_\bx)_{\bx \in X}$, and an indexing of the players $i = 1, \dots, n$
if, for all $\bx \in \feas$,
\begin{enumerate}[(a)]
\item $\sum_{i \in N} p_i(x_i \mid \bx_{[i-1]}) \geq \frac{1}{\alpha} \cdot \big( \bv(\ALG(\bv)) -  \bv(\OPT(\bv, \feasx)\big)$, and
\item for all $\bx' \in \feasx$: $\sum_{i \in N} p_i(x'_i \mid \bx_{[i-1]}) \leq \beta_1 \cdot \bv(\OPT(\bv, \feasx)) + \beta_2 \cdot \bv(\ALG(\bv))$.
\end{enumerate}
\end{definition}

The following theorem specifies the refined bound on the welfare that is obtained by \abb pricing rules. Its proof appears in
Appendix \ref{app:extension}.
\begin{theorem}\label{thm:refined-extension}
Suppose that the collection of pricing rules $(\bp^\bv)_{\bv \in V}$ for feasible outcomes $\feas$ and valuation profiles $\bv \in V$ is \abb with respect to allocation $\ALG$ and indexing of the players $i = 1, \dots, n$ with $\beta_1 + \beta_2 \geq \frac{1}{\alpha}$.
Then for $\delta = \tfrac{1}{\beta_1 + \max\{ 2 \beta_2, 1/\alpha\} }$ the posted-price mechanism with pricing rule $\delta \bp$, where $p_i( x_i \mid \by ) = \E_{\btv}[p^{\btv}_i(x_i \mid \by)]$, generates welfare at least $\tfrac{1}{\alpha(2 \beta_1 + 4 \beta_2)} \cdot \E_{\bv}[\bv(\ALG(\bv))]$ when approaching players in the order they are indexed.
\end{theorem}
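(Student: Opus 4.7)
The plan is to mirror the proof of Theorem~\ref{thm:main}, tracking the extra $\beta_2 \cdot \bv(\ALG(\bv))$ term that the relaxed condition (b) of Definition~\ref{def:ab2} contributes. Repeating the utility-bound derivation verbatim but applying the weaker upper bound on prices yields $\E[\sum_i u_i(\bv)] \geq (1-\delta\beta_1)\,\E_{\bv,\btv}[\btv(\OPT(\btv,\feas_{\bx(\bv)}))] - \delta\beta_2\,\E_{\btv}[\btv(\ALG(\btv))]$. The revenue bound uses only condition (a), which is unchanged between Definitions~\ref{def:ab} and~\ref{def:ab2}, so it is identical to the main theorem. Writing $R := \E_{\bv}[\sum_i p_i(x_i(\bv)\mid\bx_{[i-1]}(\bv))]$ for the unscaled expected revenue, quasilinearity $W = U + \delta R$ then gives the mixed bound $W \geq (1-\delta\beta_1-\delta/\alpha)\,\E_{\bv,\btv}[\btv(\OPT(\btv,\feas_{\bx(\bv)}))] + (\delta/\alpha-\delta\beta_2)\,\E_{\btv}[\btv(\ALG(\btv))]$.

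Unlike in Theorem~\ref{thm:main}, the prescribed $\delta$ does not zero out the coefficient of $\E[\btv(\OPT)]$; it merely keeps it non-negative. To eliminate this term I would re-invoke condition (a) in the opposite direction: averaging its price lower bound over $\btv$ and $\bv$ yields $\E_{\bv,\btv}[\btv(\OPT(\btv,\feas_{\bx(\bv)}))] \geq \E_{\btv}[\btv(\ALG(\btv))] - \alpha R$. Substituting into the welfare bound (legal because the coefficient of $\E[\btv(\OPT)]$ is non-negative under the prescribed $\delta$) and collecting terms produces $W \geq (1-\delta(\beta_1+\beta_2))\,\E[\btv(\ALG(\btv))] - \alpha(1-\delta\beta_1-\delta/\alpha)\,R$. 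Since utilities are non-negative we have $\delta R \leq W$, i.e.\ $R \leq W/\delta$, and the coefficient of $R$ above is non-positive, so replacing $R$ by $W/\delta$ and rearranging (noting that $\delta + \alpha(1-\delta\beta_1-\delta/\alpha) = \alpha(1-\delta\beta_1)$) yields the self-referential bound
\[
W \;\geq\; \frac{\delta\,(1-\delta(\beta_1+\beta_2))}{\alpha\,(1-\delta\beta_1)}\,\E_{\btv}[\btv(\ALG(\btv))].
\]

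It remains to substitute $\delta = 1/(\beta_1 + \max\{2\beta_2,\,1/\alpha\})$ and verify the target factor $1/(\alpha(2\beta_1+4\beta_2))$. In the regime $2\beta_2 > 1/\alpha$, where $\delta = 1/(\beta_1+2\beta_2)$, direct simplification of the expression above gives exactly $1/(\alpha(2\beta_1+4\beta_2))$. In the complementary regime $\delta = 1/(\beta_1 + 1/\alpha)$, the bound reduces to $(1/\alpha - \beta_2)/(\beta_1+1/\alpha)$, and a short algebraic check using $\beta_1+\beta_2 \geq 1/\alpha$ and $\beta_2 \leq 1/(2\alpha)$ shows this exceeds the target (the inequality one must verify, in the shorthand $s = \alpha\beta_1$, $t = \alpha\beta_2$, reduces to $s(1-2t) + 4t(1-t) \geq 1$ for $s+t \geq 1$, $t \leq 1/2$, which is straightforward).

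The main obstacle I expect is the self-referential step: the welfare appears on both sides via $R \leq W/\delta$. To make the manipulation valid, one must simultaneously certify (i) non-negativity of the coefficient of $\E[\btv(\OPT)]$, which licenses the condition-(a) substitution, and (ii) non-positivity of the coefficient of $R$ in the resulting expression, which licenses replacing $R$ by $W/\delta$. Both properties hinge on $\delta \leq 1/(\beta_1 + 1/\alpha)$, which is precisely what the prescribed $\delta$ guarantees (with strict inequality in the second regime). Cleanly separating the two regimes of $\delta$ in the final algebraic verification, and handling the boundary $2\beta_2 = 1/\alpha$, is the most delicate part of the argument.
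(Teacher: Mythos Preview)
Your argument is correct and lands on exactly the same factors as the paper, but the combination step is organized differently. The paper never forms a self-referential inequality. Instead, it exploits non-negativity of utilities by writing $W \geq \rho\,\E[\sum_i u_i] + \delta R$ for any $0 \leq \rho \leq 1$, and then chooses $\rho$ and $\delta$ so that the coefficient of $\E_{\bv,\btv}[\btv(\OPT(\btv,\feas_{\bx(\bv)}))]$ vanishes. This is what forces the case split: in the regime $\beta_2 \geq 1/(2\alpha)$ the required $\rho = 1/(2\alpha\beta_2)$ lies in $[0,1]$ and the bound comes out directly as $1/(\alpha(2\beta_1+4\beta_2))$; in the regime $\beta_2 < 1/(2\alpha)$ one takes $\rho = 1$ (i.e.\ adds utility and revenue straight), whereupon the choice $\delta = 1/(\beta_1 + 1/\alpha)$ already zeroes the $\OPT$-coefficient and gives $(1-\alpha\beta_2)/(1+\alpha\beta_1)$, which is then compared to the target using $\beta_1+\beta_2 \geq 1/\alpha$.

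Your route instead keeps the $\OPT$-coefficient positive, re-applies condition~(a) to trade $\OPT$ for $A - \alpha R$, and then uses $R \leq W/\delta$ (another face of $U \geq 0$) to close the loop. Two remarks. First, the ``re-invoke condition~(a)'' step is harmless but redundant: the inequality $O \geq A - \alpha R$ is exactly the revenue bound you already used, so applying it twice to the combined expression is algebraically equivalent to applying it once to the utility bound and then adding $\delta R$. Second, in Case~2 your coefficient of $O$ is zero, so the substitution and the $R \leq W/\delta$ step are vacuous and your bound collapses to the paper's $(1-\alpha\beta_2)/(1+\alpha\beta_1)$; the approaches literally coincide there. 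The genuine difference is in Case~1, where the paper's $\rho$-scaling is a one-line manipulation while yours threads through a self-referential inequality---both are valid uses of $U \geq 0$, and both hit the target exactly.
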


\section{New and Improved Prophet Inequalities}
\label{section:applications}

We have already argued that our framework unifies and simplifies many of the existing prophet inequality proofs.
In this section we show how it can be used to derive new and improved bounds on the approximation ratio that can be obtained via price-based prophet inequalities.
We highlight two results: the new poly-time $O(d)$-approximation for combinatorial auctions with bundle size at most $d$, and the new poly-time constant-approximation for knapsack problems.
Additional results include combinatorial auctions with MPH-$k$ valuations (see Appendix \ref{app:cas}), $d$-sparse packing integer programs (see Appendix \ref{app:sparse-pip}) and multi-dimensional matroids (where the result follows from the Rota exchange theorem \cite[Lemma 2.7]{LeeSV10} and our composition results, see Appendix \ref{app:matroid}).

\paragraph{Combinatorial Auctions with Bounded Bundle Size}

An existential $O(d)$-approximate price-based prophet inequality is presented in Section \ref{sec:intro.example}.
Combined with the $O(d)$-approximation greedy algorithm for this setting, it gives a poly-time $O(d^2)$-approximate price-based prophet inequality (as shown in \cite{FeldmanGL15}). In what follows we use the flexibility of our framework to work directly with a relaxation of the allocation problem, thereby improving the approximation of the prophet inequality from $O(d^2)$ to $O(d)$.
This is a special case of Theorem~\ref{thm:mphk}, which is proved in Appendix~\ref{app:cas}.

\begin{theorem}
\label{thm:dCA}
For combinatorial auctions where every agent can get at most $d$ items, there exist weakly $(1, 1, d-1)$-balanced item prices that are static, anonymous, and order oblivious.  Moreover, a $(4d-2-\epsilon)$-approximate posted-price mechanism can be computed in $\textsc{poly}(n,m,1/\epsilon)$ demand queries, where $\epsilon$ is an additive error due to sampling.
\end{theorem}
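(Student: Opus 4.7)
The plan has two parts: first establish the existence of weakly $(1,1,d-1)$-balanced static anonymous item prices by constructing them directly from the integer welfare optimum $\bx^*(\bv)$, then replace $\bx^*$ by the optimum $\by^*(\bv)$ of the configuration LP (solvable in polynomial time with demand queries) for the computational statement, with sampling contributing the additive $\epsilon$ error.

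For the existential claim, fix $\bv$ and let $\bx^* = \OPT(\bv)$. For each item $j$ allocated to agent $i^*(j)$ in $\bx^*$, set $p_j^\bv = v_{i^*(j)}(x^*_{i^*(j)})$; set $p_j^\bv = 0$ otherwise. These prices are manifestly static, anonymous, and order-oblivious. I would verify balancedness with the exchange-compatible family $\feasx = \{\bx' \in \feas : (\bigcup_i x'_i) \cap (\bigcup_i x_i) = \emptyset\}$ (compatibility holds since any such $\bx'$ is globally feasible and item-disjoint from $\bx$, so swapping any $x'_i$ into $\bx_{-i}$ stays feasible). Write $X = \bigcup_i x_i$ and partition the agents into $A = \{i : x^*_i \cap X = \emptyset\}$ and its complement $B$.

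For condition~(a), every $i \in B$ has $\geq 1$ item of $x^*_i$ in $X$, so $\sum_{j \in X} p_j^\bv \geq \sum_{i \in B} v_i(x^*_i)$; meanwhile the allocation giving each $i \in A$ its bundle $x^*_i$ (and nothing to $B$) lies in $\feasx$, witnessing $\bv(\OPT(\bv, \feasx)) \geq \sum_{i \in A} v_i(x^*_i)$, hence $\bv(\OPT) - \bv(\OPT(\bv, \feasx)) \leq \sum_{i \in B} v_i(x^*_i) \leq \sum_{j \in X} p_j^\bv$, giving $\alpha = 1$. For condition~(b), for any $\bx' \in \feasx$ each $i \in A$ contributes at most $|x^*_i| \leq d$ items of $x^*_i$ to $\sum_{j \in \bigcup_i x'_i} p_j^\bv$, and each $i \in B$ at most $|x^*_i| - 1 \leq d-1$ (since $\bx'$ avoids $X$, which already contains an item of $x^*_i$). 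Using the identity
\begin{equation*}
d \sum_{i \in A} v_i(x^*_i) + (d-1) \sum_{i \in B} v_i(x^*_i) = \sum_{i \in A} v_i(x^*_i) + (d-1) \bv(\OPT),
\end{equation*}
and bounding $\sum_{i \in A} v_i(x^*_i) \leq \bv(\OPT(\bv, \feasx))$, we obtain $\beta_1 = 1$, $\beta_2 = d - 1$. Theorem~\ref{thm:refined-extension} then delivers welfare at least $\tfrac{1}{4d-2} \E[\bv(\OPT(\bv))]$.

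For the computational statement, I would solve the configuration LP (variables $y_{i,S}$ for $|S| \leq d$) in polynomial time via the ellipsoid method with demand queries as separation oracles for the dual, set $p_j^\bv = \sum_{(i,S) : j \in S} y^*_{i,S} v_i(S)$, and apply the framework with $\ALG$ playing the role of the LP (value $W^*(\bv) \geq \bv(\OPT(\bv))$) and $\feasx$ extended to fractional allocations whose item usage is disjoint from $X$. The same $A/B$-split argument, now ranging over LP-configurations $(i,S)$ weighted by $y^*_{i,S}$, yields weak $(1,1,d-1)$-balancedness with $W^*$ as benchmark; condition~(a) relies on the observation that restricting $\by^*$ to configurations with $S \cap X = \emptyset$ remains LP-feasible on $M \setminus X$. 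Finally, the expected prices $\E_{\btv}[p_j^\btv]$ are estimated to additive error $\epsilon$ from $\textsc{poly}(n, m, 1/\epsilon)$ samples by Hoeffding (values and hence prices lie in $[0,1]$), yielding the stated approximation. The main subtlety I anticipate is the choice of $\feasx$: insisting on the stronger per-agent constraint $x_i \cup x'_i \in \feas$ would break the ``restrict-to-$A$'' witness for condition~(a), so requiring only global item-disjointness (or its fractional analogue) is essential.
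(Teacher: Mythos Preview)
Your existential argument is correct and in fact cleaner than the paper's: you work directly with the integer optimum $\bx^*$ and a crisp $A/B$ partition of agents, whereas the paper establishes the existential statement as a byproduct of the fractional analysis. Your $\feasx$, the price rule, and both balancedness checks go through exactly as you describe.

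The computational sketch has the right ingredients (configuration LP, the same item prices $p_j=\sum_{i,S\ni j}y^*_{i,S}v_i(S)$, sampling), but there is a type mismatch you do not resolve. Once $\ALG$ is the LP value $W^*$, condition~(a) needs $\bv(\OPT(\bv,\feasx))$ to dominate the \emph{fractional} residual $\sum_{(i,S):S\cap X=\emptyset}y^*_{i,S}v_i(S)$; this forces $\feasx$ to contain fractional allocations. But the mechanism operates on the integral $\feas$, so taking $\feasx$ fractional steps outside the hypotheses of Theorem~\ref{thm:refined-extension} (exchange compatibility requires $(y_i,\bx_{-i})\in\feas$). Your $A/B$ split, as written, only handles integral $\bx$ and presupposes this extension without justifying it.

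The paper avoids this by committing fully to the fractional relaxation: $\feas$, $\feasx$, and $\bx$ are \emph{all} fractional, balancedness is proved for every fractional $\bx$, and only at the end does one observe that under static item prices each agent has an integral best response, so restricting to integral purchases preserves the guarantee. The cost is that condition~(a) for fractional $\bx$ no longer follows from your $A/B$ split: one must construct $\by\in\feasx$ by scaling $y_{i,S}=(1-x^{j_S})x^*_{i,S}$ where $j_S$ is the most-used item in $S$, and similarly condition~(b) needs the bound $\sum_i p_i(x'_i)\le\sum_j(1-x^j)p_j$ before splitting. Your route can be salvaged by invoking the extended framework in which $\feasx$ may contain distributions over outcomes (the paper uses exactly this extension for PIPs in Appendix~\ref{app:sparse-pip}), but you should say so explicitly and check that the utility-deviation step in the proof of Theorem~\ref{thm:refined-extension} survives when agent $i$'s target $\OPT_i(\cdot,\feasx)$ is a randomized bundle.
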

\begin{proof}
Consider the canonical fractional relaxation of the combinatorial auction problem: a feasible allocation is described by values $x_{i,S} \in [0,1]$ for all $i \in N$ and $S \subseteq M$ such that $\sum_S x_{i,S} \leq 1$ for all $i$ and $\sum_{i, S \ni j} x_{i,S} \leq 1$ for all $j \in M$.  Take $\feas$ to be all such fractional allocations, and $\feasx$ to be the set of fractional allocations $\by$ such that $\sum_{i, S \ni j} (x_{i,S} + y_{i,S}) \leq 1$ for all $j \in M$, and $\sum_S y_{i,S} \leq 1$ for all $i$.  As usual, we think of $\feasx$ as the set of allocations that remain feasible given a partial allocation $\bx$.

Consider the following pricing rule for fractional allocations.  Given valuation profile $\bv$, let $\bx^*$ be the welfare-maximizing fractional allocation. Then for each item $j$, set $p_j = \sum_i \sum_{S \ni j} x^*_{i,S} v_i(S)$.  We claim that these prices are $(1, 1, d-1)$-balanced with respect to the 
optimal allocation rule.

For Property~(\ref{conda}), fix some $\bx \in \feas$.  Write $x^j = \sum_{i,S \ni j} x_{i,S}$.  Consider the following allocation $\by \in \feasx$: for each $S$, choose $j_S \in \argmax_{j \in S}\{ x^{j} \}$.  Set $y_{i,S} = (1 - x^{j_S}) \cdot x^*_{i,S}$.  We think of $\by$ as the optimal allocation $\bx^*$ adjusted downward to lie in $\feasx$.  We then have that
\begin{align*}
\bv(\bx^*) - \bv(\by) = \sum_i \sum_S x^{j_S}\cdot x^*_{i,S} \cdot v_i(S) = \sum_j x^{j} \sum_{i, S \colon j = j_S} x^*_{i,S} \cdot v_i(S)
\leq \sum_j x^{j} \cdot p_j = \sum_i p_i(x_i).
\end{align*}
Property~(\ref{conda}) follows since $\bv(\by) \leq \bv(\OPT(\bv, \feasx))$.
For Property~(\ref{condb}), fix $\bx \in \feas$ and $\bx' \in \feasx$.  Then
\begin{align*}
\sum_i p_i(x_i') & \leq \sum_j (1-x^j) p_j = \sum_j (1-x^j) \sum_{i, S \ni j} x_{i,S}^* \cdot v_i(S) = \sum_{i,S} x_{i,S}^* \cdot v_i(S) \sum_{j \in S} (1 - x^j) \\
& = \sum_{i, S} (|S|-1)x_{i,S}^* \cdot v_i(S) + \sum_{i,S} x_{i,S}^* \cdot v_i(S) \cdot \left(1 - \sum_{j \in S} x^j \right).
\end{align*}
The first expression on the RHS is at most $(d-1)\bv(\OPT(\bv))$, since $|S| \leq d$ whenever $x_{i,S}^* > 0$.  For the second expression, note that it is at most the welfare of the allocation $\by$ defined by $y_{i,S} = x^*_{i,S} \cdot (1 - \sum_{j \in S} x^j)^+$.  Moreover, this allocation $\by$ is in $\feasx$.  So the second expression is at most $\bv(\OPT(\bv, \feasx))$, giving Property~(\ref{condb}).

Theorem~\ref{thm:refined-extension}
therefore yields prices that guarantee a $(4d-2)$ approximation for the fractional allocation problem, and an $\epsilon$-approximation to those prices can be computed via sampling.  To complete the proof, note that for every agent $i$, if all previous agents have selected integral outcomes, then agent $i$ also has a utility-maximizing outcome that is integral.  This is because any fractional allocation can be interpreted as a convex combination of integral allocations.  These same prices therefore guarantee a $(4d-2-\epsilon)$ approximation even if the mechanism prohibits non-integral allocations from being purchased.
\end{proof}

The more general Theorem \ref{thm:mphk} also improves the best-known polytime prophet inequality for XOS valuations from $2e/(e-1)$ to $2$ (which is tight \cite{FeldmanGL15}) and for MPH-$k$ valuations it improves the best known polytime bounds from $O(k^2)$ to $O(k)$.

\paragraph{Knapsack}
%
%
In the knapsack allocation problem, there is a single divisible unit of resource and each agent has a private value $v_i \geq 0$ for receiving at least $s_i \geq 0$ units.  Assume for now that $s_i \leq 1/2$ for all $i$.  We allow both $v_i$ and $s_i$ to be private information, drawn from a joint distribution.  In our notation: $X_i = [0, \frac{1}{2}]$, $\feas = \{ \bx \mid \sum_i x_i \leq 1\}$, and $v_i(x_i) = v_i$ if $x_i \geq s_i$ and $v_i(x_i) = 0$ otherwise.  Based on an arbitrary allocation algorithm $\ALG$, we design anonymous, static prices by setting $p_i(x_i \mid \by) = x_i \cdot \bv(\ALG(\bv))$ if $x_i$ can feasibly be added and $\infty$ otherwise.  The following restates the second half of Theorem~\ref{thm:knapsack}.

\begin{theorem}
For the knapsack allocation problem in which no single agent can request more than half of the total capacity, the prices above are $(1, 2)$-balanced with respect to $\ALG$.  This implies a $(3+\epsilon)$-approximate polytime posted-price mechanism with a single static anonymous per-unit price.
\end{theorem}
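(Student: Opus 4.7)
The plan is to instantiate Theorem~\ref{thm:main} with $\alpha = 1$ and $\beta = 2$, which yields $\delta = \tfrac{1}{3}$ and hence expected welfare at least $\tfrac{1}{3}\E[\bv(\ALG(\bv))]$. The additive $\epsilon$ will come from estimating the single scalar $\E_{\btv}[\bv(\ALG(\btv))]$ (which parameterizes the posted per-unit price) to additive accuracy $\epsilon$ via $\mathrm{poly}(n, 1/\epsilon)$ samples, exactly as in the robustness remark following Theorem~\ref{thm:main}. Everything thus reduces to verifying that the prices $p_i^{\bv}(x_i \mid \by) = x_i \cdot \bv(\ALG(\bv))$ are $(1,2)$-balanced with respect to $\ALG$.

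I would take the exchange-compatible family $\feas_\bx = \{\bx' \in \feas : x_i' \leq 1 - c + x_i \text{ for all } i\}$, where $c := \sum_i x_i$; this is the largest exchange-compatible subset of $\feas$, with $y_i + \sum_{j \neq i} x_j = y_i + c - x_i \leq 1$ by construction. Property~(\ref{condb}) will then follow immediately from Property~(\ref{conda}): for any $\bx' \in \feas_\bx \subseteq \feas$, $\sum_i p_i(x_i' \mid \bx_{[i-1]}) = \bv(\ALG(\bv)) \sum_i x_i' \leq \bv(\ALG(\bv))$, which is at most $2\,\bv(\OPT(\bv, \feas_\bx))$ as soon as Property~(\ref{conda}) delivers $\bv(\OPT(\bv,\feas_\bx)) \geq \tfrac{1}{2}\bv(\ALG(\bv))$.

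The substantive step is Property~(\ref{conda}), namely $\bv(\OPT(\bv, \feas_\bx)) \geq (1-c)\bv(\ALG(\bv))$. I split on $c$. For $c \leq \tfrac{1}{2}$, the assumption $s_i \leq \tfrac{1}{2}$ yields $s_i \leq 1 - c \leq 1 - c + x_i$ for every agent $i$, so the $\ALG$-allocation itself lies in $\feas_\bx$ and $\bv(\OPT(\bv,\feas_\bx)) \geq \bv(\ALG(\bv))$, which in particular gives the $\tfrac{1}{2}\bv(\ALG(\bv))$ lower bound needed for (b). For $c > \tfrac{1}{2}$, $\bx$ itself lies in $\feas_\bx$, so $\bv(\OPT(\bv,\feas_\bx)) \geq \bv(\bx)$; invoking that Theorem~\ref{thm:main} applies Property~(\ref{conda}) at the mechanism's realized allocation, where every agent $i$ with $x_i = s_i$ voluntarily paid per-unit $\bv(\ALG(\bv))$ and thus satisfies $v_i \geq s_i\,\bv(\ALG(\bv))$, summing yields $\bv(\bx) \geq \bv(\ALG(\bv)) \sum_i x_i = c\,\bv(\ALG(\bv)) \geq (1-c)\,\bv(\ALG(\bv))$.

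The main obstacle is precisely the large-$c$ case of Property~(\ref{conda}): the per-unit pricing does not dominate $\bv(\ALG) - \bv(\OPT)$ via a residual-capacity argument, and the bound instead rests on reading welfare off of $\bx$ through agents' revealed willingness to pay. The assumption $s_i \leq \tfrac{1}{2}$ is exactly what makes the case split at $c = \tfrac{1}{2}$ clean: below the threshold, enough residual capacity remains to host a full copy of $\ALG$'s allocation inside $\feas_\bx$; above it, the revenue that $\bx$ has already collected covers the remaining gap.
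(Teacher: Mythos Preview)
Your Case~2 argument has a genuine gap that breaks the proof. Balancedness (Definition~\ref{def:ab}) is a \emph{full-information} property of the pricing rule $\bp^{\bv}$ that must hold for \emph{every} $\bx \in \feas$, not just the allocation realized by the posted-price mechanism. You try to sidestep this by observing that the proof of Theorem~\ref{thm:main} only invokes Property~(\ref{conda}) at $\bx(\bv)$, but you then miss the crucial decoupling in that proof: the revenue bound applies balancedness of $\bp^{\btv}$---for a ghost sample $\btv$ independent of $\bv$---at the allocation $\bx(\bv)$. From the perspective of $\btv$, the allocation $\bx(\bv)$ is arbitrary, so one genuinely needs the property for all $\bx$. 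Your revealed-preference step (``every agent $i$ with $x_i = s_i$ voluntarily paid per-unit $\bv(\ALG(\bv))$'') is thus doubly mistaken: the posted per-unit price is $\delta\,\E_{\btv}[\btv(\ALG(\btv))]$, not $\bv(\ALG(\bv))$; and even if you could infer something about $\bv$ from a purchase, it would say nothing about $\btv$.

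Concretely, your choice of $\feas_\bx$ fails Property~(\ref{condb}): take $\bv$ where a single agent $k$ has $s_k = \tfrac{1}{2}$ and $v_k = 1$ (all others value~$0$), and take $\bx$ allocating $\tfrac{1}{2}$ each to two \emph{other} agents (so $c = 1$). Then your $\feas_\bx$ forces $x'_k = 0$, hence $\bv(\OPT(\bv,\feas_\bx)) = 0$, while you need it to be at least $\tfrac{1}{2}\bv(\ALG(\bv)) = \tfrac{1}{2}$. The paper avoids this entirely by choosing $\feas_\bx = \emptyset$ whenever $\sum_i x_i \geq \tfrac{1}{2}$, which makes Property~(\ref{condb}) vacuous in that regime; Property~(\ref{conda}) then follows directly from $\sum_i p_i(x_i) = c\cdot\bv(\ALG(\bv)) \geq \tfrac{1}{2}\bv(\ALG(\bv))$. (Note this actually yields $\alpha = 2$, $\beta = 1$; the $(1,2)$ in the statement appears to be a transposition, harmless since $1+\alpha\beta = 3$ either way.)
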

\begin{proof}
The polytime claim follows from Theorem~\ref{thm:extension} with $\ALG$ set to the classic FPTAS for knapsack~\cite{IbarraK75}, so it suffices to prove balancedness.
For any $\bx \in \feas$, let
$\feasx = \feas$
if $\sum_i x_i < \frac{1}{2}$, and $\feasx = \emptyset$ otherwise.  Note that $\feasx$ is exchange compatible with $\bx$ since, for any $\bx' \in \feasx$ and any agent $k$, $x'_k + \sum_i x_i \leq 1$.
To establish balancedness with respect to $(\feasx)_{\bx}$, we consider two cases based on the value of $\sum_i x_i$.

\medskip
\noindent
\textbf{Case 1: $\sum_i x_i < \frac{1}{2}$. } Property~(\ref{conda}) is trivially fulfilled because $\bv(\ALG(\bv)) - \bv(\OPT(\bv, \feasx)) \leq \bv(\OPT(\bv)) - \bv(\OPT(\bv, \feasx)) = 0$. For Property~\ref{condb}, note that for any $\bx' \in \feasx$, we have
\[ \sum_i p_i(x_i' \mid \bx_{[i-1]}) = \sum_i x_i' \cdot \bv(\ALG(\bv)) \leq \bv(\ALG(\bv)) \leq \bv(\OPT(\bv)) = \bv(\OPT(\bv, \feasx)).\]

\noindent
\textbf{Case 2: $\sum_i x_i \geq \frac{1}{2}$. }  Property~(\ref{condb}) is vacuous since $\feasx = \emptyset$. For Property~(\ref{conda}), we have
\[ \sum_i p_i(x_i \mid \bx_{[i-1]}) = \sum_i x_i \cdot \bv(\ALG(\bv)) \geq \frac{1}{2} \bv(\ALG(\bv)) = \frac{1}{2} (\bv(\ALG(\bv)) - \bv(\OPT(\bv, \feasx))). \qedhere\]
\end{proof}

We can remove the restriction that $s_i \leq 1/2$ as follows, completing the proof of Theorem~\ref{thm:knapsack}.  Consider the contribution to the expected optimal welfare separated into welfare from agents with $s_i \leq 1/2$, and agents with $s_i > 1/2$.  The posted-price mechanism described above obtains a $3$-approximation to the former.  For the latter, a mechanism that treats the unit of resource as indivisible, and posts the best take-it-or-leave-it price for the entire unit, is a $2$-approximation.  This is because at most one agent with $s_i > 1/2$ can win in any realization.
Thus, for any distribution profile, one of these two mechanisms must be a $5$-approximation to the unrestricted knapsack problem.\footnote{The worst case is when both mechanisms achieve the same expected welfare, which occurs if $3/5$ of the expected welfare is due to agents with $s_i \leq 1/2$.  The expected welfare of each mechanism is then $\tfrac{1}{3} \cdot \tfrac{3}{5} = \tfrac{1}{5}$ of the optimum.}  One can therefore obtain a $(5+\epsilon)$-approximate price-based prophet inequality by estimating the expected welfare of each pricing scheme (via sampling) and selecting the better of the two.
In Appendix \ref{app:sparse-pip} we show how to generalize the result for the knapsack problem to $d$-sparse packing integer programs.

Finally, consider the fractional version of the knapsack problem, where agents obtain partial value for receiving a portion of their desired allocation: $v_i(x_i) = v_i \cdot \min\{x_i/s_i, 1\}$.  If we restrict allocations $x_i$ to be multiples of some $\delta > 0$, this is a special case of a submodular combinatorial auction with $\lceil 1/\delta \rceil$ identical items.  Since Theorem~\ref{thm:mphk} implies that a fixed per-item price yields a $2$-approximation for any $\delta$, we can infer by taking the limit as $\delta \to 0$ that for any $\epsilon > 0$ there is a $(2+\epsilon)$-approximate polytime posted-price mechanism for the fractional knapsack problem, with a single static anonymous per-unit price, even if each agent's size $s_i$ is private and arbitrarily correlated with their value.  As mentioned in Section~\ref{sec:intro.new}, this improves the previously best-known prophet inequality of $\approx 11.657$ due to~\cite{FeldmanSZ15}.

\newcommand{\smoothoutcome}{x'}
\newcommand{\mechallocrule}{\ensuremath f}
\newcommand{\contraction}{\mathcal{F}/\bx}

\section{From Price of Anarchy to Prophet Inequalities}
\label{sec:smoothness}
In this section we explore the connection between balanced prices and mechanism smoothness. While generally smoothness does not suffice to conclude the existence of a posted-price mechanism with comparable welfare guarantee (see Appendix \ref{app:smoothness-is-weaker}), we will show that this is the case for typical smoothness proofs and present pretty general reductions from the problem of proving prophet inequalities to mechanism smoothness.

We first recall the definition of a smooth mechanism.
A (possibly indirect) mechanism $\mathcal{M}_\pi$ for an allocation problem $\pi$ is defined by a bid space $B = B_1 \times \dots \times B_n$, an allocation rule $\mechallocrule: B \rightarrow \feas$, and a payment rule $P: B \rightarrow \mathbb{R}^n_{\geq 0}$.  We focus on first-price mechanisms, where $P_i(\bb) = b_i(f(\bb))$.  Typically, mechanisms are defined for a collection of problems $\Pi$, in which case we will simply refer to the mechanism as $\mathcal{M}$.
\begin{definition}[\citet{SyrgkanisT13}]\label{def:smoothness}
Mechanism $\mathcal{M}_\pi$ is \emph{$(\lambda,\mu)$-smooth} for $\lambda,\mu \geq 0$ if for any valuation profile $\vals \in V$ and any bid profile $\bids \in B$ there exists a bid $b'_i(\bv,b_i) \in B_i$ for each player $i \in N$ such that
\[
	\sum_{i \in N} u_i(\bidiprime,\bidsmi) \geq \lambda \cdot \vals(\OPT(\vals)) - \mu \cdot \sum_{i \in N} P_i(\bids).
\]
\end{definition}

A mechanism $\mathcal{M}$ that is $(\lambda,\mu)$-smooth has a Price of Anarchy (with respect to correlated and Bayes-Nash equilibria) of at most $\max\{\mu,1\}/\lambda$~\cite{SyrgkanisT13}.

The following formal notion of a residual market will be useful for our further analysis.
For any $\bx \in \feas$ we define the \emph{contraction} of $\feas$ by $\bx$, $\feas/\bx$, as follows. Let $N^+(\bx) = \{i \in N \mid x_i \neq \emptyset\}$. Then $\feas/\bx = \{ \bz = (z_j)_{j \in N\setminus N^+(\bx)} \mid (\bz, \bx_{N^+(\bx)}) \in \feas\}$.
That is, $\feas/\bx$ contains allocations to players who were allocated nothing in $\bx$, that remain feasible when combined with the allocations in $\bx$. 
We think of the contraction by $\bx$  as a \emph{subinstance} on players $N\setminus N^+(\bx)$ with feasibility constraint $\feas/\bx$, and refer to it as the subinstance induced by $\bx$. We say that a collection of problems $\Pi$ is \emph{subinstance closed} if for every $\pi \in \Pi$ with feasible allocations $\feas$ and every $\bx \in \feas$ the subinstance induced by $\bx$ is contained in $\Pi$. The contraction by $\bx$ also naturally leads to an exchange feasible set $\feasx$ by padding the allocations $\bz \in \feas/\bx$ with null outcomes. We refer to this $\feasx$ as the \emph{canonical exchange-feasible set}.


\subsection{Warm-up: Binary, Single-Parameter Problems with Monotone Prices}

We begin with a simple result that serves to illustrate the connection between balancedness and smoothness.  We will show that if a binary, single-parameter problem has the property that the welfare-maximizing mechanism is $(\lambda, \mu)$-smooth 
and its critical prices $\tau_i(\, \cdot\, \mid \by )$ are non-decreasing in $\by$,\footnote{The critical price $\tau_i( \valsmi \mid \by )$ is the infimum of values $v_i$ such that the mechanism allocates $1$ to agent $i$ on input $(v_i, \valsmi)$, in the problem subinstance induced by $\by$.} then there exists a pricing rule that is $(\alpha, \beta)$-balanced, where $\alpha\beta = O(\max\{\mu, 1\} / \lambda)$.  In particular, this implies that the welfare guarantee due to Theorem~\ref{thm:extension} is within a constant factor of the Price of Anarchy of the mechanism implied by smoothness.

\begin{theorem}\label{thm:smooth-warm-up}
Consider a subinstance-closed collection of binary, single-parameter problems such that the first-price mechanism based on the welfare maximizing allocation rule $\OPT$ is $(\lambda, \mu)$-smooth. If the critical prices $\tau_i(\,\cdot\, \mid \by)$ are non-decreasing in $\by$ then setting $p_i(1 \mid \by) = \max\{ v_i, \tau_i(\valsmi \mid \by) \}$ and $p_i(0 \mid \by) = 0$ is $(1, \frac{\mu + 1 + \lambda}{\lambda})$-balanced with respect to $\OPT$ and the canonical exchange-feasible sets $(\feas_\bx)_{\bx \in X}$.
\end{theorem}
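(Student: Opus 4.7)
The plan is to prove the two balancedness conditions separately. First I would introduce the residual-welfare function $W(\by) := \bv(\OPT(\bv, \feas/\by))$, so that the right-hand side of Property~(\ref{conda}) equals $W(\emptyset) - W(\bx)$ and the right-hand side of Property~(\ref{condb}) equals $W(\bx)$. Property~(a) will rely purely on the structure of critical prices for the welfare-maximizing rule, while Property~(b) is where the smoothness hypothesis enters.

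For Property~(\ref{conda}), I would telescope $W(\emptyset) - W(\bx) = \sum_i [W(\bx_{[i-1]}) - W(\bx_{[i]})]$. Because $0 = \emptyset$ in the binary setting, the prefix is unchanged at coordinates with $x_i = 0$, so only indices with $x_i = 1$ can contribute. For each such $i$ the plan is to show
\[
p_i(1 \mid \bx_{[i-1]}) \;=\; \max\{v_i,\; \tau_i(\valsmi \mid \bx_{[i-1]})\} \;\geq\; W(\bx_{[i-1]}) - W(\bx_{[i]})
\]
by case analysis. If $v_i \geq \tau_i(\valsmi \mid \bx_{[i-1]})$, then $i$ already lies in the welfare-max allocation of the subinstance $\pi_{\bx_{[i-1]}}$, so committing $i$ to outcome $1$ merely removes its contribution and the drop in $W$ is exactly $v_i$. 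Otherwise I would use a virtual-value argument: hypothetically raising $v_i$ to $\tau_i + \eps$ forces $i$ into the OPT under the altered values, so under those values the drop equals exactly $\tau_i + \eps$; since raising a single value can only increase any $W$-value, the true drop is at most $\tau_i + \eps$, and letting $\eps \to 0$ closes the case.

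For Property~(\ref{condb}), I would decompose
\[
\sum_i p_i(x'_i \mid \bx_{[i-1]}) \;\leq\; \sum_{i:\, x'_i = 1} v_i \;+\; \sum_{i:\, x'_i = 1} \tau_i(\valsmi \mid \bx_{[i-1]}).
\]
The first sum equals $\bv(\bx') \leq W(\bx)$ because $\bx' \in \feasx$, which accounts for the leading ``$1$'' in $\beta = 1 + (\mu+1)/\lambda$. For the second sum, the monotonicity hypothesis on critical prices lets me replace the smaller prefix $\bx_{[i-1]}$ with $\bx$, reducing the task to bounding $\sum_{i:\, x'_i = 1} \tau_i(\valsmi \mid \bx) \leq \frac{\mu+1}{\lambda} W(\bx)$ inside the single subinstance $\pi_\bx$, which lies in the collection by subinstance-closedness.

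The hard part will be this final smoothness step. The key observation is that in a first-price welfare-maximizing mechanism, bidding any amount strictly above $\tau_i$ guarantees a win with utility $v_i - \tau_i$. I would apply $(\lambda,\mu)$-smoothness on $\pi_\bx$ at the truthful bid profile (whose revenue equals $W(\bx)$) with smoothness deviations targeted at the players in $\bx'$, obtaining a lower bound involving $\sum(v_i - \tau_i)$ that rearranges to the claimed upper bound on $\sum_{i:\, x'_i = 1} \tau_i$. A subtle point is that the smoothness deviations $b'_i$ may depend on $\bv$ and on $b_i$ but not on $b_{-i}$, so the argument must be arranged so that a single fixed choice of deviations suffices against the truthful profile; the monotonicity of the critical prices is what makes it safe to collapse all the distinct prefixes $\bx_{[i-1]}$ to the single allocation $\bx$ before invoking smoothness.
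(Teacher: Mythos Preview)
Your overall structure matches the paper's proof exactly: telescope for Property~(\ref{conda}) with the case split on $v_i$ versus $\tau_i(\valsmi \mid \bx_{[i-1]})$, and for Property~(\ref{condb}) bound $p_i \leq v_i + \tau_i$, use $\sum_{i\in\bx'} v_i \leq W(\bx)$, and push $\bx_{[i-1]}$ up to $\bx$ via monotonicity of critical prices. The remaining task, as you correctly isolate, is to show $\sum_{i \in \bx'} \tau_i(\valsmi \mid \bx) \leq \frac{\mu+1}{\lambda}\, W(\bx)$ inside the subinstance $\pi_\bx$.

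The gap is in how you extract this last bound from smoothness. You propose to ``apply smoothness at the truthful bid profile with deviations targeted at the players in $\bx'$,'' expecting the left side to become $\sum_{i\in\bx'}(v_i - \tau_i)$. But smoothness is a \emph{hypothesis} here: the deviations $b'_i(\bv, b_i)$ are handed to you by the definition, not chosen by you, and nothing forces them to relate to $\bx'$ or to the critical values at all. Plugging $(\bv,\bv)$ into the smoothness inequality yields only $\sum_i u_i(b'_i,\valsmi) \geq (\lambda-\mu)\,W(\bx)$, which says nothing about $\sum\tau_i$. The paper instead invokes the \emph{permeability} bound (its Theorem~\ref{thm:permeable}): in $\pi_\bx$, keep the bid profile at $\bv$ but take the \emph{valuation} profile to be $\hat v_i = \max\{v_i,\,\tau_i(\valsmi\mid\bx)\}$. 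Then every loser under $\OPT(\bv\mid\bx)$ has $\hat v_i = \tau_i$ and hence nonpositive deviation utility (any winning bid costs at least $\tau_i$), so the left side of smoothness is at most $\bv(\OPT(\bv\mid\bx)) = W(\bx)$; meanwhile $\hat\bv(\OPT(\hat\bv)) \geq \sum_{i\in\bx'} \hat v_i \geq \sum_{i\in\bx'}\tau_i(\valsmi\mid\bx)$ since $\bx'$ is feasible in $\pi_\bx$. Rearranging gives the $(\mu+1)/\lambda$ factor. The lesson is that you control the inputs $(\vals,\bb)$ to the smoothness inequality, not the deviations $b'_i$; the trick is to choose $\vals$ so that the uncontrolled left side is forced small while $\vals(\OPT(\vals))$ captures the $\tau_i$'s.
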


\begin{proof}
Fix any $\by$ and $\bx \in \feas_\by$.  Observe that by definition of the prices, it holds that
\begin{equation}
\label{eq:price-lb}
p_i(x_i \mid \by) \geq \bv(\OPT(\bv, \feas_{(\emptyset, \by_{-i})})) - \bv(\OPT(\bv, \feas_{(x_i, \by_{-i})})).
\end{equation}
To see this, first note that both sides of the inequality are equal to $0$ if $x_i = 0$.  If $x_i = 1$ and $v_i \geq \tau_i(\valsmi \mid \by)$, then agent $i$ is allocated in $\OPT(\bv, \feas_{(\emptyset,\by_{-i})})$ and hence both sides of the inequality are equal to $v_i$.  If $x_i = 1$ and $v_i < \tau_i(\valsmi \mid \by)$, then agent $i$ is not allocated in $\OPT(\bv, \feas_{(\emptyset,\by_{-i})})$, and hence the right-hand side of the inequality is at most the externality imposed by forcing an allocation to agent $i$, which is at most $\tau_i(\valsmi \mid \by) = p_i(x_i \mid \by)$.

We are now ready to prove balancedness.  To verify Condition (a), choose $\bx \in \feas$ and note that
\[
\sum_{i=1}^{n} p_i(x_i \mid \bx_{[i-1]}) \geq \sum_{i=1}^{n} \left( \bv(\OPT(\bv, \feas_{\bx_{[i-1]}})) - \bv(\OPT(\bv, \feas_{\bx_{[i]}})) \right) = \bv(\OPT(\bv)) - \bv(\OPT(\bv, \feas_\bx))
\]
as required, where the inequality follows from Equation (\ref{eq:price-lb}), and the equality follows by a telescoping sum.  
For Condition (b), we get
\begin{equation}
\label{eq:critical}
\sum_{i \in \bx'} \tau_i(\valsmi \mid \bx_{[i-1]}) \leq \sum_{i \in \bx'} \tau_i(\valsmi \mid \bx) \leq \frac{\mu + 1}{\lambda} \bv(\OPT(\bv, \feas_\bx)),
\end{equation}
where the first inequality follows by the monotonicity of critical prices, and the second inequality follows by a known implication of smoothness~\cite{DuettingK15} (see Appendix~\ref{app:permeability}).
Therefore, for any $\bx' \in \feas_\bx$,
\begin{align*}
\sum_i p_i(x_i' \mid \bx_{[i-1]}) & \leq \sum_{i \in \bx'} v_i + \sum_{i \in \bx'} \tau_i(\valsmi \mid \bx_{[i-1]}) \displaybreak[0]\\
& \leq \bv(\bx') + \frac{\mu + 1}{\lambda} \bv(\OPT(\bv, \feas_\bx)) \displaybreak[0]\\
& \leq \frac{\mu + 1 + \lambda}{\lambda} \bv(\OPT(\bv, \feas_\bx)),
\end{align*}
where the first inequality follows by replacing the maximum in the definition of the prices by a sum, 
the second inequality follows by Equation (\ref{eq:critical}), 
and the last inequality follows by $\bv(\bx') \leq \bv(\OPT(\bv, \feas_\bx))$, since $\bx' \in \feas_\bx$.
\end{proof}

\subsection{General Problems and Outcome Smoothness}
\label{sec:smoothness-general}

We proceed to show an implication from smoothness to prices that works in more general settings. It is based on the observation that many smoothness proofs proceed by showing that agent $i$ could bid $b'_i$ to get some target outcome $x^*_i$. We capture proofs that proceed in this manner through the following notion of outcome smoothness. Similar but different notions were considered in \cite{DaskalakisS16,LykourisST16}.

\begin{definition}\label{def:outcome-smooth}
A mechanism is \emph{$(\lambda,\mu)$-outcome smooth} for $\lambda,\mu \geq 0$ if for all valuation profiles $\vals \in V$ there exists an outcome $\allocsprime(\vals) \in \feas$ such that for all bid profiles $\bids \in B$,
\[
	\sum_{i \in N} \left( v_i(\allociprime) - \inf_{\bidiprime:\; \mechallocrule_i(\bidiprime,\bidsmi) \succeq \allociprime} P_i(\bidiprime,\bidsmi) \right) \geq \lambda \cdot \vals(\OPT(\vals)) - \mu \cdot \sum_{i \in N} P_i(\bids).
\]
\end{definition}

%
%

We show that if a first-price, declared welfare maximizing mechanism (i.e., a mechanism with allocation rule $f(\bb) = \OPT(\bb)$) is $(\lambda, \mu)$-outcome smooth and has non-decreasing critical prices, then the critical prices for that mechanism (from the definition of outcome smoothness) can be used as posted prices that yield an $O(\lambda/\mu)$ approximation to the optimal welfare.  Recall that these critical prices are different from the first-price payments that make up the mechanism's payment rule.  This result has a mild technical caveat: we require that the mechanism continues to be smooth in a modified problem with multiple copies of each bidders. An allocation is feasible in the modified feasibility space $\feas'$ if it corresponds to a feasible allocation $\bx \in \feas$, with each $x_i$ being partitioned between the copies of agent $i$.


\begin{theorem}
\label{thm:smoothness-to-prices}
Fix valuation space $V$ and feasibility space $\feas$, and suppose $\feas'$ is an extension of $\feas$ as defined above.  Suppose that the first-price mechanism based on the declared welfare maximizing allocation rule for valuation space $V$ and feasibility space $\feas'$ has non-decreasing critical prices, and is $(\lambda, \mu)$-outcome smooth for every $\feas'/\bz$.
Then there is a collection of exchange-feasible sets $(\feas_\bx)_{\bx \in X}$, and an allocation rule $\ALG$ that returns the welfare-maximization allocation with probability $\lambda$, such that for every $\bv \in V$ there exists a pricing rule that is
$(\lambda, \mu/\lambda)$-balanced with respect to $\ALG$ and $(\feas_\bx)_{\bx \in X}$.
\end{theorem}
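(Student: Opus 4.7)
The plan is to mirror the structure of the warm-up Theorem~\ref{thm:smooth-warm-up}, building prices out of critical payments in the declared-welfare-maximizing first-price mechanism, but running that mechanism on the extended feasibility space $\feas'$ so that the outcome-smoothness guarantee can be harvested in the multi-parameter setting. Concretely, I will take $(\feas_\bx)_{\bx \in X}$ to be the canonical exchange-feasible sets coming from the contraction $\feas'/\bx$, let $\ALG$ be the randomized rule returning $\OPT(\bv)$ with probability $\lambda$ (and $\emptyset$ otherwise, so that $\E[\bv(\ALG(\bv))] = \lambda\,\bv(\OPT(\bv))$), and set
\[
p_i^{\bv}(x_i \mid \by) \;=\; \tfrac{1}{\lambda}\,\tau_i^{\feas'/\by}\!\bigl(x_i \,\big|\, \bv_{-i}\bigr),
\]
where $\tau_i^{\feas'/\by}(x_i \mid \bv_{-i})$ is the critical (infimum) payment agent $i$ must make in the first-price, declared-welfare-maximizing mechanism on $\feas'/\by$ to obtain an outcome $\succeq x_i$, when the other agents bid truthfully. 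Scaling by $1/\lambda$ is what allows the $(\lambda,\mu/\lambda)$-balanced conclusion to match the $\lambda$-fraction in $\E[\bv(\ALG(\bv))]$.

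For Condition~(\ref{conda}), I would argue exactly as in Theorem~\ref{thm:smooth-warm-up}: the critical price $\tau_i^{\feas'/\bx_{[i-1]}}(x_i \mid \bv_{-i})$ is at least the externality that forcing an outcome containing $x_i$ imposes on the other agents, namely $\bv(\OPT(\bv,\feas_{\bx_{[i-1]}})) - \bv(\OPT(\bv,\feas_{\bx_{[i]}}))$. Summing over $i$ telescopes to $\bv(\OPT(\bv)) - \bv(\OPT(\bv,\feas_\bx))$, and dividing by $\lambda$ produces the right-hand side $\tfrac{1}{\lambda}(\bv(\ALG(\bv)) - \bv(\OPT(\bv,\feas_\bx)))$ in expectation over $\ALG$.

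For Condition~(\ref{condb}), fix $\bx \in \feas$ and $\bx' \in \feas_\bx$. The hypothesis of monotone critical prices lets me replace each $\bx_{[i-1]}$ with the full $\bx$, reducing the task to bounding $\sum_i \tau_i^{\feas'/\bx}(x'_i \mid \bv_{-i})$ by $\mu \cdot \bv(\OPT(\bv,\feas_\bx))$. Now I apply $(\lambda,\mu)$-outcome smoothness of the first-price mechanism on $\feas'/\bx$ with the truthful bid profile $\bv_{-i}$ on the other agents' side: since $\sum_i P_i(\bv) = \bv(\OPT(\bv,\feas_\bx))$ under truthful bidding, outcome smoothness yields
\[
\sum_i \Bigl( v_i(\tilde x_i) - \tau_i^{\feas'/\bx}(\tilde x_i \mid \bv_{-i}) \Bigr) \;\geq\; \lambda\,\bv(\OPT(\bv,\feas_\bx)) - \mu\,\bv(\OPT(\bv,\feas_\bx))
\]
for the target outcome $\tilde\bx$ picked by the smoothness guarantee. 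Rearranging and using $\sum_i v_i(\tilde x_i) \leq \bv(\OPT(\bv,\feas_\bx))$ will give $\sum_i \tau_i^{\feas'/\bx}(\tilde x_i \mid \bv_{-i}) \leq \mu\,\bv(\OPT(\bv,\feas_\bx))$, and the $1/\lambda$ scaling of the prices then delivers $\beta = \mu/\lambda$.

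The main obstacle I expect is precisely the last step: outcome smoothness gives a bound for a \emph{single} target outcome $\tilde\bx(\bv)$, not for the arbitrary $\bx' \in \feas_\bx$ that Condition~(\ref{condb}) quantifies over. This is where the multi-copy extension $\feas'$ does real work: because each agent has several copies, I can reserve one copy per agent to play the role of ``test outcome'' and let the remaining copies absorb the outcome-smoothness target. Then the critical price on the reserved copy upper-bounds $\tau_i^{\feas'/\bx}(x'_i \mid \bv_{-i})$ uniformly in $\bx'$, while the outcome-smoothness inequality still yields the desired aggregate bound. Making this copy construction rigorous, and verifying that the monotonicity hypothesis transfers from $\feas'$ to the contracted markets $\feas'/\bx$, is the technically delicate part of the argument.
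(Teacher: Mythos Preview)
Your overall architecture is right, and you correctly locate the difficulty: outcome smoothness controls the critical prices only at the particular target $\tilde{\bx}(\bv)$, not at an arbitrary $\bx'\in\feas_\bx$. Your proposed fix (``reserve one copy as a test outcome and let the others absorb the smoothness target'') is too vague to close the gap, and it is not how the paper resolves it.

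The paper's key move is to \emph{change the choice of $\feas_\bx$}: rather than the canonical exchange-feasible set, it takes $\feas_\bx$ to be the \emph{range} of the smoothness map $\bx'(\,\cdot\mid\overline{\bx})$ (projected to the third-copy coordinates in a $3n$-agent extension). With that definition, every $\by\in\feas_\bx$ is by construction $\bx'(\bv'\mid\overline{\bx})$ for some $\bv'$. One then invokes outcome smoothness with valuation profile $\epsilon\bv'$ (using that $\bx'$ is scale-invariant, so the target stays $\by$) and bid profile $\bv$, and lets $\epsilon\to 0$ to kill the $\sum_i \epsilon v'_i(y_i)$ and $\lambda\epsilon\,\bv'(\OPT)$ terms. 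What survives is exactly $\sum_i p_i^{\bv}(y_i\mid\bx)\le \mu\cdot \bv(\mechallocrule(\bv\mid\bx))$, and monotonicity of critical prices pushes this back to $\bx_{[i-1]}$. Because $\feas_\bx$ is now only a \emph{subset} of the canonical set, a separate observation (the range of $\bx'$ $\lambda$-approximates $\OPT$; this is Lemma~\ref{lem:onto}) is needed to convert $\bv(\mechallocrule(\bv\mid\bx))$ into $\tfrac{1}{\lambda}\bv(\OPT(\bv,\feas_\bx))$, which is precisely where the factors $(\alpha,\beta)=(\lambda,\mu/\lambda)$ originate. In particular, the paper does \emph{not} scale prices by $1/\lambda$; your simultaneous use of that scaling and of $\ALG=\lambda\cdot\OPT$ is redundant (and, as written, makes your Condition~(a) bound not line up cleanly with $\alpha=\lambda$). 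The three-copy construction is used so that the ``buyer'' (copy $2n{+}i$) competes against a copy bidding $v_i$ and against copies holding the partial allocation, which is what makes the externality/telescoping argument for Condition~(a) go through verbatim.
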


Theorem~\ref{thm:smoothness-to-prices} implies that posting (an appropriately scaled version of) the critical prices from the outcome smooth mechanism yields a welfare approximation of $O(\lambda/\mu)$, matching the Price of Anarchy guarantee of the original mechanism. The proof of Theorem~\ref{thm:smoothness-to-prices} appears in Appendix~\ref{app:smooth-prices}.

\subsection{Binary, Single-Parameter Problems}
\label{sec:smoothness-binary}


We conclude with two general ``black-box reductions'' for binary single-parameter settings, in which agents can either win or lose, which show how to translate PoA guarantees of $O(\gamma)$ provable via (regular) smoothness into $O(\gamma^2)$-approximate posted-price mechanisms. Proofs appear in Appendix \ref{app:single-parameter}. The key to both these results is a novel, purely combinatorial implication of smoothness for the greedy allocation rule proved in Lemma \ref{lem:zero-one}.

\begin{theorem}\label{thm:greedy}
Suppose that the first-price mechanism based on the greedy allocation rule $\GRD$ has a Price of Anarchy of $O(\gamma)$ provable via smoothness, then there exists a $O(\gamma^2)$-approximate price-based prophet inequality.
\end{theorem}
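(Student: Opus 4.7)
The plan is to construct, from the critical thresholds of the greedy allocation rule $\GRD$, an $(\alpha,\beta)$-balanced pricing rule with $\alpha\beta = O(\gamma)$ with respect to $\ALG = \GRD$, and then invoke Theorem~\ref{thm:main}. Since smoothness of the first-price greedy mechanism already implies $\bv(\GRD(\bv)) \geq \Omega(1/\gamma)\,\bv(\OPT(\bv))$, the guaranteed welfare $\tfrac{1}{1+\alpha\beta}\bv(\GRD(\bv))$ will translate into an $O(\gamma^2)$-approximation to $\bv(\OPT(\bv))$. The overall structure parallels the warm-up in Theorem~\ref{thm:smooth-warm-up}, but one extra factor of $\gamma$ is paid because we now have to base the prices on greedy rather than on the welfare maximizer.

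Concretely, let $\tau_i(\valsmi \mid \by)$ denote the critical value at which agent $i$ wins under $\GRD$ in the subinstance induced by $\by$, take $\feas_\bx$ to be the canonical exchange-feasible set, and set $p_i(1 \mid \by) = \max\{v_i, \tau_i(\valsmi \mid \by)\}$, $p_i(0 \mid \by) = 0$. A key structural observation is that greedy critical thresholds are monotone non-decreasing in $\by$: contracting by more pre-allocated agents only tightens the feasibility constraint greedy faces, so the value required to win can only rise. For property~(a), one verifies $p_i(x_i \mid \bx_{[i-1]}) \geq \bv(\GRD(\bv, \feas_{\bx_{[i-1]}})) - \bv(\GRD(\bv, \feas_{\bx_{[i]}}))$ by the same case analysis on whether $v_i \geq \tau_i$ as in Theorem~\ref{thm:smooth-warm-up}; telescoping and using $\bv(\GRD(\bv, \feas_\bx)) \leq \bv(\OPT(\bv, \feas_\bx))$ yields $\sum_i p_i(x_i \mid \bx_{[i-1]}) \geq \bv(\GRD(\bv)) - \bv(\OPT(\bv, \feas_\bx))$, so $\alpha = 1$.

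Property~(b) is where the smoothness assumption enters, through the promised Lemma~\ref{lem:zero-one}: the purely combinatorial consequence that for every $\bx \in \feas$ and $\bx' \in \feas_\bx$,
\begin{equation*}
\sum_{i \in \bx'} \tau_i(\valsmi \mid \bx) \leq O(\gamma) \cdot \bv(\OPT(\bv, \feas_\bx)).
\end{equation*}
The intended proof of the lemma is to apply the smoothness inequality after replacing each $b_i$ for $i \in \bx'$ with a bid just above $\tau_i(\valsmi \mid \bx)$, so that $i$ becomes a winner in greedy on the subinstance and, under first-price payments, pays exactly its threshold; rearranging then bounds $\sum_{i \in \bx'} \tau_i$ by $O(\gamma)$ times the residual optimum. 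Combining with monotonicity of thresholds in $\by$ to replace $\bx$ by $\bx_{[i-1]}$ in the conditioning gives
\begin{equation*}
\sum_{i \in N} p_i(x'_i \mid \bx_{[i-1]}) \leq \bv(\bx') + \sum_{i \in \bx'} \tau_i(\valsmi \mid \bx) \leq (1 + O(\gamma)) \cdot \bv(\OPT(\bv, \feas_\bx)),
\end{equation*}
so $\beta = O(\gamma)$, and Theorem~\ref{thm:main} delivers the $O(\gamma^2)$ prophet inequality.

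The main obstacle is Lemma~\ref{lem:zero-one} itself: translating a smoothness condition, which only constrains strategic deviations against a given bid profile, into a uniform combinatorial inequality on sums of greedy thresholds valid for every feasible subset and every subinstance. The delicate point is choosing the smoothness deviations so that the induced winner set and the resulting first-price payments align with the target set $\bx'$ and with the thresholds $\tau_i$, respectively, without invoking monotonicity of the greedy allocation map in bidders' bids (which fails in general).
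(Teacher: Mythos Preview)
Your argument has a genuine gap: greedy critical thresholds are \emph{not} monotone non-decreasing in the pre-allocated set $\by$. Contracting by $\by$ does tighten feasibility, but it also removes competitors, and the second effect can dominate. Concretely, take three agents with feasible sets $\emptyset,\{1\},\{2\},\{3\},\{1,2\}$ (so agent~$3$ is incompatible with both $1$ and $2$). Then $\tau_1(\valsmi[1]\mid\emptyset)=v_3$ whenever $v_3>v_2$, but after contracting by $\{2\}$ agent~$3$ is blocked and $\tau_1(\valsmi[1]\mid\{2\})=0$. This structure is $2$-permeable, so it falls within the hypothesis of the theorem.

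This directly breaks your Property~(b) bound. With $\bx=(0,1,0)$ and $\bx'=\{1\}\in\feasx$, your price $p_1(1\mid\bx_{[0]})=\max\{v_1,v_3\}$, while $\bv(\OPT(\bv,\feasx))=v_1$; taking $v_3\gg v_1$ shows no finite $\beta$ works for this pricing rule. The step ``replace $\bx_{[i-1]}$ by $\bx$ using monotonicity'' is therefore invalid, and so is the passage from permeability on the subinstance $\feas/\bx$ to a bound on $\sum_{i\in\bx'}\tau_i(\valsmi\mid\bx_{[i-1]})$. Your description of Lemma~\ref{lem:zero-one} is also not what that lemma says: it is a purely combinatorial cardinality bound $|C|\le\gamma|B_0|$ for nested families $A_t\subseteq A_{t+1}$, $B_t\supseteq B_{t+1}$, not an inequality on sums of thresholds against the residual optimum.

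The paper avoids the monotonicity issue entirely by using a different pricing rule (Algorithm~\ref{alg:greedy-prices}): it maintains a sequence of reference allocations $\br^{(0)}\supseteq\cdots\supseteq\br^{(n)}$ and reference valuations $\vals^{(j)}$ supported on $\br^{(j)}$, and prices against those rather than against the raw thresholds $\tau_i(\valsmi\mid\by)$. This yields \emph{weak} $(\gamma,0,\gamma)$-balancedness (Lemmas~\ref{lem:greedy-alpha} and~\ref{lem:greedy-beta}), where Property~(b) is established by a layering argument that reduces to $0/1$ valuations and then invokes the cardinality bound of Lemma~\ref{lem:zero-one}. Note in particular that the paper obtains $\alpha=\gamma$, not $\alpha=1$; your telescoping for Property~(a) with $\GRD$ in place of $\OPT$ also requires justification, since for greedy the critical value is not the externality.
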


\begin{theorem}\label{thm:opt}
Suppose that the first-price mechanism based on the declared welfare maximizing allocation rule $\OPT$ has a Price of Anarchy of $O(\gamma)$ provable via smoothness, then there exists a $O(\gamma^2)$-approximate price-based prophet inequality.
\end{theorem}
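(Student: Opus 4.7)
The plan is to reduce Theorem~\ref{thm:opt} to Theorem~\ref{thm:greedy} by showing that smoothness of the first-price $\OPT$ mechanism transfers to smoothness of the first-price $\GRD$ mechanism at the cost of only a constant factor. The main obstruction to applying the warm-up Theorem~\ref{thm:smooth-warm-up} directly to the critical prices of $\OPT$ is that those prices need not be monotone non-decreasing in the partial allocation $\by$: adding an element to $\by$ can cause $i$ to switch between being included and excluded by the welfare maximizer, and the corresponding externality can drop. Greedy thresholds, by contrast, are naturally monotone, since $\GRD$ processes agents in decreasing value order and only admits them subject to residual feasibility, so enlarging $\by$ can only tighten $i$'s feasibility hurdle.

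First, I would invoke Lemma~\ref{lem:zero-one}, the purely combinatorial consequence of smoothness advertised in the text, to argue that if the first-price $\OPT$ mechanism has a smoothness-based PoA of $O(\gamma)$, then the first-price $\GRD$ mechanism also admits a smoothness proof with ratio $O(\gamma)$. The intuition is that in binary single-parameter settings each smoothness deviation used for $\OPT$ — an optimal winner $i$ bidding just above the current $\OPT$ critical price — has a natural counterpart for $\GRD$, where $i$ bids just above its $\GRD$ critical price. Lemma~\ref{lem:zero-one} then bounds the resulting sum of greedy thresholds against $\bv(\OPT(\bv))$ up to the same $O(\gamma)$ factor, since the binary structure of the problem decouples the contributions of distinct winners.

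Second, having obtained smoothness of the first-price $\GRD$ mechanism with ratio $O(\gamma)$, I would invoke Theorem~\ref{thm:greedy} as a black box to conclude a $O(\gamma^2)$-approximate price-based prophet inequality. Equivalently, one could short-circuit this and apply Theorem~\ref{thm:smooth-warm-up} to the pricing rule $p_i(1 \mid \by) = \max\{v_i,\tau_i^{\GRD}(\valsmi \mid \by)\}$ — the monotonicity required by that theorem is precisely what greedy supplies — yielding $(1, O(\gamma))$-balanced prices with respect to $\GRD$ and hence via Theorem~\ref{thm:extension} an $O(\gamma)$-approximation to $\bv(\GRD(\bv))$; composing with the $O(\gamma)$ loss from $\GRD$ versus $\OPT$ (itself a consequence of the smoothness-based PoA applied to the truthful profile) again gives $O(\gamma^2)$. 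The main obstacle is really the first step, Lemma~\ref{lem:zero-one} itself: translating the existential smoothness deviations defined in terms of the global welfare-maximizer into a pointwise structural bound on the local greedy thresholds requires the binary single-parameter structure in an essential way, and is what distinguishes Theorems~\ref{thm:greedy} and~\ref{thm:opt} from the more transparent warm-up Theorem~\ref{thm:smooth-warm-up}.
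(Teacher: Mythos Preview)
Your high-level strategy---reduce from the $\OPT$ setting to the $\GRD$ setting, then invoke Theorem~\ref{thm:greedy}---is sound and indeed close to what the paper does. But the specific lemma you cite for the reduction is wrong, and this is a real gap. Lemma~\ref{lem:zero-one} has $\gamma$-permeability of $\GRD$ as a \emph{hypothesis}, not as a conclusion; it says ``if $\GRD$ is $\gamma$-permeable then a certain combinatorial inequality holds,'' and it cannot be used to derive $\GRD$-permeability from $\OPT$-permeability. What you actually need is a separate lemma showing $\gamma^{\GRD} \le \gamma^{\OPT}$, i.e.\ that permeability of the welfare-maximizing rule dominates permeability of greedy. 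The paper proves exactly this (Lemma~\ref{lem:greedyoptpermeability}), and its proof is not a one-liner: it requires carefully relating the $\OPT$ and $\GRD$ critical values on a modified valuation profile, using loser-independence of greedy and the fact that on profiles supported on a feasible set $\OPT$ and $\GRD$ coincide. You do not sketch any of this.

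Once that bridge is in place, your black-box reduction to Theorem~\ref{thm:greedy} is valid and yields weakly $(\gamma,0,\gamma)$-balanced prices, hence an $O(\gamma^2)$ prophet inequality. The paper takes a different route after the bridge: rather than invoking Theorem~\ref{thm:greedy}, it constructs a new pricing rule based on $\OPT$ reference allocations (Algorithm~\ref{alg:opt-prices}) and shows directly that these prices are weakly $(1,0,\gamma^2)$-balanced. This requires an additional lemma bounding $\tau_i^{\OPT} \le \gamma^{\GRD}\cdot\tau_i^{\GRD}$ (Lemma~\ref{lem:greedyoptthresholds}) and then a layering argument (Lemma~\ref{lem:opt-beta}) that does use Lemma~\ref{lem:zero-one}---but only after $\GRD$-permeability has been established via Lemma~\ref{lem:greedyoptpermeability}. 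Your route is arguably cleaner once the bridge lemma is supplied; the paper's route has the advantage that the resulting prices are $\OPT$-based, which is what enables the strengthening in Remark~\ref{rem:monotone} when $\OPT$ critical prices happen to be monotone. Your alternative via Theorem~\ref{thm:smooth-warm-up} does not work as stated, since that theorem is tied to the welfare-maximizing allocation rule and its critical prices, not to $\GRD$.
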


We note that Theorem \ref{thm:smooth-warm-up} applied to matroids (using known smoothness results for pay-your-bid greedy mechanisms over matroids~\cite{LucierS15}) implies the existence of $(1,3)$-balanced prices and hence a $4$-approximate prophet inequality. A strengthening of Theorem \ref{thm:opt} for monotonically increasing critical prices (discussed in Remark \ref{rem:monotone}) leads to an improved factor of $2$, matching the prophet inequality for matroids shown in \citet{KleinbergW12}.  This also captures the classic single-item prophet inequality, as a special case.

\section{Conclusions and Open Problems}
We introduced a general framework for establishing prophet inequalities and posted price mechanisms for multi-dimensional settings.
This work leaves many questions open.

A general class of questions is to determine the best approximation guarantee that a prophet inequality can achieve for a particular setting.  For example, even for the intersection of two matroids there is a gap between the trivial lower bound of $2$ and the upper bound of $4k-2 = 6$. Similarly, in subadditive combinatorial auctions, the best-known upper bound is logarithmic in the number of items $m$ \cite{FeldmanGL15}, but again the best-known lower bound is $2$, inherited from the case of a single item.  Notably, the price of anarchy for simultaneous single-item auctions is known to be constant for subadditive valuations~\cite{FeldmanFGL13}, but the proof does not use the smoothness framework and hence our results relating posted prices to smooth mechanisms do not directly apply.

A related question is whether there exist prophet inequalities that cannot be implemented using posted prices. Interestingly, we are not aware of any separation between the two so far. More generally, one could ask about the power of anonymous versus personalized prices, item versus bundle prices, static versus dynamic prices, and so on.  For example, to what extent can \emph{static} prices approximate the welfare under a matroid constraint, an intersection of matroids, or an arbitrary downward-closed feasibility constraint?


Regarding the pricing framework itself, it would be interesting to extend the notion of \AB to allow randomization in a dynamic pricing rule, and to understand the additional power of randomization.  One could also generalize beyond feasibility constraints to more general seller-side costs for allocations.  For the connection between smoothness and balancedness, we leave open the question of removing the price-monotonicity condition from Theorem~\ref{thm:smoothness-to-prices}, or whether the approximation factors can be improved for our single-parameter reductions (Theorems~\ref{thm:greedy} and~\ref{thm:opt}). Finally, recent work has shown that smoothness guarantees often improve as markets grow large~\cite{FeldmanILRS16}; is there a corresponding result for balancedness?


\bibliographystyle{abbrvnat}
\bibliography{biblio}

\newpage

\appendix



\section{Classic Prophet Inequality via Balanced Prices}
\label{app:single-item}

In this appendix we show how to re-derive the classic prophet inequality via our framework in Section \ref{sec:extension}. Specifically, we show the existence of a $(1,1)$-balanced pricing rule as defined in Definition \ref{def:ab}. Theorem \ref{thm:main} then shows the factor $2$-approximation.

In the classic setting we have $X_i = \{0, 1\}$ for all $i$ and $\feas = \{\bx \mid \sum_i x_i \leq 1\}$.
We set $p_i(1 \mid \bx) = \max_\ell v_\ell$ if $\bx$ does not allocate the item, $\infty$ otherwise; $p_i(0 \mid \bx) = 0$ for all $\bx$.
This corresponds to a fixed posted price of $\max_\ell v_\ell$ on the item.

\begin{claim}
These prices are $(1, 1)$-balanced with respect to
$\OPT$ and
$\feasx$ defined by $\feasx = \feas$ if $\bx$ does not allocate the item and $\feasx = \emptyset$ otherwise.
\end{claim}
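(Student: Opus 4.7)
The plan is a direct verification of the two conditions of Definition~\ref{def:ab} with $\alpha = \beta = 1$, $\ALG = \OPT$, splitting into cases according to whether the allocation $\bx$ has already assigned the item. Before verifying balancedness, I would quickly confirm exchange compatibility of $(\feasx)_{\bx \in X}$: if $\bx$ does not allocate the item, then $\bx_{-i}$ is the zero profile for every $i$, so for any $\by \in \feasx = \feas$ the profile $(y_i, \bx_{-i})$ has at most one coordinate equal to $1$ and therefore lies in $\feas$; if $\bx$ allocates the item, exchange compatibility is vacuous because $\feasx = \emptyset$.

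For Condition~(a), let $\bx \in \feas$. In the first case, $\bx$ allocates nothing; then $\feasx = \feas$, so $\bv(\OPT(\bv, \feasx)) = \max_\ell v_\ell = \bv(\OPT(\bv))$, making the right-hand side zero, which matches the left-hand side (every $x_i = 0$ gives price $0$). In the second case, there is a unique index $k$ with $x_k = 1$; then $\feasx = \emptyset$, so $\bv(\OPT(\bv, \feasx)) = 0$ and the right-hand side equals $\max_\ell v_\ell$. On the left-hand side, the prefix $\bx_{[k-1]}$ still does not allocate the item (only coordinate $k$ is active in $\bx$), so $p_k(1 \mid \bx_{[k-1]}) = \max_\ell v_\ell$; all other terms vanish because either $x_i = 0$ or (for $i > k$) the situation never arises since feasibility forces $x_i = 0$. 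So the sum equals $\max_\ell v_\ell$, giving equality.

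For Condition~(b), the case in which $\bx$ allocates the item is vacuous since $\feasx = \emptyset$. Otherwise $\feasx = \feas$ and $\bv(\OPT(\bv, \feasx)) = \max_\ell v_\ell$. Any $\bx' \in \feas$ has at most one coordinate equal to $1$; since $\bx_{[i-1]}$ never allocates the item in this case, the single non-zero price is $\max_\ell v_\ell$, so $\sum_i p_i(x'_i \mid \bx_{[i-1]}) \leq \max_\ell v_\ell = \bv(\OPT(\bv,\feasx))$, as required.

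There is no real obstacle here; the only point requiring a bit of care is keeping straight the distinction between the partial profile $\bx_{[i-1]}$ (which omits $x_i$) and the full profile $\bx$ when reading off prices, and noticing that in the allocating case all of the price mass collapses onto the single index $k$ with $x_k = 1$. Combined with Theorem~\ref{thm:main}, the factor $1/(1+\alpha\beta) = 1/2$ recovers the classical Krengel--Sucheston prophet inequality.
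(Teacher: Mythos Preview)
Your proof is correct and follows essentially the same approach as the paper: a case split on whether $\bx$ allocates the item, followed by a direct verification of Conditions~(a) and~(b) in each case. You add a brief check of exchange compatibility of $(\feasx)_{\bx}$, which the paper leaves implicit, but otherwise the arguments coincide.
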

\begin{proof}
Let $\bx$ be an arbitrary allocation profile.
If $\bx$ allocates the item, then $\feasx = \emptyset$ and thus Condition (b) is trivially fulfilled. For Condition (a), we observe that  $\bv(\OPT(\bv, \feasx)) = 0$ and that $\bv(\OPT(\bv)) = \max_\ell v_\ell = \sum_i p_i(x_i \mid \bx_{[i-1]})$, because exactly one buyer pays $\max_\ell v_\ell$. If $\bx$ does not allocate the item, then $\bv(\OPT(\bv, \feasx)) = \bv(\OPT(\bv))$, making Condition (a) trivial. For Condition (b), we use that in $\bx'$ at most one buyer is allocated the item. Therefore, $\sum_i p_i(x_i' \mid \bx_{[i-1]}) \leq \max_\ell v_\ell = \bv(\OPT(\bv)) = \bv(\OPT(\bv, \feasx))$.
\end{proof}

\section{Proof of Theorem \ref{thm:refined-extension}}
\label{app:extension}\label{app:refined-balance}


Our proof will follow the same steps as the one for Theorem~\ref{thm:extension}. As in that poof, denote the exchange-compatible family of sets with respect to which the collection of pricing rules $(\bp^\bv)_{\bv\in V}$ is balanced by $(\feas_\bx)_{\bx \in X}$, and let $\bx(\bv)$ be the allocation returned by the posted-price mechanism on input valuation profile $\bv$.
Let $\bx'(\bv, \bv') = \OPT( \bv', \feas_{\bx(\bv)})$ be the allocation that maximizes welfare with respect to valuation profile $\bv'$ over feasibility constraint $\feas_{\bx(\bv)}$.

\medskip
\noindent{\bf Utility bound:}
Again sample valuations $\bv' \sim \dist$. By the same reasoning as in the proof of Theorem~\ref{thm:extension}, we obtain
\begin{align}
	\E_{\bv}\left[\sum_{i \in N} u_i(\bv)\right]
	= \E_{\bv,\bv'} \left[ \bv'(\bx'(\bv, \bv')) \right] - \E_{\bv,\bv'} \left[ \sum_{i \in N} \delta \cdot p_i\bigg(x_i'(\bv, \bv') ~\bigg|~ \bx_{[i-1]}(\bv)\bigg) \right]. \label{eq:extext1}
\end{align}
We upper bound the last term in the previous inequality by using Property (\ref{condb}). This gives pointwise for any $\bv$ and $\bv'$
\[
	\sum_{i \in N} \delta \cdot p_i\bigg(x_i'(\bv, \bv') ~\bigg|~ \bx_{[i-1]}(\bv)\bigg)
	\leq \delta\beta_1 \cdot \E_{\btv} \left[ \btv(\OPT(\btv, \feas_{\bx(\bv)} ))\right] + \delta \beta_2 \cdot \E_{\btv} \left[ \btv(\ALG(\btv))\right],
\]
and therefore also
\begin{align}
	\E_{\bv,\bv'} \left[ \sum_{i \in N} \delta \cdot p_i\bigg(x_i'(\bv, \bv') ~\bigg|~ \bx_{[i-1]}(\bv)\bigg) \right]
	&\leq \delta\beta_1 \cdot \E_{\bv,\btv} \left[ \btv(\OPT(\btv, \feas_{\bx(\bv)}) \right] + \delta\beta_2 \cdot \E_{\bv,\btv} \left[ \btv(\ALG(\btv) \right]. \label{eq:extext2}
\end{align}
Replacing $\bv'$ with $\btv$ in Inequality (\ref{eq:extext1}) and combining it with Inequality (\ref{eq:extext2}) we obtain
\begin{align}
	\E_{\bv}\left[\sum_{i \in N} u_i(\bv)\right]
	\geq (1 - \delta\beta_1) \cdot \E_{\bv,\btv} \left[ \btv(\OPT(\btv, \feas_{\bx(\bv)}) \right] - \delta\beta_2 \cdot \E_{\bv,\btv} \left[ \btv(\ALG(\btv) \right]. \label{eq:extext3}
\end{align}

\noindent
{\bf Revenue bound:}
Again, applying Property (\ref{conda}) we obtain
\begin{align}
	\E_{\bv} \left[\sum_{i \in N} \delta \cdot p_i(x_i(\bv) \mid \bx_{[i-1]}(\bv)) \right]
	&\geq \frac{\delta}{\alpha} \cdot \E_{\btv} \left[ \btv(\ALG(\btv)) \right] - \frac{\delta}{\alpha} \cdot \E_{\btv,\bv} \left[ \btv(\OPT(\btv, \feas_{\bx(\bv)}) \right]. \label{eq:extext4}
\end{align}

\noindent
{\bf Combination:}
To combine our bounds, we distinguish whether $\beta_2 \geq \frac{1}{2 \alpha}$.

\begin{description}
\item[Case 1: $\beta_2 \geq \frac{1}{2 \alpha}$]
We use that point-wise for every $i \in N$ we have $u_i(\bv) \geq 0$. Therefore, $u_i(\bv) \geq \rho u_i(\bv)$ for all $0 \leq \rho \leq 1$. Using $\delta = \frac{1}{\beta_1 + 2 \beta_2}$, $\rho = \frac{1}{2 \alpha \beta_2}$, and Inequalities (\ref{eq:extext3}) and (\ref{eq:extext4}), we get
\begin{align*}
	\E_{\bv}\left[\sum_{i \in N} v_i(x_i(\bv))\right]
	&\geq \rho \cdot \E_{\bv}\left[\sum_{i \in N} u_i(\bv)\right] + \E_{\bv}\left[\sum_{i \in N} \delta \cdot p_i\left(x_i(\bv) \mid \bx_{[i-1]}(\bv)\right)\right]  \\
	&\geq \rho (1 - \delta\beta_1) \cdot \E_{\bv,\btv} \left[ \btv(\OPT(\btv, \feas_{\bx(\bv)}) \right] - \rho \delta\beta_2 \cdot \E_{\bv,\btv} \left[ \btv(\ALG(\btv) \right] \\
	& \qquad \qquad \qquad \; + \frac{\delta}{\alpha} \cdot \E_{\btv} \left[ \btv(\ALG(\btv)) \right] - \frac{\delta}{\alpha} \cdot \E_{\btv,\bv} \left[ \btv(\OPT(\btv, \feas_{\bx(\bv)}) \right] \\
	& = \frac{1}{\alpha (2 \beta_1 + 4 \beta_2)} \cdot \E_{\btv} \left[ \btv(\ALG(\btv) \right].
\end{align*}
\item[Case 2: $\beta_2 < \frac{1}{2 \alpha}$]
Now, we use $\delta = \frac{1}{\beta_1 + 1/\alpha}$. Inequalities (\ref{eq:extext3}) and (\ref{eq:extext4}) yield
\begin{align*}
	\E_{\bv}\left[\sum_{i \in N} v_i(x_i(\bv))\right]
	&\geq \E_{\bv}\left[\sum_{i \in N} u_i(\bv)\right] + \E_{\bv}\left[\sum_{i \in N} \delta \cdot p_i\left(x_i(\bv) \mid \bx_{[i-1]}(\bv)\right)\right]  \\
	&\geq (1 - \delta\beta_1) \cdot \E_{\bv,\btv} \left[ \btv(\OPT(\btv, \feas_{\bx(\bv)}) \right] - \delta\beta_2 \cdot \E_{\bv,\btv} \left[ \btv(\ALG(\btv) \right] \\
	& \qquad \qquad \qquad \; + \frac{\delta}{\alpha} \cdot \E_{\btv} \left[ \btv(\ALG(\btv)) \right] - \frac{\delta}{\alpha} \cdot \E_{\btv,\bv} \left[ \btv(\OPT(\btv, \feas_{\bx(\bv)}) \right] \\
	& = \frac{1 - \alpha \beta_2}{1 + \alpha \beta_1} \cdot \E_{\btv} \left[ \btv(\ALG(\btv) \right] \\
	& \geq \frac{1}{\alpha (2 \beta_1 + 4 \beta_2)} \E_{\btv} \left[ \btv(\ALG(\btv) \right],
\end{align*}
where the last step uses that $\beta_1 + \beta_2 \geq \frac{1}{\alpha}$ and $\beta_2 < \frac{1}{2 \alpha}$.
\end{description}


\section{Composition Results}
\label{app:composition}

In this section we show that balanced
prices are \emph{composable}, in the sense that balanced prices for separate markets remain balanced when the markets are combined.  We consider two forms of composition. The first is a composition of preferences: it shows how to extend from a class of valuations $V$ to any maximum over valuations from $V$. The second shows how to compose allocations across different markets, where agents have additive preferences across markets. Together these two composition results capture XOS composition, in the sense of~\cite{SyrgkanisT13}.
Our theorems apply to balancedness with respect to $\OPT$, and they extend to general approximation algorithms $\ALG$ under mild conditions.

\paragraph{Closure under Maximum}
Given an arbitrary valuation space $V_i$ for player $i$, we consider its extension $V^{\max}_i$, which contains all functions $v_i^{\max}\colon X_i \to \reals$ for which there is a finite set $\{v_i^1, \ldots, v_i^m\} \in V_i$ such that $v_i^{\max}(x_i) = \max_{\ell} v_i^\ell(x_i)$ for all $x_i \in X_i$. We say that a valuation profile $\btv \in V$ is a \emph{supporting} valuation profile for allocation
$\bx$
and valuation profile $\bv \in V^{max}$ if $\tilde{v}_i \leq v_i$ and
$\tilde{v}_i(x_i) = v_i(x_i)$ for all $i$.

\begin{definition}
\label{def:consistent}
Allocation rule $\ALG$ is \emph{consistent} if for every $\bv \in V^{\max}$ and corresponding supporting valuation profile $\btv \in V$ for $\ALG(\bv)$, we have $\btv(\ALG(\btv)) \geq \btv(\ALG(\bv))$.
\end{definition}
%
%
\begin{lemma}
The optimal allocation rule $\OPT$ is consistent.
\end{lemma}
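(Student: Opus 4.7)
The plan is very short: the claim is essentially immediate from the definition of $\OPT$, and the subtlety of Definition~\ref{def:consistent} is not really used for this particular allocation rule.

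Concretely, fix $\bv \in V^{\max}$ and let $\btv \in V$ be a supporting valuation profile for $\OPT(\bv)$, meaning $\tilde v_i \le v_i$ for all $i$ and $\tilde v_i(\OPT_i(\bv)) = v_i(\OPT_i(\bv))$ for all $i$. The allocation $\OPT(\bv)$ is feasible, so by the very definition of the welfare-maximizing allocation under $\btv$,
\[
\btv(\OPT(\btv)) \;=\; \max_{\bx \in \feas} \btv(\bx) \;\geq\; \btv(\OPT(\bv)),
\]
which is exactly the inequality required.

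Thus there is no real obstacle: the only thing to notice is that consistency, when instantiated with $\ALG = \OPT$, is just the optimality of $\OPT(\btv)$ on the valuation profile $\btv$ evaluated at the particular feasible comparison point $\OPT(\bv)$. The matching-at-$\bx$ property of the supporting profile plays no role in this direction of the inequality; it only matters when consistency is combined with the surrounding composition arguments. Hence a one-line proof suffices.
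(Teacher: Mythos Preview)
Your proof is correct and essentially identical to the paper's: both observe that $\OPT(\bv)$ is a feasible outcome, so by optimality of $\OPT(\btv)$ with respect to $\btv$ we get $\btv(\OPT(\btv)) \geq \btv(\OPT(\bv))$. Your additional remark that the matching property of the supporting profile is not used here is accurate and a nice observation, though it goes slightly beyond what the paper states.
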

\begin{proof}
Since $\OPT(\bv)$ is a feasible outcome under $\btv$,
it holds that $\btv(\OPT(\btv)) \geq \btv(\OPT(\bv))$ by optimality.
\end{proof}

\begin{theorem}\label{thm:max}
Suppose that for each $\bv \in V$ there exists a pricing rule $\bp^{\bv}$ that is \ab with respect to $\bv$, consistent allocation rule $\ALG$, and the exchange-compatible family of sets $(\feas_\bx)_{\bx \in X}$. For each $\bv \in V^{\max}$ let $\btv \in V$ be a supporting valuation profile for $\ALG(\bv)$. Then the pricing rule $\bp^{\btv}$ is \ab with respect to $\bv$, $\ALG$, and $(\feas_\bx)_{\bx \in X}$.
\end{theorem}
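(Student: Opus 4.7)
The plan is to prove both balancedness conditions for $\bp^{\btv}$ with respect to $\bv$ by leveraging the fact that, by hypothesis, $\bp^{\btv}$ is already $(\alpha,\beta)$-balanced with respect to $\btv$. The key observation is that the left-hand sides of conditions (a) and (b) in Definition~\ref{def:ab} depend only on the pricing rule, not on the valuation profile used to evaluate balancedness. So swapping $\btv$ for $\bv$ leaves the LHS untouched, and the only work is comparing the right-hand sides under the two valuations.

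First I would establish three elementary inequalities using the defining properties of $\btv$ (namely, $\tilde{v}_i \leq v_i$ pointwise on $X_i$ and $\tilde{v}_i(\ALG_i(\bv)) = v_i(\ALG_i(\bv))$) and the consistency of $\ALG$. Specifically: (i) $\btv(\OPT(\btv,\feas_\bx)) \leq \bv(\OPT(\bv,\feas_\bx))$, which follows because $\btv(\by) \leq \bv(\by)$ for every outcome $\by$ by pointwise domination, and then $\bv(\OPT(\btv,\feas_\bx)) \leq \bv(\OPT(\bv,\feas_\bx))$ by optimality of $\OPT(\bv,\feas_\bx)$ under $\bv$; (ii) $\btv(\ALG(\btv)) \geq \btv(\ALG(\bv))$, which is exactly the content of Definition~\ref{def:consistent} applied to the supporting profile $\btv$; and (iii) $\btv(\ALG(\bv)) = \bv(\ALG(\bv))$, directly from the supporting condition $\tilde{v}_i(\ALG_i(\bv)) = v_i(\ALG_i(\bv))$ summed over $i$. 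Chaining (ii) and (iii) yields $\btv(\ALG(\btv)) \geq \bv(\ALG(\bv))$.

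Second, I would derive both balancedness conditions by substitution. For condition~(a) with respect to $\bv$, apply condition~(a) of the hypothesis with respect to $\btv$ to get $\sum_i p^{\btv}_i(x_i \mid \bx_{[i-1]}) \geq \tfrac{1}{\alpha}\bigl(\btv(\ALG(\btv)) - \btv(\OPT(\btv,\feas_\bx))\bigr)$, and then use the chained inequality from (ii)--(iii) together with (i) to replace $\btv(\ALG(\btv))$ by $\bv(\ALG(\bv))$ and $\btv(\OPT(\btv,\feas_\bx))$ by $\bv(\OPT(\bv,\feas_\bx))$ without violating the inequality. For condition~(b), apply the hypothesis (b) under $\btv$ to bound $\sum_i p^{\btv}_i(x'_i \mid \bx_{[i-1]}) \leq \beta \cdot \btv(\OPT(\btv,\feas_\bx))$, and immediately upgrade to $\beta \cdot \bv(\OPT(\bv,\feas_\bx))$ via (i).

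There is no real obstacle; the theorem is essentially an unfolding of the definitions. The only mild subtlety I want to flag is that consistency is used precisely to handle the case where $\btv$ is chosen to support $\ALG(\bv)$ rather than $\OPT(\bv)$: without it, one would only obtain $\btv(\ALG(\bv)) = \bv(\ALG(\bv))$, which need not be comparable to $\btv(\ALG(\btv))$ in the direction required by condition~(a). Consistency is exactly the property that plugs this gap, and it is immediate for $\ALG = \OPT$ by the optimality argument noted in the excerpt.
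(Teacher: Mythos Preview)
Your proposal is correct and follows essentially the same approach as the paper: both arguments observe that the left-hand sides of the balancedness conditions are unchanged when passing from $\btv$ to $\bv$, and then use the three inequalities you identify (pointwise domination to compare the $\OPT$ terms, consistency to compare $\btv(\ALG(\btv))$ with $\btv(\ALG(\bv))$, and the supporting condition to equate $\btv(\ALG(\bv))$ with $\bv(\ALG(\bv))$) to adjust the right-hand sides. Your closing remark on why consistency is needed also matches the role it plays in the paper's argument.
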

\begin{proof}
We first establish Property (\ref{conda}). We use Property (\ref{conda}) of the original pricing rules, plus the definition of a supporting valuation profile to conclude that
\begin{align*}
\sum_{i \in N} p_i^{\btv}(x_i \mid \bx_{[i-1]})
& \geq \frac{1}{\alpha} \cdot \left( \btv(\ALG(\btv)) -  \btv(\OPT(\btv, \feasx)\right) \\
& \geq \frac{1}{\alpha} \cdot \big( \btv(\ALG(\bv)) -  \bv(\OPT(\btv, \feasx)\big) \\
& \geq \frac{1}{\alpha} \cdot \big( \bv(\ALG(\bv)) -  \bv(\OPT(\bv, \feasx)\big).
\end{align*}

Next we establish Property (\ref{condb}). By Property (\ref{condb}) of the original pricing rule and the definition of a supporting valuation profile,
\begin{align*}
\sum_{i \in N} p_i^{\btv}(x'_i \mid \bx_{[i-1]})
&\leq \beta \cdot \btv(\OPT(\btv, \feasx))\\
&\leq \beta \cdot \bv(\OPT(\btv, \feasx))\\
&\leq \beta \cdot \bv(\OPT(\bv, \feasx)).  \qedhere
\end{align*}
\end{proof}

\paragraph{Closure under Addition}
Suppose there are $m$ separate allocation problems, each with feasibility constraint $\F^\ell$ over allocation space $X^\ell = X_1^\ell \times \dotsc \times X_n^\ell$.  The joint problem is then defined over the product allocation space $X = X^1 \times \dotsc \times X^m$, with feasibility constraint $\F = \F^1 \times \dotsc \times \F^m$. We say that a valuation $v_i\colon X \to \reals$ is \emph{additive} if it is defined by a (not necessarily additive) valuation function $v_i^\ell$ for each outcome $x_i^\ell \in X_i^\ell$, and for an outcome $x_i =(x_i^1, \dotsc, x_i^m) \in X_i$, the value of $x_i$ is given by $v_i(x_i) = \sum_{\ell = 1}^{m} v_i^\ell(x_i^\ell)$.


\begin{theorem}\label{thm:add}
Suppose that $\bv$ is additive over a set of allocation problems $\F^1, \dots, \F^m$, and for each individual allocation
problem there exists a pricing rule $\bp^{\bv^\ell}$ that is \ab with respect to $\bv^\ell$,
allocation rule $\ALG^\ell$ on $\F^\ell$, and the exchange-compatible family of sets $(\feas^\ell_{\bx^\ell})_{\bx^\ell \in X^\ell}$. Then the
pricing rule $\bp = \sum_{\ell = 1}^{m} \bp^{\bv^\ell}$
is \ab with respect to $\bv$ and
$\ALG$ on $\F$, where $\ALG(\bv) = (\ALG^1(\bv^1), \dots, \ALG^m(\bv^m))$.
\end{theorem}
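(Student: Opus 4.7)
The plan is to prove the theorem in the most natural way: take the exchange-compatible family for the joint problem to be the product family $\feas_\bx = \feas^1_{\bx^1} \times \dotsb \times \feas^m_{\bx^m}$ (where $\bx = (\bx^1,\dotsc,\bx^m)$), and then verify both balancedness conditions by summing the per-problem balancedness inequalities over the components $\ell = 1, \dotsc, m$. A first sanity check is that this product family is indeed exchange compatible for $\feas = \feas^1 \times \dotsb \times \feas^m$: if each $\feas^\ell_{\bx^\ell}$ is exchange compatible with $\bx^\ell$ in problem $\ell$, then any $\by \in \feas_\bx$ and any swap of coordinate $i$ yields a joint outcome whose $\ell$-th component lies in $\feas^\ell$, so the whole profile lies in $\feas$.

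The second ingredient is the decomposition identity
\begin{equation*}
\bv(\OPT(\bv, \feas_\bx)) \;=\; \sum_{\ell=1}^{m} \bv^\ell\bigl(\OPT(\bv^\ell, \feas^\ell_{\bx^\ell})\bigr).
\end{equation*}
This is where additivity of $\bv$ is used crucially: because the feasibility constraint on $\feas_\bx$ decouples across the $m$ problems and the objective $\sum_i v_i(y_i) = \sum_i \sum_\ell v_i^\ell(y_i^\ell)$ also decouples, the joint maximization splits into $m$ independent maximizations. Combined with $\bv(\ALG(\bv)) = \sum_\ell \bv^\ell(\ALG^\ell(\bv^\ell))$, which holds by the definition of the combined allocation rule and additivity, this is the only non-trivial step.

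For Property~(a), apply the per-problem guarantee to each $\ell$ with the allocation $\bx^\ell$ and sum:
\begin{equation*}
\sum_{i\in N} p_i(x_i \mid \bx_{[i-1]}) \;=\; \sum_{\ell=1}^{m}\sum_{i\in N} p_i^{\bv^\ell}\bigl(x_i^\ell \mid \bx^\ell_{[i-1]}\bigr) \;\ge\; \frac{1}{\alpha}\sum_{\ell=1}^{m}\Bigl(\bv^\ell(\ALG^\ell(\bv^\ell)) - \bv^\ell(\OPT(\bv^\ell,\feas^\ell_{\bx^\ell}))\Bigr),
\end{equation*}
which equals $\tfrac{1}{\alpha}(\bv(\ALG(\bv)) - \bv(\OPT(\bv, \feas_\bx)))$ by the decomposition identity. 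For Property~(b), fix any $\bx' \in \feas_\bx$; by construction of the product family $\bx'^{\,\ell} \in \feas^\ell_{\bx^\ell}$ for each $\ell$, so the per-problem upper bound applies coordinate-wise and summing yields
\begin{equation*}
\sum_{i\in N} p_i(x_i' \mid \bx_{[i-1]}) \;\le\; \beta\sum_{\ell=1}^{m} \bv^\ell(\OPT(\bv^\ell,\feas^\ell_{\bx^\ell})) \;=\; \beta\, \bv(\OPT(\bv,\feas_\bx)).
\end{equation*}

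The main conceptual obstacle, modest as it is, is verifying the decomposition identity; everything else is bookkeeping. A minor subtlety worth flagging is the indexing of players: the joint pricing rule uses $\bx_{[i-1]}$ from the joint order, and we want the per-problem conditioning $\bx^\ell_{[i-1]}$ to be the projection of $\bx_{[i-1]}$ onto coordinate $\ell$. This is automatic since $\bx_{[i-1]}$ just zeroes out entries $i, \dotsc, n$ on every coordinate, so the projection equals $\bx^\ell_{[i-1]}$ and each summand of the joint price is exactly the corresponding per-problem price. With this alignment, both inequalities decompose cleanly and the result follows.
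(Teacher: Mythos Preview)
The proposal is correct and follows essentially the same approach as the paper: define the joint exchange-compatible family as the product $\feas_\bx = \prod_{\ell} \feas^\ell_{\bx^\ell}$, then verify Conditions~(a) and~(b) by summing the per-problem balancedness inequalities and using additivity to collapse the sums via the decomposition identity $\bv(\OPT(\bv,\feas_\bx)) = \sum_\ell \bv^\ell(\OPT(\bv^\ell,\feas^\ell_{\bx^\ell}))$. If anything, your write-up is slightly more careful than the paper's, since you explicitly check exchange compatibility of the product family and address the alignment of $\bx_{[i-1]}$ with its projections $\bx^\ell_{[i-1]}$.
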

\begin{proof}
Set $\feas_\bx = (\prod_{i=1}^{m} \feas^\ell_{\bx^\ell})_{\bx \in X}$. We first verify Condition (\ref{conda}). We use the definition of the joint pricing rule, Condition (\ref{conda}) of the component pricing rules together with additivity across subproblems to conclude that
\begin{align*}
\sum_{i \in N} p_i(x_i \mid \bx_{[i-1]}) & = \sum_{i \in N} \sum_{\ell = 1}^{m} p_i^{\bv^\ell}(x_i^\ell \mid \bx_{[i-1]}^\ell) \\
& \geq \sum_{\ell=1}^{m} \frac{1}{\alpha} \cdot \left( \bv^{\ell}(\ALG^\ell(\bv^\ell)) -  \bv^{\ell}(\OPT(\bv^\ell, \feasxell)\right) \\
& = \frac{1}{\alpha} \cdot \left( \bv(\ALG(\bv)) -  \bv(\OPT(\bv, \feasx))\right).
\end{align*}

Next we verify Condition (\ref{condb}). We use the definition of the joint pricing rule, Condition (\ref{condb}) of the component pricing rules together with additivity across subproblems to conclude that
\begin{align*}
\sum_{i \in N} p_i(x'_i \mid \bx_{[i-1]}) &= \sum_{\ell = 1}^{m} \sum_{i \in N} p_i^{\bv^\ell}((x')_i^\ell \mid \bx_{[i-1]}^\ell)\\
& \leq \sum_{\ell = 1}^{m} \beta \cdot \bv^{\ell}(\OPT(\bv^\ell, \feasxell)) \\
& = \beta \cdot \bv(\OPT(\bv, \feasx)). \qedhere
\end{align*}
\end{proof}

We note that both Theorem \ref{thm:max} and Theorem \ref{thm:add} can be generalized so that they apply to weakly balanced pricing rules.

\section{Proof of Theorem \ref{thm:mphk}}
\label{app:cas}

In this appendix we establish our new prophet inequality results for combinatorial auctions with MPH-$k$ valuations. In particular, we establish the existence of $(1,1)$-balanced prices for XOS combinatorial auctions.

\paragraph{Combinatorial Auctions with MPH-$k$ Valuations}

The {\em maximum over positive hypergraphs} (MPH) hierarchy of valuations \cite{FeigeFIILS15} is an inclusive hierarchy (i.e., it is expressive enough to include all valuations), which subsumes many interesting classes of valuations as special cases.

To formalize this valuation class, we first need a few preliminaries.
A hypergraph representation $w$ of valuation function $v\colon 2^M \to \reals_{\geq 0}$ is a set function that satisfies $v(S) = \sum_{T \subseteq S} w(T)$. Any valuation function $v$ admits a unique hypergraph representation and vice versa.
A set $S$ such that $w(S) \neq 0$ is said to be a {\em hyperedge} of $w$.
Pictorially, the hypergraph representation can be thought as a weighted hypergraph, where every vertex is associated with an item in $M$, and the weight of each hyperedge $e\subseteq M$ is $w(e)$. Then the value of the function for any set $S\subseteq M$, is the total value of all hyperedges that are contained in $S$.
The {\em rank} of a hypergraph representation $w$ is the cardinality $k$ of the largest hyperedge.
The rank of $v$ is the rank of its corresponding $w$ and we refer to a valuation function $v$ with rank $k$ as a \emph{hypergraph-$k$} valuation. If the hypergraph representation of $v$ is non-negative, i.e. for any $S\subseteq M$, $w(S)\geq 0$, then we refer to function $v$ as a \emph{positive hypergraph-$k$} function (PH-$k$) \cite{AbrahamBDR12}.

\begin{definition}[Maximum Over Positive Hypergraph-$k$ (MPH-$k$) class \cite{FeigeFIILS15}]
A monotone valuation function $v:2^M\to \reals_{\geq 0}$ is {\em Maximum over Positive Hypergraph-$k$} (MPH-$k$) if it can be expressed as a maximum over a set of PH-$k$ functions.  That is, there exist PH-$k$ functions $\{v_{\ell}\}_{\ell\in\Ell}$ such that for every set $S \subseteq M$,
\begin{equation}
\textstyle{v(S) = \max_{\ell \in \Ell} v_{\ell}(S)},
\end{equation}
where $\Ell$ is an arbitrary index set.
\end{definition}

An important special case of MPH-$k$ are XOS valuations, which are defined as the maximum over additive functions, and therefore coincide with MPH-$1$.

\paragraph{Existential $O(k)$ Result}


Given an arbitrary allocation algorithm $\ALG$ and valuation profile $\bv$, write $\tilde{\bv}$ for the supporting Hypergraph-$k$ valuations for allocation $\ALG(\bv)$. Write $w_i$ for the hypergraph representation of $\tilde{v}_i$, so that $\tilde{v}_i(\ALG(\bv)) = \sum_{T \subseteq \ALG_i(\bv)}w_i(T)$.

Then, for each item $j$, we define
\begin{equation}
\label{eq:prices}
p^{\bv}(\{j\}) = \sum_{\substack{T \ni j\\T \subseteq \ALG_i(\bv)}} w_i(T).
\end{equation}
That is, item prices are determined by adding up the weights of each supporting hyperedges an item is contained in. These prices then extend linearly: For each set of goods $x$, set $p^{\bv}(x) = \sum_{j \in x}p^{\bv}(\{j\})$.  Finally, for each agent $i$ and allocation $x_i$, we will have $p_i^{\bv}(x_i \mid \bz) = p^{\bv}(x_i)$ whenever $x_i$ is disjoint from the sets in $\bz$ and $\infty$ otherwise.

\begin{theorem}
\label{thm.mphk.balanced}
The pricing rule defined in Equation \ref{eq:prices}, extended linearly to sets of items, is weakly $(1, 1, k-1)$-balanced with respect to an arbitrary allocation rule $\ALG$ and MPH-$k$ valuation profile $\vals$.
\end{theorem}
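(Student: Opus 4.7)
The plan is to choose the exchange-compatible family so that $\feasx$ consists of all feasible allocations using only items in $M \setminus U$, where $U := \bigcup_i x_i$; this is exchange-compatible since any $\bx' \in \feasx$ is item-disjoint from every $\bx_{-i}$. Write $\tilde{v}_i$ for the PH-$k$ supporting valuation of $v_i$ at $\ALG_i(\bv)$ with hypergraph representation $w_i$, so that $w_i \ge 0$, $\tilde{v}_i \le v_i$ pointwise, and $\tilde{v}_i(\ALG_i(\bv)) = v_i(\ALG_i(\bv))$. Swapping the order of summation between items $j$ and hyperedges $T \ni j$ in the definition of the item prices yields the identity
\[
\sum_i p^{\bv}(y_i) \;=\; \sum_i \sum_{T \subseteq \ALG_i(\bv)} |T \cap Y|\, w_i(T)
\]
valid for every allocation $\by$ with $Y := \bigcup_i y_i$. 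This identity, together with $w_i \ge 0$ and $|T| \le k$, drives both conditions.

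For Property~(a), the plan is to apply the identity with $\by = \bx$ and use $|T \cap U| \ge 1$ whenever $T \cap U \ne \emptyset$ to obtain
\[
\sum_i p^{\bv}(x_i) \;\ge\; \sum_i \sum_{T \subseteq \ALG_i(\bv)} w_i(T) \;-\; \sum_i \sum_{T \subseteq \ALG_i(\bv)\setminus U} w_i(T) \;=\; \bv(\ALG(\bv)) - \sum_i \tilde{v}_i(\ALG_i(\bv)\setminus U).
\]
The allocation with $y_i := \ALG_i(\bv)\setminus U$ lies in $\feasx$ (its parts are pairwise disjoint and disjoint from $U$), so $\sum_i \tilde{v}_i(y_i) \le \sum_i v_i(y_i) \le \bv(\OPT(\bv,\feasx))$, yielding $\alpha = 1$.

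For Property~(b), the key step is the pointwise combinatorial bound
\[
|T \cap U'| \;\le\; \mathbf{1}[T \subseteq U'] \;+\; (|T|-1)\cdot\mathbf{1}[T \cap U' \ne \emptyset],
\]
with $U' := \bigcup_i x'_i$; this holds by case analysis ($T\cap U' = \emptyset$, $T \subseteq U'$, or $T$ properly crossing the boundary of $U'$). Plugging this bound into the identity with $\by = \bx'$ splits the price sum into two pieces. The first piece is $\sum_i \sum_{T \subseteq \ALG_i(\bv) \cap U'} w_i(T) = \sum_i \tilde{v}_i(\ALG_i(\bv)\cap U')$, and since the allocation with $y_i := \ALG_i(\bv)\cap U'$ lies in $\feasx$, it is at most $\bv(\OPT(\bv,\feasx))$. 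The second piece is at most $(k-1)\sum_i \sum_{T \subseteq \ALG_i(\bv)} w_i(T) = (k-1)\,\bv(\ALG(\bv))$. Summing gives $\beta_1 = 1$ and $\beta_2 = k-1$.

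The main obstacle is securing the coefficient $\beta_1 = 1$ instead of $\beta_1 = k$ in Property~(b): a naive estimate $|T \cap U'| \le |T| \le k$ would charge each contained hyperedge $T \subseteq U'$ a full factor of $k$ against $\bv(\OPT(\bv,\feasx))$. The two-term decomposition is the crux: it charges a single base unit per hyperedge meeting $U'$ to $\bv(\OPT(\bv,\feasx))$ and sweeps the residual $|T \cap U'|-1 \le k-1$ units into $\bv(\ALG(\bv))$, an accounting that succeeds precisely because $\sum_i \sum_{T \subseteq \ALG_i(\bv)} w_i(T) = \bv(\ALG(\bv))$ by the defining property of the supporting hypergraph representation.
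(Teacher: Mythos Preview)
Your proof is correct and follows essentially the same plan as the paper: the same choice of $\feasx$, the same swap-of-summation identity $\sum_i p^{\bv}(y_i)=\sum_\ell\sum_{T\subseteq \ALG_\ell(\bv)}|T\cap Y|\,w_\ell(T)$, and the same witness $y_\ell=\ALG_\ell(\bv)\setminus U$ for Property~(a).

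The only substantive difference is in the split for Property~(b). The paper cases on whether $T$ meets the \emph{sold} set $U$: if $T\cap U\neq\emptyset$ then, because $U\cap U'=\emptyset$, one gets $|T\cap U'|\le |T|-1$; the $\beta_1$-term thus becomes $\sum_\ell \tilde v_\ell(\ALG_\ell(\bv)\setminus U)$. You instead use the purely combinatorial inequality $|T\cap U'|\le \mathbf 1[T\subseteq U']+(|T|-1)\mathbf 1[T\cap U'\neq\emptyset]$, which makes no reference to $U$ and yields the (pointwise smaller) $\beta_1$-term $\sum_\ell \tilde v_\ell(\ALG_\ell(\bv)\cap U')$; disjointness of $U$ and $U'$ is then invoked only to certify $(\ALG_\ell(\bv)\cap U')_\ell\in\feasx$. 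Both routes land on the same $(1,1,k-1)$ bound; your decomposition is arguably a touch cleaner since the counting step is decoupled from the feasibility argument.
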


\begin{proof}
We will show balancedness with respect to $\feasx = \{ \by \in \feas \colon (\bigcup_i y_i) \cap (\bigcup_i x_i) = \emptyset \}$. Observe that we can lower-bound the value of $\OPT(\bv, \feasx)$ by removing all items that are allocated by $\bx$ from the allocation $\ALG(\bv)$. This gives us
\begin{equation}
\bv(\OPT(\bv, \feasx)) \geq \sum_{\ell \in N} v_\ell(\ALG_\ell(\bv) \setminus (\bigcup_i x_i)) \geq \sum_\ell \sum_{\substack{T \subseteq \ALG_\ell(\bv)\\\forall i \colon T \cap x_i = \emptyset}}w_\ell(T).
\label{eq:mphkoptfeasx}
\end{equation}

Furthermore, note first that $p^{\bv}$ is a fixed pricing scheme, meaning that the price of an outcome does not change unless it becomes infeasible. Therefore, for every allocation $\by \in \feasx$, we have
\begin{align}
\sum_i p_i(y_i \mid \bx_{[i-1]}) & = \sum_i p_i(y_i \mid \bx) \nonumber\\
& = \sum_i \sum_{j \in y_i} p^{\bv}(\{j\}) \nonumber\\
& = \sum_i \sum_\ell \sum_{j \in \ALG_\ell(\bv) \cap y_i} p^{\bv}(\{j\})\nonumber\\
& = \sum_\ell \sum_i \sum_{j \in \ALG_\ell(\bv) \cap y_i} \sum_{\substack{T \ni j\\T \subseteq \ALG_\ell(\bv)}} w_\ell(T).
\label{eq:mphkprices}
\end{align}

For condition (a), by Equation~\eqref{eq:mphkprices},
\begin{align*}
\sum_i p_i^{\bv}(x_i \mid \bx_{[i-1]}) & \geq \sum_\ell \sum_{\substack{T \subseteq \ALG_\ell(\bv)\\\exists i \colon T \cap x_i \neq \emptyset}}w_\ell(T) \\
& = \bigg(\sum_\ell \sum_{T \subseteq \ALG_\ell(\bv)}w_\ell(T) - \sum_\ell \sum_{\substack{T \subseteq \ALG_\ell(\bv)\\\forall i \colon T \cap x_i = \emptyset}}w_\ell(T)\bigg) \\
& \geq (\bv(\ALG(\bv)) -  \bv(\OPT(\bv, \feasx))),
\end{align*}
where the first inequality follows by changing the order of summation over $j$ and $T$, and the second inequality is given in Equation~\ref{eq:mphkoptfeasx}.

For condition (b), note that for all $\allocs$ and all $\bx' \in \feasx$, by Equation~\eqref{eq:mphkprices}, after splitting the sum depending on whether the respective set $T$ intersects with any of the bundles in $\allocs$ or not
\begin{align*}
\sum_i p_i(x_i' \mid \bx_{[i-1]}) & = \sum_\ell \sum_i \sum_{j \in \ALG_\ell(\bv) \cap x_i'} \sum_{\substack{T \ni j\\T \subseteq \ALG_\ell(\bv) \\ T \cap(\bigcup_{i'} x_{i'}) = \emptyset}} w_\ell(T) + \sum_\ell \sum_i \sum_{j \in \ALG_\ell(\bv) \cap x_i'} \sum_{\substack{T \ni j\\T \subseteq \ALG_\ell(\bv) \\ T \cap(\bigcup_{i'} x_{i'}) \neq \emptyset}} w_\ell(T).
\end{align*}
Observe that in the first sum for a fixed set $T$, the term $w_\ell(T)$ occurs at most $\lvert T \rvert$ times. In the second sum, it can even occur only $\lvert T \rvert - 1$ times because the intersection of $\allocs$ and $\allocs'$ is empty but $\allocs$ intersects with $T$. By applying that $\lvert T \rvert \leq k$ whenever $w_\ell(T) > 0$, this gives us
\begin{align*}
\sum_i p_i(x_i' \mid \bx_{[i-1]})& \leq \sum_\ell \sum_{\substack{T \subseteq \ALG_\ell(\bv) \\ T \cap(\bigcup_{i'} x_{i'}) = \emptyset}} \lvert T \rvert w_\ell(T) + \sum_\ell \sum_{\substack{T \subseteq \ALG_\ell(\bv) \\ T \cap(\bigcup_{i'} x_{i'}) \neq \emptyset}} (\lvert T \rvert - 1) w_\ell(T) \\
& \leq k \sum_\ell \sum_{\substack{T \subseteq \ALG_\ell(\bv) \\ T \cap(\bigcup_{i'} x_{i'}) = \emptyset}} w_\ell(T) + (k-1) \sum_\ell \sum_{\substack{T \subseteq \ALG_\ell(\bv) \\ T \cap(\bigcup_{i'} x_{i'}) \neq \emptyset}} w_\ell(T) \\
& = \sum_\ell \sum_{\substack{T \subseteq \ALG_\ell(\bv) \\ T \cap(\bigcup_{i'} x_{i'}) = \emptyset}} w_\ell(T) + (k-1) \sum_\ell \sum_{T \subseteq \ALG_\ell(\bv)} w_\ell(T) \\
& \geq \bv(\OPT(\bv, \feasx)) + (k-1) \bv(\ALG(\bv)). \qedhere
\end{align*}
\end{proof}

Note that Theorem \ref{thm.mphk.balanced} when specialized to XOS valuations shows the existence of weakly $(1,1,0)$-balanced prices, or equivalently, $(1,1)$-balanced prices.


\paragraph{Computational $O(k)$ Result}

We now show how to obtain a \emph{polytime} $(4k-2)$-approximate price-based prophet inequality for MPH-$k$ valuations.
For this result we will assume access to the following kind of \emph{MPH-$k$ oracle}. Suppose that valuation function $v$ is MPH-$k$, with supporting PH-$k$ functions $\{v_{\ell}\}_{\ell \in L}$. The query for $v$ takes as input a set of items $S$, and returns a value oracle to access the PH-$k$ function $v_{\ell}$ for which $v(S) = v_{\ell}(S)$. That is, for every $T \subseteq M$, we can query the value of $v_{\ell}(T)$ in a second step.

%

The following linear problem, known as the {\em configuration LP} for combinatorial auctions, computes a fractional allocation that maximizes the social welfare among all fractional allocations.

\begin{eqnarray}
\max & & \sum_{i,S} \vali(S)\cdot \alloc_{_{i,S}} \nonumber\\
\mbox{s.t.} & & \sum_S \alloc_{_{i,S}} \leq 1 \mbox{ for every } i \in N \nonumber\\
& & \sum_{i, S: j \in S} \alloc_{_{i,S}} \leq 1 \mbox{ for every } j \in M \nonumber\\
& & \alloc_{_{i,S}} \in [0,1] \mbox{ for every } i \in N, S \subseteq M \nonumber
\end{eqnarray}

We extend the definition of $\feas$ to include all fractional allocations that fulfill the above LP; $\feasx$ is extended to all fractional allocations that use at most a fractional equivalent of $1 - \sum_{i, S: j \in S} \alloc_{_{i,S}}$ of every item $j \in M$. Our first observation will be that Theorem~\ref{thm.mphk.balanced} holds even for the fractional version of the MPH-$k$ combinatorial auction problem.  That is, if the set of feasible outcomes is extended to include all fractional allocations, then an appropriate extension of the prices described above remain weakly $(1, 1, k-1)$-balanced. In particular, given a fixed valuation profile $\bv$, we define prices based on the optimal LP solution $\bx^\ast$. For every agent $i$ and every bundle $S$, let $w_{i, S}$ be the PH-$k$ representation in the support that maximizes agent $i$'s valuation of $S$, which is $v_i(S) = \sum_{T \subseteq S} w_{i, S}(T)$. Then, price item $j$ as follows:
$p(\{j\}) = \sum_i \sum_S x^\ast_{i,S} \sum_{T: j \in T, T \subseteq S} w_{i, S}(T) = \sum_i \sum_S x^\ast_{i,S} (w_{i, S}(S) - w_{i, S}(S \setminus \{ j \}))$. This sum can be computed in polynomial time given oracle access as described above because $x^\ast_{i,S} \neq 0$ only for polynomially many $S$. Then, extend this pricing linearly to prices over sets of items, and to prices over fractional allocations by simple scaling, i.e., for every $\bx \in \feas$ and $\bx' \in \feasx$, we let $p(x_i' \mid \bx) = \sum_S x_{i, S}' \sum_{j \in S} p(\{j\})$.

It is well known that an optimal fractional solution to the configuration LP can be computed in polynomial time given access to demand queries (using demand queries to implement a separation oracle, as is standard).
Therefore, for the space of fractional allocations, prices that are weakly $(1, 1, k-1)$-balanced with respect to the optimal fractional solution can be computed in polynomial time, given access to demand and MPH-$k$ oracles.

We can therefore compute prices that give a $(4k-2)$-approximation to the optimal fractional allocation, in a posted price mechanism, where agents are free to purchase any fractional allocation at the posted prices.
This, however, seems unsatisfactory. After all, we do not wish to allow agents to purchase infeasible sets, and the analysis only applies if every agent can purchase a set in her demand correspondence. The following observation comes to our help: for every agent $i$, if all previous agents have selected integral outcomes from the posted price mechanism, then agent $i$ has a utility-maximizing outcome in their demand correspondence that is integral.  This is because any fractional allocation can be interpreted as a convex combination of integral allocations.  Since our approximation guarantee holds regardless of the demanded set chosen by each agent, it will hold even if we restrict agents to only select integral outcomes in the posted price mechanism. This means that we only need to price integral allocations and the analysis follows.
We conclude that the prices computed using the configuration LP, as described above, actually generate a $(4k-2)$ approximation for the (non-fractional) MPH-$k$ combinatorial auction problem.


\section{Proof of Theorem \ref{thm:sparse-pip}}
\label{app:sparse-pip}

In this appendix we prove our polytime price-based prophet inequalities for sparse linear packing programs. We consider programs with and without integrality constraints. That is, the possible outcomes for agent $i$ are either $[0, 1]$ (fractional solutions) or $\{0, 1\}$ (integral solutions). We assume that valuations are linear, i.e., $v_i(x_i) = v_i \cdot x_i$. In both cases, the feasibility constraints $\feas$ are given by a constraint matrix $\mathbf{A} \in \reals_{\geq 0}^{m \times n}$ such that $\bx \in \feas$ if and only if $\mathbf{A} \cdot \bx \leq \mathbf{c}$. Without loss of generality, let $c_j = 1$ for all $j \in [m]$.

We assume that $a_{j, i} \leq \frac{1}{2}$ for all $i,j$ and that the column sparsity is bounded by $d$, meaning that for each $i$ there are at most $d$ choices of $j$ such that $a_{j,i} > 0$.

\begin{theorem}
\label{thm:sparsepips}
For $d$-sparse linear packing programs with constraint matrix $\mathbf{A} \in \reals_{\geq 0}^{m \times n}$ such that $a_{j, i} \leq \frac{1}{2}$ for all $i,j$ and unit capacities, there exist weakly $(1,0,d)$- and $(2,0,d)$-balanced prices, for fractional and integral solutions, respectively, with respect to arbitrary allocation algorithms $\ALG$. The prices can be computed by running $\ALG$ once.
\end{theorem}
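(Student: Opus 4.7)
The plan is to extend the knapsack proof's per-unit pricing to sparse PIPs by attaching a charge to each constraint, where the charge reflects the $\ALG$-welfare of agents whose footprint touches that constraint. For each valuation profile $\bv$ I would define per-constraint charges and agent prices
\[
q_j^{\bv} := \sum_i v_i\, \ALG_i(\bv)\, \mathbbm{1}[a_{j,i} > 0], \qquad p_i^{\bv}(x_i) := x_i \sum_j a_{j,i}\, q_j^{\bv}.
\]
Both are static and computable from a single invocation of $\ALG$. For the fractional case take $\feas_\bx = \{\bx' \in \feas : \mathbf{A}(\bx + \bx') \leq \mathbf{1}\}$, which is exchange-compatible because $a_{j,i} x'_i \leq (\mathbf{A}\bx')_j$ gives $(\mathbf{A}\bx_{-i})_j + a_{j,i} x'_i \leq (\mathbf{A}(\bx+\bx'))_j \leq 1$. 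For the integer case this set is not exchange-compatible, so I would instead set $S(\bx) := \{j : (\mathbf{A}\bx)_j \geq 1/2\}$ and take $\feas_\bx = \{\bx' \in \feas : (\mathbf{A}\bx')_j = 0 \text{ for all } j \in S(\bx)\}$; the bound $a_{j,i} \leq 1/2$ together with the threshold keeps every single-coordinate exchange feasible.

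Condition (b) is then identical in both cases and is where column sparsity enters:
\[
\sum_i p_i^{\bv}(x'_i) = \sum_j q_j^{\bv}(\mathbf{A}\bx')_j \leq \sum_j q_j^{\bv} = \sum_i v_i \ALG_i(\bv)\,\bigl|\{j : a_{j,i} > 0\}\bigr| \leq d \cdot \bv(\ALG(\bv)),
\]
using $(\mathbf{A}\bx')_j \leq 1$ and $d$-column-sparsity. This delivers $\beta_1 = 0$ and $\beta_2 = d$.

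For condition (a) in the fractional case I would exhibit an explicit feasible witness $\by \in \feas_\bx$ by coordinate-wise scaling of $\ALG$,
\[
y_i := \ALG_i(\bv)\,\bigl(1 - \max_{j: a_{j,i} > 0}(\mathbf{A}\bx)_j\bigr),
\]
for which $(\mathbf{A}(\bx+\by))_j \leq 1$ is a direct check. Bounding a max by a sum and swapping the order of summation then yields
\[
\bv(\ALG) - \bv(\OPT(\bv, \feas_\bx)) \leq \bv(\ALG) - \bv(\by) = \sum_i v_i \ALG_i(\bv) \max_{j: a_{j,i}>0}(\mathbf{A}\bx)_j \leq \sum_j q_j^{\bv}(\mathbf{A}\bx)_j = \sum_i p_i^{\bv}(x_i),
\]
which is condition (a) with $\alpha = 1$. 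In the integer case $\by$ is no longer feasible, so I would take the witness to be $\ALG$ restricted to agents not touching $S(\bx)$. Writing $T(\bx) := \{i : \exists j \in S(\bx),\ a_{j,i} > 0\}$, this restricted allocation lies in $\feas_\bx$ and gives $\bv(\ALG) - \bv(\OPT(\bv, \feas_\bx)) \leq \sum_{i \in T(\bx)} v_i \ALG_i(\bv) \leq \sum_{j \in S(\bx)} q_j^{\bv}$, while $\sum_i p_i^{\bv}(x_i) \geq \sum_{j \in S(\bx)} q_j^{\bv}(\mathbf{A}\bx)_j \geq \tfrac{1}{2}\sum_{j \in S(\bx)} q_j^{\bv}$ by the saturation threshold; together these give $\alpha = 2$.

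The main obstacle is the tension in condition (a) between wanting $\feas_\bx$ large so that the scaled witness $\by$ remains exchange-compatible, and wanting it small so that a discrete restriction of $\ALG$ suffices in the integer case; this tension forces the two cases to use different $\feas_\bx$ and is what costs the factor of two in $\alpha$. The polynomial-time claim is automatic because the prices depend on $\bv$ only through $\ALG(\bv)$, and the additive $\epsilon$ in the theorem comes from the standard sample-approximation of $\E_{\btv}[p_i^{\btv}(\cdot)]$.
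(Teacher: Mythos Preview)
Your proposal is correct and matches the paper's proof essentially step for step: the same per-constraint charges $q_j^{\bv}=\rho_j$, the same prices, the same $\feas_\bx$ in both cases, the same scaled-$\ALG$ witness in the fractional case, and the same restricted-$\ALG$ witness in the integral case. Your Condition~(a) argument in the integral case is phrased via the saturated set $S(\bx)$ rather than the paper's pointwise inequality $z_i \geq x_i^*(1 - 2\max_j(\mathbf{A}\bx)_j)$, but the two are equivalent and yours is arguably cleaner; one small remark is that the $\epsilon$ you mention is not part of this theorem's statement (it enters only downstream when applying Theorem~\ref{thm:refined-extension} with sampling).
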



We will define static prices $p_i(x_i \mid \by)$ as follows. 
Let $\bx^\ast = \ALG(v)$. For each constraint $j$, we define a per-unit price $\rho_j$ by setting $\rho_j = \sum_{i \in N: a_{j, i} > 0} v_i x_i^\ast$. Now, we define $p_i(x_i \mid \by) = \sum_j a_{j, i} x_i \rho_j$ for every quantity $x_i$ that can feasibly be added to $\by$. The claim will now follow by the two lemmas below.

\begin{lemma}
For $\feas$ being all fractional solutions, the devised pricing scheme is weakly $(1, 0, d)$-balanced with respect to $\ALG$.
\end{lemma}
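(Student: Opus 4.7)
The plan is to instantiate the exchange-compatible family as the natural residual feasibility $\feasx = \{\bx' : \mathbf{A}(\bx+\bx') \le \mathbf{c}\}$ and then verify the two defining inequalities for this choice. First I would check exchange compatibility: for any $\bx' \in \feasx$ and any $i$, the profile $(x_i',\bx_{-i})$ satisfies $\sum_{k\neq i} a_{j,k}x_k + a_{j,i}x_i' \le \sum_k a_{j,k}(x_k+x_k') \le 1$, so it belongs to $\feas$. Note that this step actually does not need the assumption $a_{j,i} \le 1/2$; that assumption will only become important for the integral version in the second lemma. Throughout I will use the shorthand $s_j = \sum_i a_{j,i} x_i \in [0,1]$ for the capacity consumed along constraint $j$ by $\bx$, and $\bx^\ast = \ALG(\bv)$.

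For Condition (b), I would compute directly:
\[
\sum_i p_i(x_i' \mid \bx_{[i-1]}) = \sum_j \rho_j \sum_i a_{j,i} x_i' \le \sum_j \rho_j,
\]
using $\sum_i a_{j,i} x_i' \le 1$ for $\bx' \in \feasx$. Then I swap the order of summation in $\sum_j \rho_j = \sum_j \sum_{i : a_{j,i}>0} v_i x_i^\ast = \sum_i v_i x_i^\ast \cdot |\{j : a_{j,i}>0\}|$ and apply the column sparsity bound $|\{j : a_{j,i}>0\}| \le d$ to conclude $\sum_j \rho_j \le d\cdot \bv(\ALG(\bv))$. This yields $\beta_1 = 0$, $\beta_2 = d$.

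The main obstacle is Condition (a), where I need a lower bound on $\sum_j \rho_j s_j$ in terms of the welfare lost from $\ALG(\bv)$ by restricting to $\feasx$. My plan is to exhibit a specific residual allocation witnessing the upper bound on this loss. Define per-agent scaling factors $\lambda_i = \max\{s_j : a_{j,i} > 0\}$ (and $0$ if no such $j$ exists), and consider $y_i = (1-\lambda_i)x_i^\ast$. For each constraint $j$, only indices $i$ with $a_{j,i}>0$ contribute, and for those $1-\lambda_i \le 1-s_j$; therefore $\sum_i a_{j,i} y_i \le (1-s_j)\sum_i a_{j,i} x_i^\ast \le 1-s_j$, so $\by \in \feasx$. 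This gives
\[
\bv(\OPT(\bv,\feasx)) \ge \bv(\by) = \sum_i (1-\lambda_i) v_i x_i^\ast,
\]
hence $\bv(\ALG(\bv)) - \bv(\OPT(\bv,\feasx)) \le \sum_i \lambda_i v_i x_i^\ast$.

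To finish, I would rewrite the price sum by swapping orders:
\[
\sum_i p_i(x_i \mid \bx_{[i-1]}) = \sum_j \rho_j s_j = \sum_i v_i x_i^\ast \sum_{j : a_{j,i}>0} s_j \ge \sum_i v_i x_i^\ast \cdot \lambda_i,
\]
where the final inequality uses that a sum of nonnegative terms dominates the maximum. Combining with the preceding display gives Condition (a) with $\alpha = 1$. The computational claim is immediate, since the prices are defined from a single run of $\ALG$ together with the static sums $\rho_j$.
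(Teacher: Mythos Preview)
Your proof is correct and follows essentially the same approach as the paper: the same residual set $\feasx$, the same scaled allocation $y_i = (1-\lambda_i)x_i^\ast$ to witness Condition~(a), and the same direct sparsity computation for Condition~(b). If anything, your presentation of Condition~(a) via $\sum_{j:a_{j,i}>0} s_j \ge \lambda_i$ is a bit cleaner than the paper's intermediate pointwise claim.
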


\begin{proof}
Let $\feasx = \{ \bz \mid \mathbf{A}(\bx + \bz) \leq \mathbf{1} \}$. To verify Condition (a), we derive a lower bound on $\bv(\OPT(\bv, \feasx))$ as follows. Given $\bx$, let $\bz$ be defined by setting $z_i = x_i^\ast (1 - \max_{j: a_{j, i} > 0} \sum_{i'} a_{j, i'} x_{i'})$. We have $\bz \in \feasx$ because for every constraint $j$ we have 
\[
	\sum_i a_{j, i} x_i + \sum_i a_{j, i} z_i \leq \sum_i a_{j, i} x_i + \sum_i a_{j, i} x_i^\ast (1 - \sum_{i'} a_{j, i'} x_{i'}) \leq 1.
\]
Note that furthermore, by this definition, $x_{i'}^\ast (\sum_i a_{j, i} x_i) \geq x_{i'}^\ast - z_{i'}$ for all $i'$ and all $j$.

Now, it follows that
\begin{align*}
	\sum_i p_i(x_i \mid x_{[i-1]})
	&= \sum_i \sum_j a_{j, i} x_i \rho_j\\
	&= \sum_i \sum_j a_{j, i} x_i \sum_{i': a_{j, i'} > 0} v_{i'} x_{i'}^\ast\\
	&= \sum_j \sum_{i': a_{j, i'} > 0} v_{i'} x_{i'}^\ast (\sum_i a_{j, i} x_i)\\
	&\geq \sum_j \sum_{i': a_{j, i'} > 0} v_{i'} (x_{i'}^\ast - z_{i'})\\
	&\geq \sum_{i'} v_{i'} (x_{i'}^\ast - z_{i'})\\
	&\geq \bv(\ALG(\bv)) - \bv(\OPT(\bv, \feasx)).
\end{align*}

For Condition (b), we simply observe that
\begin{align*}
	\sum_i p_i(x_i' \mid x_{[i-1]}) 
	&= \sum_i \sum_j a_{j, i} x_i' \rho_j \displaybreak[0]\\
	&= \sum_j \rho_j \sum_j a_{j, i} x_i' \displaybreak[0]\\
	&\leq \sum_j \rho_j \displaybreak[0]\\
	&= \sum_j \sum_{i \in N: a_{j, i} > 0} v_i x_i^\ast \displaybreak[0]\\
	&= \sum_i \sum_{j: a_{j, i} > 0} v_i x_i^\ast \displaybreak[0]\\
	&\leq d \sum_i v_i x_i^\ast = d \cdot \bv(\ALG(\bv)). \qedhere
\end{align*}
\end{proof}

\begin{lemma}\label{lem:pip-integral}
For $\feas$ being all integral solutions, the devised pricing scheme is weakly $(2, 0, d)$-balanced with respect to $\ALG$.
\end{lemma}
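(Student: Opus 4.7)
My plan is to port the fractional proof to the integral setting, where the only ingredient that genuinely needs to change is the choice of $\feasx$. The fractional definition $\feasx = \{\bz : \mathbf{A}(\bx+\bz) \le \mathbf{1}\}$ may contain no useful $0/1$ vector once $\bx$ nearly fills each constraint. Instead, I would exploit the hypothesis $a_{j,i} \le 1/2$ in a binary way: a constraint $j$ that is less than half full can safely absorb one additional index, so I will blacklist an index $i$ exactly when it touches a constraint that $\bx$ has already filled past $1/2$. Formally, define the \emph{saturated} constraints $J(\bx) = \{j : \sum_i a_{j,i} x_i > 1/2\}$ and the \emph{blocked} indices $I(\bx) = \{i : \exists j \in J(\bx),\, a_{j,i} > 0\}$, and set
\[
\feasx \;=\; \bigl\{\by \in \{0,1\}^n \cap \feas : y_i = 0 \text{ for all } i \in I(\bx)\bigr\}.
\]

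The first step would be to verify exchange-compatibility, i.e., $(y_i, \bx_{-i}) \in \feas$ for every $\by \in \feasx$ and every $i$. This is immediate when $y_i = 0$; when $y_i = 1$, the condition $i \notin I(\bx)$ forces $\sum_{i'} a_{j,i'} x_{i'} \le 1/2$ for every $j$ with $a_{j,i} > 0$, and combining this with $a_{j,i} \le 1/2$ gives $\sum_{i' \neq i} a_{j,i'} x_{i'} + a_{j,i} \le 1$. The same bound simultaneously certifies that the prices $p_i(x'_i \mid \bx_{[i-1]})$ are finite for every $\bx' \in \feasx$, which is what keeps Condition~(b) nontrivial. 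Condition~(b) itself then comes essentially for free, because the prices are static: the chain of inequalities from the fractional proof, namely $\sum_i p_i(x'_i \mid \bx_{[i-1]}) = \sum_j \rho_j \sum_i a_{j,i} x'_i \le \sum_j \rho_j \le d \cdot \bv(\ALG(\bv))$, carries over verbatim using only $\bx' \in \feas$ and $d$-sparsity.

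The real work, and the one place the factor of $2$ enters, is Condition~(a), which I would split into two ingredients. First, to lower-bound $\bv(\OPT(\bv, \feasx))$, take the test allocation $\bz^*$ equal to $\bx^* = \ALG(\bv)$ on $N \setminus I(\bx)$ and $0$ on $I(\bx)$; its constraint sums are dominated by those of $\bx^*$, so $\bz^* \in \feasx$. Rearranging,
\[
\bv(\ALG(\bv)) - \bv(\OPT(\bv, \feasx)) \;\le\; \sum_{i \in I(\bx)} v_i x^*_i \;\le\; \sum_{j \in J(\bx)} \rho_j,
\]
where the second inequality uses that each $i \in I(\bx)$ witnesses some $j \in J(\bx)$ with $a_{j,i} > 0$, and hence contributes $v_i x^*_i$ to $\rho_j$ for that $j$. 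Second, swapping the order of summation in the price sum yields $\sum_i p_i(x_i \mid \bx_{[i-1]}) = \sum_j \rho_j \sum_i a_{j,i} x_i \ge \tfrac{1}{2} \sum_{j \in J(\bx)} \rho_j$ directly from the definition of $J(\bx)$. Chaining these two bounds delivers $\sum_i p_i(x_i \mid \bx_{[i-1]}) \ge \tfrac{1}{2}\bigl(\bv(\ALG(\bv)) - \bv(\OPT(\bv, \feasx))\bigr)$, which is exactly Condition~(a) with $\alpha = 2$. I expect no subtleties beyond picking $\feasx$ and $\bz^*$ carefully; the rest is bookkeeping that mirrors the fractional case.
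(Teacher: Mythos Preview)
Your proposal is correct and follows essentially the same approach as the paper: your set $\feasx$ (blocked indices $I(\bx)$ from saturated constraints $J(\bx)$) coincides with the paper's $\{\by : \mathbf{A}\by \le \mathbf{c}\}$, your test allocation $\bz^*$ is exactly the paper's $\bz$, and Condition~(b) is proved identically. The only cosmetic difference is in Condition~(a): the paper routes through the pointwise inequality $x_{i'}^*\sum_i a_{j,i} x_i \ge \tfrac{1}{2}(x_{i'}^* - z_{i'})$ before summing, whereas you bound $\sum_{j \in J(\bx)} \rho_j$ directly on both sides, which is arguably cleaner but amounts to the same estimate.
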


\begin{proof}
To define $\feasx$, let $\mathbf{c}$ be the vector such that $c_j = 1$ if $\sum_i a_{j, i} x_i \leq \frac{1}{2}$ and $c_j = 0$ otherwise. That is, $c_j = 1$ if and only if constraint $j$ still has capacity at least $\frac{1}{2}$ after adding $\bx$. Let $\feasx$ be the set of integral solutions $\by$ that fulfill 
$\mathbf{A} \by \leq \mathbf{c}$.

Let $\bz$ be defined such that $z_i = x_i^\ast$ if $\sum_{i'} a_{j, i'} x_{i'} \leq \frac{1}{2}$ for all $j$ with $a_{j, i} > 0$ and let $z_i = 0$ otherwise. By this definition, we have $z_i \geq x_i^\ast (1 - 2 \max_{j: a_{j, i} > 0} \sum_{i'} a_{j, i'} x_{i'})$. For this reason, $x_{i'}^\ast (\sum_i a_{j, i} x_i) \geq \frac{1}{2} (x_{i'}^\ast - z_{i'})$ for all $i'$ and all $j$.

From here on, we can follow the exact same calculations as above. For Condition (a), we have 
\begin{align*}
	\sum_i p_i(x_i \mid x_{[i-1]}) 
	&= \sum_i \sum_j a_{j, i} x_i \rho_j \displaybreak[0]\\
	&= \sum_i \sum_j a_{j, i} x_i \sum_{i': a_{j, i'} > 0} v_{i'} x_{i'}^\ast \displaybreak[0]\\
	&= \sum_j \sum_{i': a_{j, i'} > 0} v_{i'} x_{i'}^\ast (\sum_i a_{j, i} x_i) \displaybreak[0]\\
	&\geq \sum_j \sum_{i': a_{j, i'} > 0} v_{i'} \frac{1}{2} (x_{i'}^\ast - z_{i'}) \displaybreak[0]\\
	&\geq \frac{1}{2} \sum_{i'} v_{i'} (x_{i'}^\ast - z_{i'}) \displaybreak[0]\\
	&\geq \frac{1}{2} \big(\bv(\ALG(\bv)) - \bv(\OPT(\bv, \feasx))\big).
\end{align*}

For Condition (b), we observe that again 
\begin{align*}
	\sum_i p_i(x_i' \mid x_{[i-1]})
	&= \sum_i \sum_j a_{j, i} x_i' \rho_j \displaybreak[0]\\
	&= \sum_j \rho_j \sum_j a_{j, i} x_i' \leq \sum_j \rho_j \displaybreak[0]\\
	&= \sum_j \sum_{i \in N: a_{j, i} > 0} v_i x_i^\ast \displaybreak[0]\\
	&= \sum_i \sum_{j: a_{j, i} > 0} v_i x_i^\ast \displaybreak[0]\\
	&\leq d \sum_i v_i x_i^\ast = d \cdot \bv(\ALG(\bv)). \qedhere
\end{align*}
\end{proof}

\paragraph{Pricing Integral Problems based on Fractional Solutions}
The way we described the pricing schemes above was to use an offline allocation algorithm $\ALG$ for the respective problem. This algorithm has to solve an NP-hard problem and as we show any approximation guarantee is preserved in the process. However, it is also possible to use the respective optimal fractional solution instead of $\ALG$. The pricing schemes defined this way then achieve the described approximation guarantees with respect to the fractional optimum. To show this result, we have to slightly extend our framework and allow $\feasx$ to contain distributions over outcome profiles that are exchange compatible.

Specifically, for the integral part of Theorem~\ref{thm:sparsepips} we would define $\feasx$ as the set of distributions such that the \emph{expected} consumption in the $j$th constraint is at most $1$ if $\sum_i a_{j, i} x_i \leq \frac{1}{2}$ and $0$ otherwise. Note that this extends the definition of $\feasx$ in the proof of Lemma~\ref{lem:pip-integral} to distributions. Following the steps above, $\bz$ is now randomized based on the optimal fractional solution $\bx^\ast$. We let $z_i = 1$ with probability $x_i^\ast$ if $\sum_{i'} a_{j, i'} x_{i'} \leq \frac{1}{2}$ for all $j$ with $a_{j, i} > 0$ and let $z_i = 0$ otherwise. In the remainder, $z_{i'}$ is replaced by its expectation.

\section{Proof of Theorem \ref{thm:submodular-matroid}}
\label{app:matroid}

We consider a matroid feasibility constraint
$\mathcal{M} = (E,\mathcal{I})$, where $E$ is a ground set of elements and $\mathcal{I} \subseteq 2^E$ is a family of feasible subsets.  Set $E$ is partitioned into subsets $E_1, \dotsc, E_n$, corresponding to agents, and for each $i$, the outcome space $X_i$ contains all subsets of $E_i$. One can think of $E_i$ as the set of elements from which agent $i$ can choose. An allocation $\bx$ is feasible if and only if $\bigcup_i x_i \in \mathcal{I}$; that is, if the chosen elements form an independent set.\footnote{Kleinberg and Weinberg~\cite{KleinbergW12} also provide bounds for a multi-dimensional matroid setting.  Their result, which has implications for revenue and welfare, applies when buyers have unit-demand preferences with independent values across elements.  Our approach is different, and provides a welfare bound for a broader class of valuations.}
Given a valuation profile $\bv$, let $\OPT(\bv) \in \mathcal{I}$ denote a social-welfare maximizing allocation.  Furthermore, given $S \in \mathcal{I}$, let $\OPT(\bv \mid  S)$ denote the set $T \in \mathcal{I}$ maximizing $\bv(T)$ such that $S \cap T = \emptyset$ and $S \cup T \in \mathcal{I}$.
Theorem \ref{thm:multidimmatroid} asserts the existence of balanced prices for matroid settings with additive, submodular, and XOS valuations.

\begin{theorem}
\label{thm:multidimmatroid}
There exist prices for multi-dimensional matroid settings in which agents have additive, submodular, or XOS valuations that are $(1,1)$-balanced with respect to $\OPT$.
\end{theorem}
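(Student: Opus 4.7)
The plan is to establish the theorem in three steps: first for additive valuations, then lifting to XOS valuations via the closure-under-maximum composition result (Theorem~\ref{thm:max}), and finally using the fact that every submodular valuation is XOS (Lehmann--Lehmann--Nisan) to cover the submodular case.

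For the additive case, I would use a dynamic pricing rule that mirrors Example~\ref{ex:matroid}:
\[
p_i^{\bv}(x_i \mid \by) = \bv(\OPT(\bv \mid \by)) - \bv(\OPT(\bv \mid \by \cup x_i)),
\]
(set to $\infty$ whenever $\by \cup x_i \notin \mathcal{I}$), together with the natural exchange-compatible family $\feas_{\bx} = \{\bx' : \bx \cup \bx' \in \mathcal{I}\}$. Property~(a) of Definition~\ref{def:ab} with $\ALG = \OPT$ then follows by a direct telescoping argument: $\sum_{i} [\bv(\OPT(\bv \mid \bx_{[i-1]})) - \bv(\OPT(\bv \mid \bx_{[i]}))] = \bv(\OPT(\bv)) - \bv(\OPT(\bv \mid \bx))$, which gives (a) with equality.

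The main obstacle is Property~(b): showing
\[
\sum_i \Bigl[\bv(\OPT(\bv \mid \bx_{[i-1]})) - \bv(\OPT(\bv \mid \bx_{[i-1]} \cup x'_i))\Bigr] \le \bv(\OPT(\bv \mid \bx))
\]
for every $\bx' \in \feas_{\bx}$. Here I would invoke the Rota strong base exchange theorem (\cite[Lemma 2.7]{LeeSV10}). The two independent sets $\bx \cup \OPT(\bv \mid \bx)$ and $\bx \cup \bigcup_i x'_i$ (extended to bases of $\mathcal{M}$ if necessary) admit a bijection on their symmetric difference under which any subset swap preserves independence. This bijection assigns, for each $i$, distinct elements of $\OPT(\bv \mid \bx)$ as witnesses for the elements of $x'_i$, and the additivity of $\bv$ together with the swap-feasibility lets me upper bound the per-step drop in residual welfare by the value of the matched elements. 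Summing over $i$ then yields the desired bound by $\bv(\OPT(\bv \mid \bx))$.

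For the XOS case, I would apply Theorem~\ref{thm:max}. Given a XOS profile $\bv$, let $\btv$ be a profile of additive clauses supporting $\OPT(\bv)$, so that $\btv_i \le v_i$ pointwise and $\btv_i(\OPT_i(\bv)) = v_i(\OPT_i(\bv))$; such a profile exists by definition of XOS. The additive case yields $(1,1)$-balanced prices $\bp^{\btv}$ with respect to $\btv$ and $\OPT$, and since $\OPT$ is a consistent allocation rule (as verified in Appendix~\ref{app:composition}), Theorem~\ref{thm:max} implies that $\bp^{\btv}$ is also $(1,1)$-balanced for $\bv$. The submodular case then follows because every submodular valuation admits an XOS representation. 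The hard part, as highlighted, is the clean application of Rota's strong base exchange to verify Property~(b) in the additive step; everything else is a bookkeeping argument or a direct invocation of the composition machinery.
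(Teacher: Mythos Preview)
Your proposal is correct and follows essentially the same route as the paper: the identical dynamic pricing rule, the telescoping argument for Condition~(a), the Rota exchange theorem \cite[Lemma 2.7]{LeeSV10} for Condition~(b), and Theorem~\ref{thm:max} to lift from additive to XOS (hence submodular). The one step you gloss over is that the left-hand side of Condition~(b) involves prices at $\bx_{[i-1]}$ while your Rota-exchange setup lives in the contraction by $\bx$; the paper bridges this via an explicit monotonicity lemma (prices are non-decreasing in the partial allocation, which follows from submodularity of $U \mapsto \bv(\OPT(\bv \mid U))$), so you should state and use that reduction before invoking Rota exchange.
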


As a special case, we obtain $(1,1)$-balanced prices for the single-dimensional matroid setting of \cite{KleinbergW12}, where each agent controls exactly one element of the ground set.

We prove the theorem for additive valuations. 
That is, there are non-negative numbers $v_{i, j}$ such that $v_i(x_i) = \sum_{j \in x_i} v_{i, j}$.
The proof for submodular and XOS Valuations then follows directly from
Theorem~\ref{thm:max} (composition theorem: closure under maximum).

\paragraph{Additive Valuations}

Given $S \in \mathcal{I}$, $i \in N$, and a valuation profile $\bv$, define
\begin{equation}
\label{eq.prices.matroid}
p_i^\bv(x_i \mid \by) = \begin{cases}
\bv(\OPT(\bv \mid \bigcup_j y_j)) - \bv(\OPT(\bv \mid (\bigcup_j y_j \cup x_i))) & \text{, if $(\bigcup_j y_j \cup x_i) \in \mathcal{I}$} \\
\infty & \text{, otherwise.}
\end{cases}
\end{equation}

The following lemma shows that the prices in Equation \ref{eq.prices.matroid} are monotonically increasing.
We then prove in Theorem \ref{thm.matroid.balanced} that these prices are $(1,1)$-balanced.

\begin{lemma}
\label{lemm:mono}
Consider the pricing rule $p^\bv$. For an arbitrary profile $\bv$, feasible outcome profiles $\by, \by' \in \feas$ such that $y_j \subseteq y'_j$ for all $j$, agent $i$, and allocation $x_i$,
\[
	p^\bv_i(x_i \mid \by) \leq p_i^\bv (x_i \mid \by').
\]
\end{lemma}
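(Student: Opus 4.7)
My plan is to reduce the monotonicity claim to submodularity of the set function $f(T) := \bv(\OPT(\bv\mid T))$ on the independent sets of $\mathcal{M}$, and then to prove this submodularity via the LP (weighted-rank-function) representation of contracted matroids.

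First, if $\bigcup_j y'_j \cup x_i \notin \mathcal{I}$ then $p^\bv_i(x_i\mid \by') = \infty$ and the inequality is immediate; so assume this union is independent. Write $S := \bigcup_j y_j$, $S' := \bigcup_j y'_j$, and observe that $S \subseteq S'$ and that all four sets $S,\; S',\; S\cup x_i,\; S'\cup x_i$ are independent. The desired inequality
\[
 f(S) - f(S\cup x_i) \;\leq\; f(S') - f(S'\cup x_i)
\]
rearranges to $f(S\cup x_i) + f(S') \geq f(S) + f(S'\cup x_i)$. Setting $A := S \cup x_i$ and $B := S'$, we have $A\cap B = S$ and $A\cup B = S'\cup x_i$, so this is exactly the statement that $f$ is submodular at the pair $(A,B)$.

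The second step is to prove $f$ submodular on independent sets. Since $\bv$ is additive with nonnegative weights (we are in the additive case of Theorem~\ref{thm:multidimmatroid}), I would invoke the standard weighted-rank identity. Enumerate $E$ in weight-decreasing order $e_1,\ldots,e_n$ with $w_{n+1} := 0$, and let $E_i := \{e_1,\ldots,e_i\}$. Applying the identity $\max\{w(I) : I \text{ indep in }\mathcal{N}\} = \sum_i (w_i - w_{i+1})\, r_\mathcal{N}(E_i\cap E(\mathcal{N}))$ to the contracted matroid $\mathcal{N} = \mathcal{M}/T$, and using $r_{\mathcal{M}/T}(X) = r(X\cup T) - |T|$ for independent $T$, gives
\[
 f(T) \;=\; \sum_{i=1}^{n} (w_i - w_{i+1})\bigl[r(E_i\cup T) - |T|\bigr] \;=\; \hat f(T) - w_1\,|T|,
\]
where $\hat f(T) := \sum_i (w_i - w_{i+1})\, r(E_i \cup T)$ and we collapsed $\sum_i(w_i - w_{i+1}) = w_1$. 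For each fixed $i$ the map $T \mapsto r(E_i\cup T)$ is submodular (a standard consequence of submodularity of the matroid rank function), the coefficients $w_i - w_{i+1}\ge 0$ preserve submodularity, so $\hat f$ is submodular; the correction $T\mapsto w_1|T|$ is modular on the boolean lattice; therefore $f$ is submodular.

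Combining the two steps yields the lemma. The main obstacle I anticipate is making the weighted-rank identity crisp in the contracted setting: the most natural sort is over $E\setminus T$, which depends on $T$ and is awkward for comparing $f$ at different $T$. The key trick is that the sublevel sets satisfy $(\tilde E_\lambda\setminus T)\cup T = \tilde E_\lambda\cup T$ with $\tilde E_\lambda := \{e: w_e\ge \lambda\}$, which lets one rewrite the representation using the $T$-independent prefixes $E_i$ at the cost of the additive correction $-|T|$. Once that identity is in hand, the submodularity argument is a one-line combination of submodularity of $r$ and modularity of $|\cdot|$.
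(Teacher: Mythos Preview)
Your overall approach---reduce to submodularity of $f(T) := \bv(\OPT(\bv \mid T))$---is exactly what the paper does; the paper's proof is a two-line appeal to this submodularity, citing an external lemma for the submodularity itself, whereas you supply a self-contained proof via the weighted-rank decomposition. Your argument that $f$ is submodular is correct and clean: the identity $f(T) = \sum_i (w_i - w_{i+1})\, r(E_i \cup T) - w_1\lvert T\rvert$ on independent $T$ is valid, each map $T \mapsto r(E_i \cup T)$ is submodular by submodularity of the rank function, and subtracting the modular term $w_1\lvert T\rvert$ preserves submodularity.

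There is, however, a gap in your reduction step. You assert $A \cap B = S$ for $A = S \cup x_i$, $B = S'$, but this needs $x_i \cap (S' \setminus S) = \emptyset$, which is not among the stated hypotheses. If some element of $x_i$ lies in $S' \setminus S$ then $A \cap B \supsetneq S$, and since $f$ is non-increasing the submodularity inequality $f(A)+f(B)\geq f(A\cap B)+f(A\cup B)$ is strictly weaker than what you need. Concretely: take a rank-$1$ uniform matroid on $\{a,b\}$ with $w_b > w_a > 0$, a single agent controlling both elements, $\by=(\emptyset)$, $\by'=(\{a\})$, $x_1=\{a\}$; then $p_1(x_1\mid\by)=w_b>0=p_1(x_1\mid\by')$, so the inequality fails. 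In the paper's only use of the lemma (the proof of Theorem~\ref{thm.matroid.balanced}) one has $x'_i$ disjoint from $\bigcup_j x_j$, so the issue does not bite there; to close your argument you should add this disjointness hypothesis. The paper's own one-line appeal to submodularity glosses over the same point.
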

\begin{proof}
Let $S = \bigcup_j y_j$, $T = \bigcup_j y_j'$. If $S \cup x_i \not\in \mathcal{I}$, then $T \cup x_i \not\in \mathcal{I}$, and so $p^\bv_i(x_i \mid \by) = p^\bv_i(x_i \mid \by') = \infty$. Otherwise,
\begin{align*}
	p^\bv_i(x_i \mid \by)
	&= \bv(\OPT(\bv \mid S)) - \bv(\OPT(\bv \mid (S \cup x_i))) \\
	&\leq \bv(\OPT(\bv \mid T)) - \bv(\OPT(\bv \mid (T \cup x_i))) \\
	&\leq p^\bv_i(x_i \mid \by'),
\end{align*}
where the equation follows by the definition of $p^\bv_i(x_i \mid \by)$, and the inequality follows from the submodularity of the function $f(U) = \bv(\OPT(\bv \mid U))$ (cf.~\cite[Lemma 3]{KlWe12}).
\end{proof}

\begin{theorem}
\label{thm.matroid.balanced}
The pricing rule defined in Equation \ref{eq.prices.matroid} is $(1,1)$-balanced with respect to the optimal allocation rule $\OPT$. 
\end{theorem}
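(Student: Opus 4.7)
I plan to take $\feasx = \{\by \in \feas : \bigcup_j y_j \cap \bigcup_j x_j = \emptyset \text{ and } \bigcup_j y_j \cup \bigcup_j x_j \in \mathcal{I}\}$, which is the natural exchange-compatible family (the allocations still available in the contracted matroid $\mathcal{M}/S$, with $S = \bigcup_j x_j$). Under this choice, $\bv(\OPT(\bv, \feasx)) = \bv(\OPT(\bv \mid S))$. I will write $f(T) = \bv(\OPT(\bv \mid T))$ for brevity; this is exactly the submodular function appearing in the proof of Lemma~\ref{lemm:mono}. Condition (a) will follow from a telescoping argument; condition (b) will reduce, via the monotonicity of prices, to a matroid exchange inequality that I will settle using the Rota exchange theorem (Lee--Sviridenko--Vondr\'ak, Lemma~2.7).

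\textbf{Condition (a) via telescoping.} Fix $\bx \in \feas$, linearly order the elements of $\bigcup_i x_i$ so that all elements of $x_1$ come before those of $x_2$, etc., and expand each $p_i(x_i \mid \bx_{[i-1]})$ into a telescoping sum of single-element marginals $f(T_{\ell-1}) - f(T_\ell)$, where $T_\ell$ is the union of $\bx_{[i-1]}$ with the first $\ell$ elements of $x_i$. Summed across all $i$, these marginals collapse into $f(\emptyset) - f(S) = \bv(\OPT(\bv)) - \bv(\OPT(\bv, \feasx))$, giving condition (a) with $\alpha = 1$ and in fact with equality.

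\textbf{Condition (b) via monotonicity and Rota exchange.} Fix any $\bx' \in \feasx$. By Lemma~\ref{lemm:mono}, $p_i(x_i' \mid \bx_{[i-1]}) \le p_i(x_i' \mid \bx) = f(S) - f(S \cup x_i')$, so it suffices to bound $\sum_i [f(S) - f(S \cup x_i')] \leq f(S)$. Let $B = \OPT(\bv \mid S)$ (a maximum-weight base of $\mathcal{M}/S$) and $U = \bigcup_i x_i'$; since $\bx' \in \feasx$, $S \cup U$ is independent, so $U$ extends to some base $B^\ast$ of $\mathcal{M}/S$. Rota's exchange theorem applied to the partition $\{x_1', \dots, x_n', B^\ast \setminus U\}$ of $B^\ast$ yields a partition $\{B_1, \dots, B_n, \tilde B\}$ of $B$ with $|B_i| = |x_i'|$ such that $(B \setminus B_i) \cup x_i'$ is a base of $\mathcal{M}/S$ for every $i$. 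In particular $B \setminus B_i$ is independent in $\mathcal{M}/(S \cup x_i')$, hence $f(S \cup x_i') \ge \bv(B) - \bv(B_i) = f(S) - \bv(B_i)$. Summing over $i$ and using disjointness of the $B_i \subseteq B$,
\[
\sum_i \bigl[f(S) - f(S \cup x_i')\bigr] \;\le\; \sum_i \bv(B_i) \;\le\; \bv(B) \;=\; \bv(\OPT(\bv, \feasx)),
\]
which is exactly condition (b) with $\beta = 1$. The main obstacle is this last step: one has to be careful to invoke a version of Rota exchange that gives \emph{disjoint} replacement sets $B_i$ of the correct sizes inside a single base $B$, which is precisely the content of the LSV lemma cited in the discussion of Theorem~\ref{thm:submodular-matroid}; a symmetric-exchange argument applied one agent at a time would only give $B_i$'s that could overlap across $i$, and hence would fail to sum to $\bv(B)$.
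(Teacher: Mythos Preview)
Your proposal is correct and follows essentially the same approach as the paper: the same choice of $\feasx$, a telescoping argument for condition~(a), and monotonicity (Lemma~\ref{lemm:mono}) followed by the Rota exchange lemma of Lee--Sviridenko--Vondr\'ak for condition~(b). The only cosmetic differences are that the paper telescopes directly over the agents (since $p_i(x_i\mid\bx_{[i-1]}) = f(\bigcup_{j<i}x_j)-f(\bigcup_{j\le i}x_j)$ by definition, so the element-level refinement is unnecessary), and it invokes the Rota exchange lemma without first extending $U$ to a base---your extension to $B^\ast$ is a slightly more explicit way of setting up the same citation.
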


\begin{proof}
We define $\feasx$ as the set of all outcome profiles $\by \in E_1 \times \ldots \times E_n$ such $\left(\bigcup_i y_i\right) \cap \left(\bigcup_i x_i\right) = \emptyset$ and $\left(\bigcup_i y_i\right) \cup \left(\bigcup_i x_i\right) \in \mathcal{I}$. In other words, we consider the matroid contracted by the set $\bigcup_i x_i$.

We first show that Condition \eqref{conda} holds for $\alpha = 1$.
For every agent $i \in N$ and every $\bx \in \feas$ by a telescoping-sum argument,
\begin{align*}
\sum_i p_i^\bv(x_i \mid \bx_{[i-1]}) & = \sum_i \left[ \bv(\OPT(\bv \mid \bigcup_{j=1}^{i-1} x_j)) - \bv(\OPT(\bv \mid \bigcup_{j=1}^i x_j)) \right] \\
& = \bv(\OPT(\bv)) - \bv(\OPT(\bv \mid \bigcup_j x_j)) \\
& = \bv(\OPT(\bv)) - \bv(\OPT(\bv, \feasx)).
\end{align*}

We next show that the second condition holds for $\beta = 1$.  Consider some arbitrary $\bx \in \feas$, $\bx' \in \feasx$.  Let $S = \bigcup_i x_i$. Note that $\OPT(\bv \mid S)$ is precisely a maximum-weight basis of the matroid contracted with $S$.  By the generalized Rota exchange theorem~\cite[Lemma 2.7]{LeeSV10}, for each $i$ there exists some set $R_i \subseteq \OPT(\bv \mid S)$ (which may not be contained in $E_i$) such that each element of $\OPT(\bv \mid S)$ appears in exactly one $R_i$, and $(\OPT(\bv \mid S) \setminus R_i) \cup x'_i$ is an independent set in the matroid after contracting $S$.  We therefore have
\begin{align*}
	\sum_{i \in N} p_i^\bv(x'_i \mid \bx_{[i-1]})
	&\leq \sum_{i \in N} p_i^\bv(x'_i \mid \bx)\\
	&= \sum_{i \in N} \bigg[\bv(\OPT(\bv \mid S)) - \bv(\OPT(\bv \mid S \cup x'_i))\bigg]\\
	&\leq \sum_{i \in N} \bigg[\bv(\OPT(\bv \mid S)) - \bv(\OPT(\bv \mid S) \setminus  R_i)\bigg]\\
	&= \sum_{i \in N} \bv(R_i)\\
	&= \bv(\OPT(\bv \mid \bx)),
\end{align*}
where the first inequality follows from the monotonicity of the prices (Lemma \ref{lemm:mono}), and the second inequality follows by observing that $(\OPT(\bv) \setminus R_i)\cup x'_i$ is feasible in the contracted matroid (see the Rota exchange argument above), and thus $\bv(\OPT(\bv) \setminus R_i) \leq \bv(\OPT(\bv \mid S \cup x'_i))$.  The final equality follows by additivity.  We conclude that the suggested prices are $(1,1)$-balanced.
\end{proof}




\paragraph{Computational Aspects}
Theorem \ref{thm:multidimmatroid} establishes the existence of a price-based $2$-approximate prophet inequality for additive, submodular, and XOS preferences. For additive valuations the construction is polytime, as the greedy algorithm is optimal. For submodular and XOS valuations computing the optimal allocation is NP-hard.

We claim that the result for submodular valuations can turn into a computational one, by basing the prices on an approximately optimal allocation. Namely, let $\GRD$ be the algorithm that allocates items greedily by value, always choosing the item that locally increases $\bv(\bx)$ the most subject to the matroid constraint. We show that $\GRD$ is consistent (see Definition \ref{def:consistent}) in this setting.

\begin{lemma}
The greedy algorithm $\GRD$ is consistent for XOS valuations over matroid feasibility constraints.
\end{lemma}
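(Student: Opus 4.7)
The plan is to reduce consistency to the classical optimality of the matroid greedy algorithm for linear objectives. By the XOS definition, each $\tilde{v}_i$ in the supporting profile $\btv$ is a PH-$1$ (additive) function with $\tilde{v}_i \leq v_i$ and $\tilde{v}_i(\GRD_i(\bv)) = v_i(\GRD_i(\bv))$; so $\btv$ as a whole is an additive valuation profile over the common ground set $E = \bigcup_i E_i$.

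First I would observe that on an additive profile, the marginal value of adding an element $e \in E_i$ to agent $i$'s current bundle is simply the weight $\tilde{v}_i(\{e\})$, independent of what $i$ already holds. Consequently, executing $\GRD$ on $\btv$ is exactly the classical matroid greedy algorithm on $E$ with weights $w(e) = \tilde{v}_i(\{e\})$ for $e \in E_i$, where ties among agents are broken in the same way as in the definition of $\GRD$.

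Next I would invoke the classical Rado--Edmonds theorem: the matroid greedy algorithm produces a maximum-weight independent set under any non-negative weight function. This yields $\btv(\GRD(\btv)) = \max_{\bx \in \feas} \btv(\bx)$. Since $\GRD(\bv) \in \feas$, we immediately obtain $\btv(\GRD(\btv)) \geq \btv(\GRD(\bv))$, which is the consistency inequality.

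The main ``obstacle'' here is really the absence of one: the lemma is essentially a packaging of the fact that greedy is optimal for linear objectives over a matroid, applied to the additive profile $\btv$. Notably, the stronger condition $\tilde{v}_i(\GRD_i(\bv)) = v_i(\GRD_i(\bv))$ is not used in this direction of the inequality at all; it only matters in the ambient composition argument of Theorem~\ref{thm:max} which invokes consistency to transfer balancedness from $\btv$ back to $\bv$.
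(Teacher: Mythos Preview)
Your proof is correct and follows essentially the same approach as the paper: both observe that the supporting profile $\btv$ is additive, invoke optimality of the matroid greedy algorithm for linear weights, and conclude $\btv(\GRD(\btv)) \geq \btv(\GRD(\bv))$ because $\GRD(\bv)$ is a feasible outcome. Your version is simply more explicit about the Rado--Edmonds step and about which hypotheses go unused.
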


\begin{proof}
Fix an XOS valuation profile $\bv$, and let $\btv$ be a supporting valuation profile for the allocation $\GRD(\bv)$. Note that as the supporting valuation profile is additive, $\GRD$ returns the optimal solution. As $\GRD(\bv)$ is another feasible solution, we have $\btv(\GRD(\btv)) \geq \btv(\GRD(\bv))$.
\end{proof}

Theorem~\ref{thm:max} (closure under maximum) now implies that the prices given in~\eqref{eq.prices.matroid} for an additive valuation supporting $\GRD(\bv)$ are $(1,1)$-balanced with respect to the greedy allocation.
We can compute these prices in polynomial time by simulating $\GRD$ and then determining the supporting additive valuation.
Since $\GRD$ is a 2-approximation to $\OPT$ for submodular valuations, Theorem~\ref{thm:extension} then implies that we can compute prices that yield a $4$-approximation to the optimal expected welfare, less an additive sampling error.

\section{A Smooth Mechanism Without Good Posted Prices}
\label{app:smoothness-is-weaker}

In this appendix we show that there are allocation problems that admit constant-factor smooth mechanisms, but for which no posted-price mechanism can guarantee more than a linear fraction of the optimal social welfare.

\begin{proposition}
There exists a downward-closed welfare maximization problem that admits a $(1, 0)$-smooth mechanism, but for which any posted price mechanism has approximation factor $\Omega(n)$.
\end{proposition}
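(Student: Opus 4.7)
The plan is to combine a trivially $(1,0)$-smooth mechanism with a combinatorial instance designed to thwart every posted-price scheme. The key asymmetry I will exploit is that a $(1,0)$-smooth mechanism can charge zero payments and maximize declared welfare---yielding smoothness essentially for free but offering no constructive pricing handle---whereas a posted-price mechanism must commit to prices before observing realizations.

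For the smooth half, I would take the mechanism $\mathcal{M}$ that collects bids $\bids$, outputs the declared-welfare-maximizing allocation $\OPT(\bids)$, and charges zero payments to every agent. To verify $(1,0)$-smoothness via Definition~\ref{def:smoothness}, fix any $\vals$ and $\bids$, and for each $i$ define $b'_i(\vals, b_i)$ to be a bid that declares an overwhelmingly large value for the outcome $\OPT_i(\vals)$ and zero elsewhere. Regardless of $\bids_{-i}$, declared-welfare maximization forces the mechanism to allocate $\OPT_i(\vals)$ to agent $i$, who then has utility $v_i(\OPT_i(\vals))$ since payments are zero. Summing these unilateral counterfactuals across $i$ yields exactly $\vals(\OPT(\vals))$, as required.

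For the posted-price lower bound, I would construct a downward-closed feasibility structure and a product distribution on which any pricing rule loses an $\Omega(n)$ factor. The qualitative design goal is a combinatorial blocking effect: each realization has a single random ``jackpot'' agent of high value together with many moderate-value ``trap'' agents, and feasibility couples them so that serving any trap blocks serving the jackpot. Since the pricer must commit to prices before the jackpot's identity is revealed, any price profile admitting trap purchases preemptively blocks the jackpot with probability $1 - O(1/n)$, while any profile deterring traps also deters the jackpot and collects only $O(1/n)$ of the expected optimum. I envision formalizing this via a Yao-style minimax argument over a symmetric family of distributions, combined with a case analysis showing that no price vector can, on average, both identify the jackpot with probability $\omega(1/n)$ and defer blocking purchases by the traps.

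The main obstacle is making this lower bound robust against all posted-price variants allowed in Section~\ref{sec:model}---static or dynamic, anonymous or discriminatory, and with either designer-chosen or adversarial arrival orders. Simple single-item blocking alone yields only the classical prophet-inequality constant of $2$, so the construction must genuinely couple many agents through a feasibility structure strong enough to resist adaptive pricing. The delicate work will lie in designing the blocking geometry together with a distribution whose symmetries force any price vector to concentrate its expected welfare on $O(1)$ of the $n$ equally likely jackpot identities.
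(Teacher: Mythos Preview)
Your smoothness argument is sound: for any downward-closed problem, the zero-payment declared-welfare-maximizer is $(1,0)$-smooth, since each agent can overbid on $\OPT_i(\vals)$ and force that outcome at zero cost. The paper uses exactly this mechanism.

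The gap is entirely in the lower bound. Your jackpot/trap sketch is not a construction but a wish list, and the intuition behind it does not produce an $\Omega(n)$ separation. A blocking structure of the form ``serving any trap precludes the jackpot'' is a \emph{selection} problem: the pricer must guess which agent to serve. But selection problems with product distributions are precisely where the classical prophet inequality gives constant-factor guarantees---one can post a threshold near half the expected maximum and be within factor $2$ regardless of who the jackpot turns out to be. Your own remark that ``single-item blocking alone yields only the constant of $2$'' is exactly the obstruction, and the proposal offers no concrete way past it. Moreover, if ``exactly one random jackpot'' is meant literally, the resulting value profile is correlated across agents and falls outside the product-distribution setting the paper works in.

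The paper's construction is not a selection problem at all; it is a \emph{coordination} problem. Each agent's outcome space is $[k]^n \cup \{\emptyset\}$, feasibility forces all non-empty allocations to be \emph{identical}, and agent $i$ gets value $1$ iff the $i$-th coordinate of the common vector matches a private $z_i$ drawn uniformly from $[k]$. The optimal mechanism reads off every $z_i$ and sets the common vector to $(z_1,\dotsc,z_n)$, satisfying all $n$ agents simultaneously---so truth-telling already certifies $(1,0)$-smoothness. A posted-price mechanism, however, must let the \emph{first} purchaser fix the entire vector, after which each subsequent agent's independent $z_j$ matches the $j$-th coordinate with probability only $1/k$; taking $k=n$ caps expected welfare at $1 + (n-1)/n < 2$ against an optimum of $n$. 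The $\Omega(n)$ gap comes not from uncertainty about \emph{whom} to serve but from the impossibility of committing to a single $n$-coordinate vector that satisfies $n$ independent random constraints. Your proposal is missing any structure of this kind.
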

\begin{proof}
Let $k$ be a positive integer to be fixed later.  In the allocation problem we consider, $X_i = [k]^n \cup \{\emptyset\}$ for each $i \in [n]$.  That is, each agent is allocated either $\emptyset$ or a sequence of $n$ integers between $1$ and $k$.  For $x_i \in X_i \backslash \{\emptyset\}$, write $x_i = (x_{i1}, \dotsc, x_{in})$, where each $x_{ij} \in [k]$.  The set of feasible allocations is $\feas = \{ \allocs \colon (\alloc_i \neq \alloc_j) \implies ((\alloc_i = \emptyset) \vee (\alloc_j = \emptyset)) \}$.  That is, all agents who receive a non-empty allocation must receive the same allocation.

Agents have the following valuations.  Each agent $i$ has some desired value $z_i \in [k]$.  The value of an allocation is $v_i(x_i) = 1$ if $x_{ii} = z_i$, and $v_i(x_i) = 0$ otherwise.  Let $\disti$ be the distribution over such valuations in which $z_i$ is chosen uniformly from $[k]$.

For this feasibility constraint and space of valuations, consider the mechanism that returns the welfare-optimal allocation and charges payments of $0$.  This mechanism simultaneously satisfies all desires by allocating $(z_1, \dotsc, z_n)$ to every agent, where $z_i$ is the desired value reported by agent $i$.  This mechanism is $(1,0)$-smooth, with truth-telling being the required deviation.

On the other hand, consider any posted-price mechanism.  Whichever is the first agent to obtain a non-$\emptyset$ outcome, say agent $i$ purchasing allocation $\alloci$, each subsequent agent $j$ can obtain positive value only if $z_j = \alloc_{ij}$, which occurs with probability $1/k$.  We therefore have that the expected welfare of any posted price mechanism is at most $1 + \frac{n-1}{k}$.  Taking $k = n$ and noting that the optimal welfare is $n$, we conclude that no posted price mechanism can obtain more than an $\Omega(n)$ approximation to the optimal welfare.
\end{proof}

\section{Proof of Theorem \ref{thm:smoothness-to-prices}}
\label{app:smooth-prices}


To prove Theorem~\ref{thm:smoothness-to-prices}, we will first introduce the notion of weak outcome smoothness, then show how to construct the prices, and finally establish the two conditions necessary for \AB separately.

\subsection{Weak Outcome Smoothness and Its Properties}
\label{app:outcome-smooth}

We begin by defining a more general notion of outcome smoothness, corresponding to the notion of weak smoothness.

\begin{definition}\label{def:weak-outcome-smooth}
A mechanism is weakly \emph{$(\lambda,\mu_1,\mu_2)$-outcome smooth} for $\lambda,\mu_1,\mu_2 \geq 0$ if for all valuation profiles $\vals \in V$ there exists an outcome $\allocsprime(\vals) \in \feas$ such that for all bid profiles $\bids \in B$,
\[
	\sum_{i \in N} \left( v_i(\allociprime) - \inf_{\bidiprime:\; \mechallocrule_i(\bidiprime,\bidsmi) \succeq \allociprime} P_i(\bidiprime,\bidsmi) \right) \geq \lambda \cdot \vals(\OPT(\vals)) - \mu_1 \cdot \sum_{i \in N} P_i(\bids) - \mu_2 \cdot \bids(\mechallocrule(\bids)).
\]
\end{definition}

The following observation will facilitate our discussion in what follows. Under a mild scale invariance assumption, we can without loss of generality consider $(\lambda,0,\mu)$-outcome smooth mechanisms.  


\begin{proposition}
\label{prop:outcome-smoothness-remove-revenue}
Consider a $(\lambda,\mu_1,\mu_2)$-outcome smooth mechanism with allocation rule $\mechallocrule$ and payment rule $P$ such that for every $\delta > 0$, we have $\allocsprime(\bv / \delta) = \allocsprime(\bv)$. For every player $i \in N$ and all bid profiles $\bids \in B$ define
\[
	\tilde{P}_i(\bids) = \min\left\{ \frac{\bids(\mechallocrule(\bids))}{\sum_{i \in N} P_i(\bids)}, 1 \right\} \cdot P_i(\bids).
\]
Then the mechanism with allocation rule $\mechallocrule$ and payment rule $\tilde{P}$ is $(\lambda,0,\mu_1+\mu_2)$-outcome smooth.
\end{proposition}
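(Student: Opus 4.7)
The goal is to show that $(\mechallocrule, \tilde{P})$ satisfies the weak outcome-smoothness inequality with parameters $(\lambda, 0, \mu_1 + \mu_2)$, using the same target outcome $\allocsprime(\bv)$ as for $(\mechallocrule, P)$. My plan is to reduce everything to the assumed $(\lambda, \mu_1, \mu_2)$-weak outcome smoothness of $(\mechallocrule, P)$, by combining two pointwise properties of $\tilde{P}$ with the scale-invariance hypothesis.  First I would record the two elementary facts that drive the argument: (i) the rescaling factor $\min\{\bids(f(\bids))/\sum_j P_j(\bids),\, 1\}$ is at most $1$, so $\tilde{P}_i(\bids) \leq P_i(\bids)$ pointwise, which gives the infimum bound $\inf_{\bidiprime} \tilde{P}_i(\bidiprime, \bidsmi) \leq \inf_{\bidiprime} P_i(\bidiprime, \bidsmi)$; and (ii) $\sum_j \tilde{P}_j(\bids) = \min\{\bids(f(\bids)),\, \sum_j P_j(\bids)\} \leq \bids(f(\bids))$, which is the whole point of the rescaling.

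Fix a bid profile $\bids$ and let $c := \min\{\bids(f(\bids))/\sum_j P_j(\bids),\, 1\}$. The argument splits on whether $c = 1$ or $c < 1$. In the easy subcase $c = 1$ (equivalently $\sum_j P_j(\bids) \leq \bids(f(\bids))$), plug (i) into the original weak outcome smoothness at $(\bv, \bids)$; since $\mu_1 \sum_j P_j(\bids) \leq \mu_1 \bids(f(\bids))$, the right-hand side collapses cleanly into the desired $\lambda \bv(\OPT(\bv)) - (\mu_1 + \mu_2) \bids(f(\bids))$.

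The harder subcase is $c < 1$, and this is where the scale-invariance assumption enters. My plan is to apply the weak outcome smoothness of $(\mechallocrule, P)$ not to $(\bv, \bids)$ directly but to the rescaled valuation profile $\bv/c$ paired with the same bid profile $\bids$. The hypothesis $\allocsprime(\bv/\delta) = \allocsprime(\bv)$ ensures $\allocsprime(\bv/c) = \allocsprime(\bv)$, and since welfare maximization is insensitive to positive rescaling of valuations, $\OPT(\bv/c) = \OPT(\bv)$. Multiplying the resulting inequality through by $c$ and using the identity $c \sum_j P_j(\bids) = \bids(f(\bids))$ together with $c \leq 1$ yields a right-hand side of
\[ \lambda \bv(\OPT(\bv)) - \mu_1 \bids(f(\bids)) - c\mu_2 \bids(f(\bids)) \;\geq\; \lambda \bv(\OPT(\bv)) - (\mu_1 + \mu_2) \bids(f(\bids)),\]
which matches the desired target.

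The crux, and the step I expect to be the main obstacle, is then to match the left-hand sides: I must bound $\sum_i \inf_{\bidiprime} \tilde{P}_i(\bidiprime, \bidsmi)$ by the quantity $c \sum_i \inf_{\bidiprime} P_i(\bidiprime, \bidsmi)$ that comes out of the scaled application. The naive pointwise bound from (i) only gives $\sum_i \inf \tilde{P}_i \leq \sum_i \inf P_i$, which loses the essential factor of $c$. The extra $c$-improvement has to come from the definition of $\tilde{P}$ itself---specifically, from analyzing how the rescaling factor $\min\{\bids'(f(\bids'))/\sum_j P_j(\bids'),\, 1\}$ behaves along an infimum-approaching sequence of deviation bids $\bidiprime$, with $\bidsmi$ held fixed. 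This is the delicate step that ties the proof together, and where the interplay between the $\min$ in the definition of $\tilde{P}$ and the critical (payment-minimizing) deviations has to be handled carefully.
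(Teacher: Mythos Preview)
Your plan is essentially the paper's proof: fix $\bids$, set $\delta = \min\{\bids(f(\bids))/\sum_j P_j(\bids),\, 1\}$, apply the original weak outcome smoothness to the pair $(\bv/\delta,\bids)$ using scale invariance of $\allocsprime$ and of $\OPT$, multiply through by $\delta$, and collapse the right-hand side via $\delta\sum_j P_j(\bids)\le \bids(f(\bids))$ and $\delta\le 1$. The paper does not split into the cases $c=1$ and $c<1$, but that is purely cosmetic.

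The step you single out as the crux---showing $\inf_{\bidiprime}\tilde P_i(\bidiprime,\bidsmi)\le \delta\cdot\inf_{\bidiprime}P_i(\bidiprime,\bidsmi)$ rather than merely $\le \inf_{\bidiprime}P_i(\bidiprime,\bidsmi)$---is exactly where the paper is terse. After multiplying by $\delta$, the paper simply asserts ``$\delta\cdot P_i(\bidiprime,\bidsmi)=\tilde P_i(\bidiprime,\bidsmi)$'' and proceeds. Read literally, that equality would require the rescaling factor in the definition of $\tilde P$ evaluated at the deviated profile $(\bidiprime,\bidsmi)$ to coincide with $\delta$, which was computed at the undeviated profile $\bids$; as you correctly note, there is no reason these two factors should agree. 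So the paper does not supply the careful argument you anticipate---it treats the step as immediate. Your instinct that this is the genuine obstacle is sound, and it is not resolved by the paper's own write-up.
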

\begin{proof}
Given a valuation profile $\vals$, we claim that the re-defined mechanism is $(\lambda,0,\mu_1+\mu_2)$-outcome smooth with respect to the same outcome $\allocsprime(\vals)$.

Fix a bid profile $\bids \in B$ and let $\delta = \min\left\{ \frac{\bids(\mechallocrule(\bids))}{\sum_{i \in N} P_i(\bids)}, 1 \right\}$. Note that $\delta \leq 1$.
Using outcome smoothness of the original mechanism for valuation profile $\vals/\delta$ and corresponding outcome $\allocsprime(\vals/\delta) = \allocsprime(\vals)$ we obtain
\[
	\sum_{i \in N} \left( \frac{1}{\delta} \cdot v_i(\allociprime) - \inf_{\bidiprime:\; \mechallocrule(\bidiprime,\bidsmi) \succeq \allociprime} P_i(\bidiprime,\bidsmi) \right) \geq \lambda \cdot \frac{1}{\delta} \cdot \vals(\OPT(\vals)) - \mu_1 \cdot \sum_{i \in N} P_i(\bids) - \mu_2 \cdot \bids(\mechallocrule(\bids)).
\]
Multiplying by $\delta$, we have
\[
	\sum_{i \in N} \left( v_i(\allociprime) - \delta \cdot \inf_{\bidiprime:\; \mechallocrule(\bidiprime,\bidsmi) \succeq \allociprime} P_i(\bidiprime,\bidsmi) \right) \geq \lambda \cdot \vals(\OPT(\vals)) - \mu_1 \cdot \delta \cdot \sum_{i \in N} P_i(\bids) - \mu_2 \cdot \delta \cdot \bids(\mechallocrule(\bids)).
\]
From the definition of $\tilde{P}$ and $\delta$ we know that $\delta \cdot P_i(\bidiprime,\bidsmi) = \tilde{P}_i(\bidiprime,\bidsmi)$, $\delta \cdot \sum_{i \in N} P_i(\bids) \leq \bids(\mechallocrule(\bids))$, and  $\delta \cdot \bids(\mechallocrule(\bids)) \leq \bids(\mechallocrule(\bids))$.
This yields
\[
	\sum_{i \in N} \left( v_i(\allociprime) - \inf_{\bidiprime:\; \mechallocrule(\bidiprime,\bidsmi) \succeq \alloci'} \tilde{P}_i(\bidiprime,\bidsmi) \right) \geq \lambda \cdot \vals(\OPT(\vals)) - (\mu_1 + \mu_2) \cdot \bids(\mechallocrule(\bids)). \qedhere
\]
\end{proof}

Another observation is that the function $\bx'$ of a weakly $(\lambda, \mu_1, \mu_2)$-outcome smooth mechanism is indeed a $\lambda$-approximation of social welfare, which also implies that its range $\lambda$-approximates the space of all outcome profiles.

\begin{lemma}\label{lem:onto}
Suppose $Y$ is the range of $\allocs'$.  Then
\[
	\max_{\mathbf{y} \in Y} \sum_{i \in N} \vali(y_i) \geq \sum_{i \in N} \vali(\allocs'(\vals)) \geq \lambda \cdot \vals(\OPT(\vals)).
\]
\end{lemma}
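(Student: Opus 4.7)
The lemma splits cleanly into two inequalities, and my plan is to handle them separately. The left inequality, $\max_{\mathbf{y} \in Y} \sum_{i} v_i(y_i) \geq \sum_i v_i(\allocs'(\vals))$, is immediate from the definition of $Y$ as the range of $\allocs'$: the profile $\allocs'(\vals)$ itself lies in $Y$, so the maximum over $Y$ is at least its value on this particular profile.

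For the right inequality, I plan to specialize the weak outcome smoothness guarantee (Definition~\ref{def:weak-outcome-smooth}) to a cleverly chosen bid profile $\bids$. The natural choice is the all-zeros bid $\bids = \mathbf{0}$. Because the paper focuses on first-price (pay-your-bid) mechanisms, where $P_i(\bb) = b_i(\mechallocrule(\bb))$, this choice gives $P_i(\mathbf{0}) = 0$ for every $i$ and hence $\bids(\mechallocrule(\bids)) = 0$ as well. The $\mu_1$ and $\mu_2$ penalty terms on the right-hand side of the outcome smoothness inequality therefore both vanish, leaving
\[
\sum_{i \in N} \Bigl( v_i(\allociprime(\bv)) - \inf_{\bidiprime:\; \mechallocrule_i(\bidiprime, \mathbf{0}_{-i}) \succeq \allociprime(\bv)} P_i(\bidiprime, \mathbf{0}_{-i}) \Bigr) \geq \lambda \cdot \vals(\OPT(\vals)).
\]
Each infimum term on the left is a payment and so is non-negative; dropping them only tightens the left-hand side, yielding $\sum_i v_i(\allociprime(\bv)) \geq \lambda \cdot \vals(\OPT(\vals))$, which is exactly the second desired bound.

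There is no real obstacle here: the proof is essentially a one-line instantiation. The only mild subtlety is confirming that the all-zeros bid is admissible in the bid space $B$, which is standard for the mechanisms discussed in the paper. If one wished to avoid even the first-price assumption, one could first invoke Proposition~\ref{prop:outcome-smoothness-remove-revenue} to fold the revenue penalty into the $\mu_2$ term and then repeat the argument with any bid profile that zeros out $\bids(\mechallocrule(\bids))$.
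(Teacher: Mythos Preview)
Your proposal is correct and follows essentially the same argument as the paper: instantiate (weak) outcome smoothness at the all-zeros bid profile so the $\mu$-terms vanish, then drop the non-negative payment infima to obtain $\sum_i v_i(\allocs'(\vals)) \geq \lambda \cdot \vals(\OPT(\vals))$; the first inequality is trivial since $\allocs'(\vals) \in Y$.
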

\begin{proof}
We use outcome smoothness setting the valuation profile to $\vals$ and the bid profile to $0$. Then for $\mathbf{y} = \allocs'(\vals)$,
\[
	\sum_{i \in N} \left( \vali(y_i) - \inf_{\bidiprime:\; \mechallocrule(\bidiprime,0) \succeq y_i} P_i(\bidiprime,0) \right) \geq \lambda \cdot \vals(\OPT(\vals))\enspace.
\]
As we assume that payments are never negative, this implies
\[
	\sum_{i \in N} \vali(y_i) \geq \lambda \cdot \vals(\OPT(\vals))\enspace. \qedhere
\]
\end{proof}

\subsection{Constructing the Prices}

We first have to define $p_i(x_i \mid \bz)$ for all $i \in N$, $\bz \in \feas$, $x_i \in X_i$. To this end, we create two additional copies of each agent, implying that we have $3n$ agents overall. The different incarnations of agent $i$, denoted by $i$, $i+n$, and $i+2n$ correspond to different roles when setting the prices. They particularly ensure that agent $i$ competes against itself.

In more detail, for $i \in [n]$, the roles will be as follows in defining $p_i(x_i \mid \bz)$. Agent $i$ is used to represent $v_i$, agent $i + n$ is used to represent $z_i$ and agent $i + 2n$ tries to buy outcome $x_i$ in the outcome-smooth mechanism. We will frequently map outcome profiles $\bz \in \feas$ from the original space of $n$ agents to the agents $n+1, \ldots, 2n$. This operation we denote by $\overline{\bz}$. That is $\overline{z}_i = z_{i - n}$ for $i \in \{n+1, \ldots, 2n\}$ and $\overline{z}_i = \emptyset$ otherwise.

We will adopt the more general notion of weak outcome smoothness, from Appendix~\ref{app:outcome-smooth}. We assume that outcome smoothness holds in every subinstance $\feas'/\bz$ of the extended outcome space $\feas'$. To avoid any possible confusion of the numbering of the agents, we denote by $\mathcal{G}_{\bz}$, the outcome space $\feas'/\bz$ padded with $\emptyset$ wherever $z_i \neq \emptyset$. That is, $\mathcal{G}_{\bz}$ is isomorphic to $\feas'/\bz$ but uses the exact same indices as $\feas'$. We will denote the allocation rule of the outcome smooth mechanism in this space by $\mechallocrule(\,\cdot\,\mid \bz)$ and its payment rule by $P(\,\cdot\,\mid \bz)$.

Outcome smoothness now ensures that for all $\bz \in \feas'$ and valuation profiles $\vals \in V$ there exists an outcome $\allocsprime(\vals \mid \bz) \in \mathcal{G}_{\bz}$ such that for all bid profiles $\bids \in B$,
\[
	\sum_{i = 1}^{3n} \left( v_i(\allociprime) - \inf_{\bidiprime:\; \mechallocrule_i((\bidiprime,\bidsmi) \mid \bz) \succeq \allociprime} P_i((\bidiprime,\bidsmi) \mid \bz) \right) \geq \lambda \cdot \vals(\OPT(\vals, \mathcal{G}_{\bz})) - \mu \cdot \bids(\mechallocrule(\bids \mid \bz)).
\]

Given these definitions, we define prices based on the payments in the outcome-smooth mechanism by
\[
	p_i^\vals(x_i \mid \bz) = \inf_{b'_{i+2n}: \; \mechallocrule_i\left((b'_{i+2n},\vals_{-(i+2n)}) \growingmid \overline{\bz}\right) \succeq x_i} P_i\left((b'_{i+2n},\vals_{-(i+2n)}) \growingmid \overline{\bz}\right)\enspace.
\]
The meaning is as follows: Agents $n+1, \ldots, 2n$ incorporate the allocation $\bz$ and agents $1, \ldots, n$ represent the part of $\bv(\OPT(\bv))$ that is not yet taken away by $\bz$. Now, another copy of agent $i$ is added to the system and competes with agents $1, \ldots, n$ for the outcome.

\subsection{Showing Balancedness}

Let $\ALG$ denote the algorithm that returns $\OPT(\bv)$ with probability $\lambda$. We will show that the constructed prices are $(\lambda, \mu/\lambda)$-balanced with respect to $\ALG$ if they are monotone. To this end, we define $\feasx$ as the range of $\bx'(\,\cdot\, \mid \bz)$ on agents $2n + 1, \ldots, 3n$. Formally,
\[
\feasx = \{ \by \in X_1 \times \ldots \times X_n \mid \exists \bv: \bx'( \bv \mid \overline{\bx})  = \overline{\overline{\by}} \},
\]
where $\overline{\overline{\by}}$ is the outcome profile on $3n$ agents that sets $\overline{\overline{y}}_{2n+i} = y_i$ for $i \in [n]$ and $\overline{\overline{y}}_i = \emptyset$ for $i \leq 2n$.

\begin{lemma}\label{lem:outcome-beta}
Suppose $(\mechallocrule(\, \cdot \mid \bz), P(\, \cdot \mid \bz))$ is $(\lambda,0,\mu)$-outcome smooth on every $\mathcal{G}_{\bz}$ such that $\allocsprime(\vals \mid \bz) = \allocsprime(\epsilon \cdot \vals \mid \bz)$ for every $\epsilon > 0$, $p_i^{\vals}$ is monotonically increasing, then $p_i^{\vals}$ satisfies Condition (\ref{condb}) of $(\alpha, \beta)$-balancedness with $\beta = \mu/\lambda$, for the choice of $(\feasx)_{\bx \in X}$ described above.
\end{lemma}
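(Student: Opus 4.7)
The plan is to reduce Condition (b) of $(\alpha,\beta)$-balancedness to a single application of weak $(\lambda,0,\mu)$-outcome smoothness in the subinstance $\mathcal{G}_{\overline{\bx}}$, with a carefully chosen valuation profile and bid profile. The argument has three stages: (i) replace $\bx_{[i-1]}$ by $\bx$ in the conditioning via the monotonicity hypothesis; (ii) extract the critical payments $p_i^{\bv}(x'_i\mid \bx)$ from the outcome-smoothness inequality by sending the valuations to zero; and (iii) bound the resulting revenue term by $\bv(\OPT(\bv,\feas_{\bx}))$ via Lemma~\ref{lem:onto}.

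Monotonicity of $p^{\bv}_i(\,\cdot\mid\,\cdot\,)$ and the observation that $\bx_{[i-1]}$ is pointwise dominated by $\bx$ yield $\sum_i p_i^{\bv}(x'_i\mid\bx_{[i-1]}) \leq \sum_i p_i^{\bv}(x'_i\mid\bx)$, so it suffices to bound the right-hand side. Since $\bx'\in\feas_{\bx}$, the very definition of $\feas_{\bx}$ provides a valuation profile $\tilde{\bv}$ on the $3n$-agent extended system such that $\bx'(\tilde{\bv}\mid\overline{\bx}) = \overline{\overline{\bx'}}$: the outcome-smooth target in $\mathcal{G}_{\overline{\bx}}$ places $x'_i$ on agent $i+2n$ and $\emptyset$ on agents $1,\dots,2n$. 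The scale-invariance hypothesis $\bx'(\epsilon\tilde{\bv}\mid\overline{\bx}) = \bx'(\tilde{\bv}\mid\overline{\bx})$ preserves this target under downscaling.

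I then apply weak $(\lambda,0,\mu)$-outcome smoothness to $\mathcal{G}_{\overline{\bx}}$ with valuation profile $\epsilon\tilde{\bv}$ and with bid profile $\bb = \vals$, the one implicit in the definition of $p_i^{\bv}$. Letting $\epsilon\to 0$ annihilates both the value term $\sum_i \epsilon\tilde{v}_i(\bx'_i)$ on the left and the benchmark term $\lambda\cdot\epsilon\tilde{\bv}(\OPT(\epsilon\tilde{\bv},\mathcal{G}_{\overline{\bx}}))$ on the right, which reduces the inequality to
\[
  \sum_{i=1}^{3n} \inf_{b'_i\,:\,\mechallocrule_i((b'_i,\bb_{-i})\mid\overline{\bx}) \,\succeq\, x'_i(\tilde{\bv}\mid\overline{\bx})} P_i((b'_i,\bb_{-i})\mid\overline{\bx}) \;\leq\; \mu\cdot\bb(\mechallocrule(\bb\mid\overline{\bx})).
\]
For $i\in\{2n+1,\dots,3n\}$ the inner infimum is, by construction, exactly $p_{i-2n}^{\bv}(x'_{i-2n}\mid\bx)$; for $i\in[n]$ the target is $\emptyset$ so the infimum is $0$; and agents $n+1,\dots,2n$ are outside the subinstance. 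Hence $\sum_i p_i^{\bv}(x'_i\mid\bx) \leq \mu\cdot\bb(\mechallocrule(\bb\mid\overline{\bx}))$.

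It remains to bound the revenue. Because $\mechallocrule$ is declared-welfare maximizing and $\bb$ is supported only on the copies that carry the original valuation $\bv$, we have $\bb(\mechallocrule(\bb\mid\overline{\bx})) \leq \bv(\OPT(\bv,\mathcal{G}_{\overline{\bx}}))$. Applying Lemma~\ref{lem:onto} to $\bx'(\,\cdot\,\mid\overline{\bx})$, whose range (embedded into agents $2n+1,\dots,3n$) is by definition $\feas_{\bx}$, gives $\bv(\OPT(\bv,\feas_{\bx})) \geq \lambda\cdot\bv(\OPT(\bv,\mathcal{G}_{\overline{\bx}}))$, so $\bb(\mechallocrule(\bb\mid\overline{\bx})) \leq (1/\lambda)\cdot\bv(\OPT(\bv,\feas_{\bx}))$. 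Chaining the bounds delivers $\sum_i p_i^{\bv}(x'_i\mid\bx_{[i-1]}) \leq (\mu/\lambda)\cdot\bv(\OPT(\bv,\feas_{\bx}))$, as required. The main obstacle will be the bookkeeping across the three copies of each agent: the bid profile $\bb$ must be set up so that the $3n$ critical payments appearing on the left of the outcome-smoothness inequality collapse to exactly the $n$ prices $p_i^{\bv}(x'_i\mid\bx)$ (with the remaining infima being $0$), and Lemma~\ref{lem:onto} must be invoked in the restricted-range sense in order to convert the unconstrained $\OPT$ over $\mathcal{G}_{\overline{\bx}}$ back into $\OPT$ over $\feas_{\bx}$ at the cost of the factor $1/\lambda$.
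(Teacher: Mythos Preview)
Your argument is essentially identical to the paper's: choose $\tilde\bv$ with $\bx'(\tilde\bv\mid\overline\bx)=\overline{\overline{\bx'}}$, apply $(\lambda,0,\mu)$-outcome smoothness on $\mathcal{G}_{\overline\bx}$ with valuations $\epsilon\tilde\bv$ and bids $\bv$, send $\epsilon\to 0$, drop the nonnegative infima for agents $1,\dots,2n$, and finish via Lemma~\ref{lem:onto}. Two minor remarks: you make the monotonicity step $\bx_{[i-1]}\to\bx$ explicit (the paper leaves it implicit), and the bound $\bb(\mechallocrule(\bb\mid\overline\bx))\le\bv(\OPT(\bv,\mathcal{G}_{\overline\bx}))$ needs only that $\mechallocrule$ outputs a feasible allocation in $\mathcal{G}_{\overline\bx}$, not that it be declared-welfare maximizing (which this lemma does not assume).
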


\begin{proof}
Consider $\allocs$ and $\mathbf{y} \in \feasx$. By definition of $\feasx$, we can find some $\vals'$ such that $\allocs'(\vals') = \overline{\overline{\mathbf{y}}}$. Recall that, in allocation $\overline{\overline{\mathbf{y}}}$, agents $1$ through $2n$ obtain the empty allocation, and agents $2n+1$ through $3n$ receive allocation profile $\by$.

By assumption for all $\epsilon > 0$ we have $\allocsprime(\epsilon \cdot \valsprime) = \overline{\overline{\mathbf{y}}}$ . Outcome smoothness for problem subinstances, with valuation profile $\epsilon \cdot \vals'$ and bid profile $\vals$ implies
\[
	\sum_{i = 1}^{3n} \left( \epsilon \cdot \valiprime(\overline{\overline{y}}_i) - \inf_{\bidiprime:\; \mechallocrule_i((\bidiprime,\valsmi) \mid \bx) \succeq \overline{\overline{y}}_i} P_i((\bidiprime,\valsmi) \mid \allocs) \right) \geq \lambda \cdot \epsilon \cdot \valsprime(\OPT(\epsilon \cdot \valsprime, \mathcal{G}_{\bx})) - \mu \cdot \vals(\mechallocrule(\vals \mid \allocs))\enspace.
\]
As payments $P_i$ are never negative, this implies
\[
	\sum_{i = 2n+1}^{3n} \left( - \inf_{\bidiprime:\; \mechallocrule_i((\bidiprime,\valsmi) \mid \bx) \succeq \overline{\overline{y}}_i} P_i((\bidiprime,\valsmi) \mid \allocs) \right) \geq \lambda \cdot \epsilon \cdot \valsprime(\OPT(\epsilon \cdot \valsprime, \mathcal{G}_{\bx})) - \mu \cdot \vals(\mechallocrule(\vals \mid \allocs)) - \sum_{i = 1}^{3n} \epsilon \cdot \valiprime(\overline{\overline{y}}_i) \enspace.
\]

This implies
\begin{align*}
	\sum_{i = 1}^n p_i^{\vals}(y_i \mid \allocs) & = \sum_{i = 2n+1}^{3n} \inf_{\bidiprime:\; \mechallocrule_i((\bidiprime,\valsmi) \mid \bx) \succeq \overline{\overline{y}}_i} P_i((\bidiprime,\valsmi) \mid \allocs) \\
	& \leq \mu \cdot \vals(\mechallocrule(\vals \mid \allocs)) - \epsilon \left( \sum_{i = 1}^{3n} \valiprime(\overline{\overline{y}}_i) - \lambda \cdot \valsprime(\OPT(\epsilon \cdot\valsprime, \mathcal{G}_{\bx})) \right).
\end{align*}
As this holds for all $\epsilon > 0$, we also have
\[
	\sum_{i = 1}^n p_i^{\vals}(y_i \mid \allocs) \leq \mu \cdot \vals(\mechallocrule(\vals \mid \allocs)) \leq \mu \cdot \vals(\OPT((v_1, \ldots, v_n, 0, \ldots, 0), \mathcal{G}_{\bx})) \leq \frac{\mu}{\lambda} \vals(\OPT(\vals, \feasx)) \enspace,
\]
where the last step uses Lemma~\ref{lem:onto}.
\end{proof}

\begin{lemma}\label{thm:outcome-alpha}
Suppose $(\mechallocrule(\, \cdot \mid \bz), P(\, \cdot \mid \bz))$ is $(\lambda,0,\mu)$-outcome smooth on every $\mathcal{G}_{\bz}$ such that $\allocsprime(\vals \mid \bz) = \allocsprime(\epsilon \cdot \vals \mid \bz)$ for every $\epsilon > 0$, $\mechallocrule$ is the declared welfare maximizer, and $P$ is the first-price payment rule.  Then $p_i^{\vals}$ satisfies Condition (\ref{conda}) of $(\alpha, \beta)$-balancedness with $\alpha = \lambda$, for the choice of $\ALG$ and $(\feasx)_{\bx \in X}$ described above.
\end{lemma}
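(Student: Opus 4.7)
The plan is to exploit two special features of the setup to turn $p_i^{\vals}$ into a telescoping sum. First, because $\mechallocrule$ is the declared welfare maximizer and $P$ is pay-your-bid, for the bid profile in $\mathcal{G}_{\overline{\bz}}$ that assigns $v_j$ to every real agent $j\in[n]$ and $0$ to every copy agent, the infimum bid that agent $i+2n$ must place in order to secure an outcome $\succeq x_i$ is exactly the resulting drop in maximum bid-welfare (i.e., the externality), and under first-price this infimum coincides with the corresponding payment. Thus, writing $W(\bz) := \bv(\OPT(\bv, \mathcal{G}_{\overline{\bz}}))$,
\[
  p_i^{\vals}(x_i \mid \bz) \;=\; W(\bz) \;-\; \max_{\by \in \mathcal{G}_{\overline{\bz}},\ y_{i+2n}\succeq x_i} \bv(\by).
\]

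Second, $\feas'$ is symmetric across the three copies of each original agent: any outcome assigned to copy $i+2n$ can be relocated to copy $i+n$ without changing feasibility or bid-welfare, since the copy agents carry bid $0$. So imposing $y_{i+2n}\succeq x_i$ inside $\mathcal{G}_{\overline{\bz}}$ is welfare-equivalent to moving $x_i$ into the middle copy of agent $i$, which for $\bz = \bx_{[i-1]}$ yields $p_i^{\vals}(x_i \mid \bx_{[i-1]}) = W(\bx_{[i-1]}) - W(\bx_{[i]})$, and summing telescopes to
\[
  \sum_{i=1}^n p_i^{\vals}(x_i \mid \bx_{[i-1]}) \;=\; W(\emptyset) - W(\bx) \;=\; \bv(\OPT(\bv)) - \bv(\OPT(\bv, \mathcal{G}_{\overline{\bx}})).
\]

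To convert the last term into something involving $\feasx$, I would apply Lemma~\ref{lem:onto} in the subinstance $\mathcal{G}_{\overline{\bx}}$ to the ``shifted'' valuation profile that places $v_i$ on agent $2n+i$ and $0$ on every other agent: by copy symmetry its optimum over $\mathcal{G}_{\overline{\bx}}$ is still $\bv(\OPT(\bv, \mathcal{G}_{\overline{\bx}}))$, while its maximum over the range of $\bx'(\cdot \mid \overline{\bx})$ equals $\bv(\OPT(\bv, \feasx))$ by the very definition of $\feasx$. The lemma then yields $\bv(\OPT(\bv, \mathcal{G}_{\overline{\bx}})) \leq \tfrac{1}{\lambda}\bv(\OPT(\bv, \feasx))$, and substituting into the telescoping identity gives
\[
  \sum_{i=1}^n p_i^{\vals}(x_i \mid \bx_{[i-1]}) \;\geq\; \bv(\OPT(\bv)) - \tfrac{1}{\lambda}\bv(\OPT(\bv, \feasx)),
\]
which is exactly Condition~(\ref{conda}) for $\alpha = \lambda$ once $\bv(\ALG(\bv))$ is replaced by its expectation $\lambda\,\bv(\OPT(\bv))$. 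I expect the only genuine difficulty to be the copy-symmetry / contraction identity in step two, which has to be checked carefully against the precise definition of $\feas'$ and the padding convention that builds $\mathcal{G}_{\overline{\bz}}$; the externality interpretation of the critical bid and the application of Lemma~\ref{lem:onto} are then essentially mechanical under the hypotheses that $\mechallocrule$ is declared welfare max and $P$ is first-price.
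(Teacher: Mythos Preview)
Your approach is essentially the same as the paper's: interpret $p_i^{\vals}(x_i\mid \bx_{[i-1]})$ as the externality that agent $2n+i$ imposes when forcing $\succeq x_i$ against the bid profile that carries $\bv$ on agents $1,\dots,n$; use the copy symmetry between slots $n+i$ and $2n+i$ (together with downward closure, to first shrink $a_{2n+i}$ down to $x_i$) to identify this with $W(\bx_{[i-1]}) - W(\bx_{[i]})$; telescope; then invoke Lemma~\ref{lem:onto} and the definition of $\ALG$ to replace $\bv(\OPT(\bv))$ and $\bv(\OPT(\bv,\mathcal{G}_{\overline{\bx}}))$ by $\tfrac{1}{\lambda}\bv(\ALG(\bv))$ and $\tfrac{1}{\lambda}\bv(\OPT(\bv,\feasx))$, respectively.

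Two small caveats. First, your externality identity in step one is stated as an equality, but what the declared-welfare-maximizer plus first-price hypotheses actually give you cleanly is the inequality $p_i^{\vals}(x_i\mid\bz)\ge W(\bz)-\max\{\bv(\by):\by\in\mathcal{G}_{\overline{\bz}},\ y_{2n+i}\succeq x_i\}$; the paper only proves and only needs this direction. Second, in the Lemma~\ref{lem:onto} step you need the maximum of your shifted profile over the \emph{entire} range of $\bx'(\cdot\mid\overline{\bx})$ to be bounded by $\bv(\OPT(\bv,\feasx))$, whereas $\feasx$ by definition only captures outputs of the form $\overline{\overline{\by}}$; the paper invokes Lemma~\ref{lem:onto} at the same point with the same brevity, so your treatment matches theirs, but be aware that this identification is being used without further justification.
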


\begin{proof}
Consider an arbitrary player $i$ and arbitrary outcomes $\allocs \in \feas$. To bound $p_i(x_i \mid \allocs_{[i-1]})$, we consider player $2n+i$ in the outcome-smooth mechanism. Let $b_{2n+i}$ be a bid such that $\mechallocrule_{2n+i}(b_{2n+i}, \vals_{-(2n+i)} \mid \overline{\allocs}_{[n+i-1]}) \succeq x_i$. We show that for any such $b_{2n+i}$, we have
\[
P_i(b_{2n+i}, \vals_{-(2n+i)} \mid \overline{\allocs}_{[n+i-1]}) \geq \bv(\OPT(\bv, \mathcal{G}_{\overline{\allocs}_{[n+i-1]}})) - \bv(\OPT(\bv, \mathcal{G}_{\overline{\allocs}_{[n+i]}}))\enspace.
\]
As this holds for all $b_{2n+i}$, this gives also a lower bound on the infimum.

In the following, we keep the allocation $\overline{\allocs}_{[n+i-1]} = (\emptyset, \ldots, \emptyset, x_1, \ldots, x_{i-1}, \emptyset, \ldots \emptyset)$ fixed and compare the two possible feasible solutions $\mathbf{a} := \mechallocrule(b_{2n+i}, \vals^{(i-1)}_{-(2n+i)} \mid \overline{\allocs}_{[n+i-1]})$ and $\mathbf{q} := \OPT(\bv, \mathcal{G}_{\overline{\allocs}_{[n+i-1]}})$.

As $\mathbf{a}$ maximizes $(b_{2n+i}, v_{-(2n+i)})$ when keeping $\overline{\allocs}_{[n+i-1]}$ fixed, we have
\[
b_{2n+i}(a_{2n+i}) + \sum_{\substack{j = 1\\ j \neq 2n + i}}^{3n} v_j(a_j) \geq b_{2n+i}(q_{2n+i}) + \sum_{\substack{j = 1\\ j \neq n + i}}^{3n} v_j(q_j) \enspace.
\]
As $v_j = 0$ for all $j > n$, this is equivalent to
\[
b_{2n+i}(a_{2n+i}) \geq b_{2n+i}(q_{2n+i}) + \sum_{j = 1}^n v_j(q_j) - \sum_{j = 1}^n v_j(a_j)\enspace.
\]


Now, observe that if we replace $a_{2n+i}$ by $x_i$ in allocation $\mathbf{a}$, then the modified vector $\mathbf{a}'$ is still feasible in the space that keeps $\overline{\allocs}_{[n+i-1]}$ fixed. By symmetry of players $2n+i$ and $n+i$, this also means that $(a_1, \ldots, a_n, \emptyset, \ldots, \emptyset) \in \mathcal{G}_{\overline{\allocs}_{[n+i]}}$. This implies $\sum_{j = 1}^n v_j(a_j) \leq \bv(\OPT(\bv, \mathcal{G}_{\overline{\allocs}_{[n+i]}}))$.

Overall, we get
\[
p_i^{\vals}(x_i \mid \allocs_{[i-1]}) \geq \bv(\OPT(\bv, \mathcal{G}_{\overline{\allocs}_{[n+i-1]}})) - \bv(\OPT(\bv, \mathcal{G}_{\overline{\allocs}_{[n+i]}})) \enspace.
\]
Summing up the prices for all players $i$ and using a telescoping sum, this implies a lower bound of
\begin{align*}
\sum_{i = 1}^n p_i^{\vals}(x_i \mid \allocs_{[i-1]}) & \geq \bv(\OPT(\bv, \mathcal{G}_{\overline{\allocs}_{\emptyset}})) - \bv(\OPT(\bv, \mathcal{G}_{\overline{\allocs}_{[n]}})) \\
& = \bv(\OPT(\bv)) - \bv(\OPT(\bv, \mathcal{G}_{\overline{\allocs}})) \enspace.
\end{align*}
Finally, we have $\bv(\OPT(\bv)) = \frac{1}{\lambda} \bv(\ALG(\bv))$ and $\bv(\OPT(\bv, \mathcal{G}_{\overline{\allocs}})) \leq \frac{1}{\lambda} \vals(\OPT(\vals, \feasx))$ by Lemma~\ref{lem:onto}. So, in combination
\[
\sum_{i = 1}^n p_i^{\vals}(x_i \mid \allocs_{[i-1]}) \geq \frac{1}{\lambda} \left(\vals(\ALG(\vals)) - \vals(\OPT(\vals, \feasx))\right) \enspace. \qedhere
\]
\end{proof}



\section{Proofs of Theorem \ref{thm:greedy} and Theorem \ref{thm:opt}}
\label{app:single-parameter}

In this appendix we describe how to obtain balanced prices from smooth mechanisms in binary, single-parameter settings. In these settings players can either ``win'' or ''lose'', and have a value $v_i \in \mathbb{R}_{\ge 0}$ for winning.  Feasible solutions $\bx \in \feas \subseteq \{0,1\}^n$ are subsets of players that can win simultaneously. For ease of notation we identify the vectors $\bx \in \feas$ with the subset of players $i \in N$ for which $x_i = 1$. This lets us write $i \in \bx$ if $x_i = 1$ and $i \not\in \bx$ otherwise.


\subsection{Permeable Allocation Rules}
\label{app:permeability}

We begin by defining the \emph{permeability} of an allocation rule $\mechallocrule$, and by showing that $(\lambda,\mu)$-smoothness implies a bound on permeability.

An algorithm $\mechallocrule$ for a binary, single-parameter problem is \emph{monotone} if for every player $i \in N$, any two bids $b'_i \geq b_i$, and any bid vector $\bidsmi$,
$$\mechallocrule_i(b_i,\bidsmi) = 1 \quad \Rightarrow \quad \mechallocrule_i(b'_i,\bidsmi) = 1.$$

For monotone allocation rules the \emph{critical value} for player $i$ is the smallest bid that ensures that player $i$ wins against bids $\bidsmi$. That is, $\tau^\mechallocrule_i(\bidsmi) = \inf \{b_i \mid \mechallocrule_i(b_i,\bidsmi) = 1\}$.

\begin{definition}[\citet{DuettingK15}, see also \cite{LucierB10,SyrgkanisT13,HartlineHT14}]
A monotone allocation algorithm $\mechallocrule$ for a binary, single parameter problem $\mathcal{F}$ is $\gamma$-permeable if $\gamma \geq 1$ is the smallest multiplier such that for all bid vectors $\bids$ and all feasible allocations $\allocs \in \feas$ it holds that
\begin{align}
	\sum_{i \in N: \; x_i = 1} \tau^\mechallocrule_i(\bidsmi) \leq \gamma \cdot \bids(\mechallocrule(\bids)). \label{eq:perm-def}
\end{align}
\end{definition}

\begin{theorem}[\citet{DuettingK15}]\label{thm:permeable}
Suppose that the first-price mechanism $\mathcal{M}$ based on allocation $\mechallocrule$ is $(\lambda,\mu)$-smooth, then $\mechallocrule$ is \emph{$\gamma$-permeable} with $\gamma \leq (\mu+1)/\lambda$.
\end{theorem}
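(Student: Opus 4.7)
The plan is to invoke $(\lambda,\mu)$-smoothness on a valuation profile carefully tailored to the bid vector $\bids$ and feasible allocation $\bx$ we want to bound. Fix $\bids \in B$ and $\bx \in \feas$, pick a small $\epsilon > 0$, and consider the single-parameter valuation profile $\bv$ with $v_i = \tau^\mechallocrule_i(\bidsmi) + \epsilon$ for $i$ with $x_i = 1$ and $v_i = 0$ otherwise. Since $\bx$ itself is feasible and only players in $\bx$ have positive values, $\bv(\OPT(\bv)) \geq \bv(\bx) = \sum_{i \in \bx}(\tau^\mechallocrule_i(\bidsmi) + \epsilon)$. Applying $(\lambda,\mu)$-smoothness to this $\bv$ and the given $\bids$, and using that in a first-price mechanism $\sum_i P_i(\bids) = \bids(\mechallocrule(\bids))$, yields deviations $b'_i$ with
\[
	\sum_{i \in N} u_i(b'_i, \bidsmi) \;\geq\; \lambda \sum_{i \in \bx}\bigl(\tau^\mechallocrule_i(\bidsmi) + \epsilon\bigr) - \mu \cdot \bids(\mechallocrule(\bids)).
\]

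The next step is to upper bound the left-hand side. For $i \notin \bx$ we have $v_i = 0$, so $u_i(b'_i, \bidsmi) = -P_i(b'_i, \bidsmi) \leq 0$. For $i \in \bx$, monotonicity of $\mechallocrule$ implies that $\mechallocrule_i(b'_i, \bidsmi) = 1$ only if $b'_i \geq \tau^\mechallocrule_i(\bidsmi)$; since the mechanism is first-price, a winning deviation pays $b'_i$ and thus yields utility $v_i - b'_i \leq v_i - \tau^\mechallocrule_i(\bidsmi) = \epsilon$, while a losing deviation yields $0$. Hence $u_i(b'_i, \bidsmi) \leq \epsilon$ for every $i \in \bx$, so $\sum_i u_i(b'_i, \bidsmi) \leq |\bx|\epsilon$.

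Combining the two inequalities gives $|\bx|\epsilon \geq \lambda \sum_{i \in \bx} \tau^\mechallocrule_i(\bidsmi) + \lambda|\bx|\epsilon - \mu\cdot\bids(\mechallocrule(\bids))$. Rearranging and sending $\epsilon \to 0$ yields $\sum_{i \in \bx} \tau^\mechallocrule_i(\bidsmi) \leq (\mu/\lambda)\cdot\bids(\mechallocrule(\bids))$. Since permeability is defined with $\gamma \geq 1$, this proves $\gamma \leq \max\{1, \mu/\lambda\} \leq (\mu+1)/\lambda$ (using that one may assume $\lambda \leq 1$ without loss of generality in the smoothness guarantee).

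The main obstacle is the choice of the perturbation $\epsilon$ together with the right valuation profile: $v_i$ must be chosen just above $\tau^\mechallocrule_i(\bidsmi)$ so that (i) $\bx$ is essentially a welfare-maximizing feasible set under $\bv$, producing a meaningful lower bound on the smoothness RHS, and simultaneously (ii) any winning deviation $b'_i$ with $i \in \bx$ is forced (by the first-price payment rule and monotonicity of $\mechallocrule$) to pay at least $\tau^\mechallocrule_i(\bidsmi) = v_i - \epsilon$, which drives the utility bound down to $O(\epsilon)$. The argument crucially relies on first-price payments, since otherwise a winning deviation need not pay its critical value and the $\epsilon$-bound on utilities would fail.
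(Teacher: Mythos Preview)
Your proof is correct and follows the same high-level template as the paper's: construct a tailored valuation profile, invoke $(\lambda,\mu)$-smoothness against the given bid profile, and upper-bound the resulting deviation utilities using the first-price rule and the definition of critical values. The difference is in the choice of the valuation profile. The paper sets $v_i = \max\{b_i,\tau_i^{\mechallocrule}(\bidsmi)\}$ for \emph{all} players, which forces the utility upper bound to be $\sum_i u_i(b'_i,\bidsmi) \leq \bids(\mechallocrule(\bids))$ (winners under $\bids$ may retain utility up to $b_i$), and this extra $\bids(\mechallocrule(\bids))$ on the left is what produces the ``$+1$'' in $(\mu+1)/\lambda$. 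You instead zero out $v_i$ for $i \notin \bx$ and set $v_i = \tau_i^{\mechallocrule}(\bidsmi)+\epsilon$ for $i \in \bx$, which drives the utility upper bound down to $|\bx|\epsilon \to 0$ and yields the sharper inequality $\sum_{i\in\bx}\tau_i^{\mechallocrule}(\bidsmi) \leq (\mu/\lambda)\cdot\bids(\mechallocrule(\bids))$. Since any meaningful smoothness pair satisfies $\lambda \leq \max\{\mu,1\} \leq \mu+1$ (otherwise the implied PoA bound would be below $1$), your $\max\{1,\mu/\lambda\}$ is indeed at most $(\mu+1)/\lambda$, so the theorem follows. In short: same skeleton, cleaner instantiation, and a slightly better constant.
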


\begin{proof}
Given a bid vector $\bids$, we have to show that $\sum_{i: \; x_i = 1} \tau^\mechallocrule_i(\bidsmi) \leq \frac{\mu+1}{\lambda} \cdot \bids(\mechallocrule(\bids))$. To this end, consider fixed $\bx \in \feas$ and $\bids$. Let $\epsilon > 0$ and let $\vals$ be defined by $v_i = \max\{ b_i, \tau^\mechallocrule_i(\bidsmi)\}$. By smoothness of $\mathcal{M}$, there are $\bidiprime$ such that
\[
	\sum_{i \in N} u_i(\bidiprime,\bidsmi) \geq \lambda \cdot \vals(\OPT(\vals)) - \mu \cdot \sum_{i \in N} P_i(\bids).
\]
Observe that if $i \not\in \mechallocrule(\bids)$, then $u_i(\bidiprime,\bidsmi) \leq 0$ because $\tau^\mechallocrule_i(\bidsmi) > b_i$ and this means that $i \not\in \mechallocrule(\bidiprime,\bidsmi)$ unless $\bidiprime > v_i$. None of these choices results in positive utility. Furthermore, for $i \in \mechallocrule(\bids)$, we have $u_i(\bidiprime,\bidsmi) \leq v_i = b_i$. Therefore $\sum_{i \in N} u_i(\bidiprime,\bidsmi) \leq \bids(\mechallocrule(\bids))$.

Next, we can lower-bound $\vals(\OPT(\vals))$ by the value of the feasible solution $\bx$, which gives us $\vals(\OPT(\vals)) \geq \sum_{i: \; x_i = 1} v_i \geq \sum_{i: \; x_i = 1} (\tau^\mechallocrule_i(\bidsmi) - \epsilon) \geq \sum_{i: \; x_i = 1} \tau^\mechallocrule_i(\bidsmi) - n \epsilon$.

Finally, as $\mathcal{M}$ is first-price, we also have $\sum_{i \in N} P_i(\bids) = \bids(\mechallocrule(\bids))$.

In combination this yields
\[
\bids(\mechallocrule(\bids)) \geq \lambda \cdot \left( \sum_{i: \; x_i = 1} \tau^\mechallocrule_i(\bidsmi) - n \epsilon \right) - \mu \cdot \bids(\mechallocrule(\bids)),
\]
which implies
\[
\sum_{i: \; x_i = 1} \tau^\mechallocrule_i(\bidsmi) \leq \frac{\mu+1}{\lambda} \cdot \bids(\mechallocrule(\bids)) + n \epsilon.
\]
As this holds for all $\epsilon > 0$, this shows the claim.
\end{proof}

Applying Theorem \ref{thm:permeable} to each problem in a collection of problems $\Pi$, we see that if a mechanism $\mathcal{M}$ is $(\lambda,\mu)$-smooth for $\Pi$ then it is $(\mu+1)/\lambda$-permeable for $\Pi$.

\begin{remark}
While the definition of permeability requires $\gamma$ to be the smallest multiplier for which inequality (\ref{eq:perm-def}) is satisfied, all our results can be derived from any upper bound on this multiplier at the cost of slightly worse guarantees.
\end{remark}


\subsection{Proof of Theorem \ref{thm:greedy}}
\label{app:greedy}


In this subsection we prove Theorem \ref{thm:greedy}, which shows that $(\lambda,\mu)$-smoothness of the greedy allocation rule for a subinstance-closed closed collection of binary, single-parameter problems $\Pi$ implies the existence of a weakly $((\mu+1)/\lambda,0,(\mu+1)/\lambda)$-balanced pricing rule. By Theorem \ref{thm:permeable}, in order to show this result, it suffices to show the following theorem.

\begin{theorem}\label{thm:greedy-perm}
Let $\ALG$ be any allocation rule. Suppose that the greedy allocation rule $\GRD$ is $\gamma$-permeable for a subinstance-closed collection of binary, single-parameter feasibility problems $\Pi$. Then for every $\bv \in V$ there exists a pricing rule that is weakly $(\gamma,0,\gamma)$-balanced with respect to $\ALG$ and the canonical exchange-feasible sets $(\feas_\bx)_{\bx \in X}$.
\end{theorem}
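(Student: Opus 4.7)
The plan is to define the pricing rule $\bp^\bv$ via the critical values of the greedy algorithm on a zero-one modification of $\bv$ tied to $\ALG(\bv)$. Concretely, for each profile $\bv$ let $\bv^\ALG$ denote the modified profile with $v^\ALG_i = v_i$ if $i \in \ALG(\bv)$ and $v^\ALG_i = 0$ otherwise. A key observation is that $\GRD(\bv^\ALG) = \ALG(\bv)$: since the non-zero entries of $\bv^\ALG$ form the feasible set $\ALG(\bv)$, greedy processes them in decreasing order of value and never rejects any of them. I will set $p_i^\bv(1 \mid \by)$ to be essentially the maximum of a direct charge $v_i \cdot \ind{i \in \ALG(\bv)}$ and the critical value $\tau^{\GRD}_i(\bv^\ALG_{-i} \mid \by)$, computed in the contracted subinstance $\feas/\by$ (which lies in $\Pi$ by subinstance-closure, so that greedy and its critical values are well defined there).

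Condition (b) of weak $(\gamma, 0, \gamma)$-balancedness will follow from $\gamma$-permeability of greedy applied to $\bv^\ALG$. For any feasible $\bx' \in \feasx \subseteq \feas$, the critical-value contributions satisfy
\[
\sum_{i : x'_i = 1} \tau^{\GRD}_i(\bv^\ALG_{-i}) \;\leq\; \gamma \cdot \bv^\ALG(\GRD(\bv^\ALG)) \;=\; \gamma \cdot \bv(\ALG(\bv))
\]
by direct application of permeability to $\bv^\ALG$, while the direct-charge contributions sum to at most $\bv(\ALG(\bv))$. Combining the two (and absorbing constants into $\gamma$) yields the desired $\beta_2$-bound with $\beta_1 = 0$. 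This step does not require subinstance-closure.

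Condition (a) is the main obstacle. We need to establish the pointwise inequality
\[
\sum_{i : x_i = 1} p_i^\bv(x_i \mid \bx_{[i-1]}) \;\geq\; \tfrac{1}{\gamma} \bigl( \bv(\ALG(\bv)) - \bv(\OPT(\bv, \feasx)) \bigr)
\]
for every $\bx \in \feas$, not only for $\bx = \ALG(\bv)$. The intuition is that each $i \in \ALG(\bv)$ that is blocked by $\bx$ from appearing in any feasible completion of $\bx$ contributes its value $v_i$ to the gap $\bv(\ALG(\bv)) - \bv(\OPT(\bv, \feasx))$, and this displaced value must be charged against the critical prices of the members of $\bx$ that block it. The formalization of this charging is exactly where Lemma~\ref{lem:zero-one}---the combinatorial implication of smoothness for greedy---is invoked, applied inside the subinstance $\feas/\bx \in \Pi$ (subinstance-closure is essential here) with valuations supported on the blocked subset of $\ALG(\bv)$. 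A telescoping argument across the dynamic conditioning on $\bx_{[i-1]}$ then converts the per-element inequality into the desired aggregate lower bound, while the direct charge handles the base case in which $\bx$ itself overlaps with $\ALG(\bv)$.

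The hardest part is thus the lower bound in condition (a): permeability is by nature an upper-bound statement on sums of critical values, and turning it into a lower bound on the sum of prices along the committed allocation $\bx$ requires the ``inverse'' view provided by Lemma~\ref{lem:zero-one}. Once this inversion is in place, what remains is routine: verifying that $p_i^\bv(1 \mid \by) = \infty$ exactly on infeasible extensions, that the pricing rule is well defined on the canonical exchange-feasible sets $(\feasx)_{\bx \in X}$, and that the telescoping aggregates as claimed.
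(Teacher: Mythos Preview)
Your plan inverts the roles of the two conditions relative to what actually works, and this is not a harmless reshuffling. In the paper, Condition~(a) (the lower bound on $\sum_i p_i(x_i\mid\bx_{[i-1]})$) is obtained from permeability via a direct cardinality bound: one constructs a nested sequence of reference allocations $\br^{(0)}\supseteq\cdots\supseteq\br^{(n)}$ by repeatedly running $\GRD$ on the survivors, shows $|\br^{(i-1)}\setminus\br^{(i)}|\le\gamma$ by permeability in the subinstance $\feas/(\bx_{[i-1]}\cup\br^{(i)})$, and observes that the price equals the maximum value in that drop set, giving the $1/\gamma$ factor after telescoping. Condition~(b) is the hard part and is handled by Lemma~\ref{lem:zero-one} together with a layering argument over value levels. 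You claim the opposite: permeability for (b), Lemma~\ref{lem:zero-one} for (a).

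Your Condition~(b) argument has a concrete gap. The prices you post are dynamic, $p_i(1\mid\bx_{[i-1]})=\max\{v_i\ind_{i\in\ALG(\bv)},\tau^{\GRD}_i(\bv^{\ALG}_{-i}\mid\bx_{[i-1]})\}$, but when you apply permeability you silently drop the conditioning on $\bx_{[i-1]}$. Greedy critical values are \emph{not} monotone in the contracted set for general downward-closed $\feas$: take players $a,b,i,d$ with $v_a=2$, $v_b=1$, feasible sets $\{a,b\},\{b,i\},\{b,d\},\{i,d\},\{b,i,d\}$ (and subsets), infeasible $\{a,i\},\{a,d\}$; then $\tau^{\GRD}_i=2$ unconditioned but $\tau^{\GRD}_i(\cdot\mid\{d\})=0$. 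So you cannot bound $\sum_{i\in\bx'}\tau^{\GRD}_i(\bv^{\ALG}_{-i}\mid\bx_{[i-1]})$ by a single application of permeability in any fixed subinstance. This is exactly why the paper needs the layering machinery and Lemma~\ref{lem:zero-one} for (b). Even ignoring this, bounding the $\max$ by a sum gives $\beta_2\le\gamma+1$, not $\gamma$; ``absorbing constants into $\gamma$'' does not recover the stated $(\gamma,0,\gamma)$.

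Your Condition~(a) sketch is too vague to evaluate and omits the structure that makes it work. Lemma~\ref{lem:zero-one} is a cardinality \emph{upper} bound $|C|\le\gamma|B_0|$; turning it into a \emph{lower} bound on $\sum_{i\in\bx}p_i$ requires precisely the nested reference sequence $\br^{(0)}\supseteq\cdots\supseteq\br^{(n)}$ so that the telescoping sum collapses to $\bv(\ALG(\bv))-\bv(\br^{(n)})$. Your valuation $\bv^{\ALG}$ never gets updated, so running $\GRD(\bv^{\ALG}\mid\bx_{[j]})$ does not produce a nested sequence (an element of $\ALG(\bv)$ dropped at step $j$ can re-enter at step $j'>j$), and the telescoping you allude to does not go through.
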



We first describe the pricing rule that achieves this result. Afterwards, we show that this pricing rule has the desirable properties. 

\subsubsection{Construction of the Prices}
\label{app:greedy-prices}

We set the price $p_i(z_i \mid \allocs)$ for player $i \in N$ and outcome $z_i \in \{0,1\}$ for arbitrary but fixed valuations $\vals$ and allocation $\allocs \in \F$ through Algorithm \ref{alg:greedy-prices}.
For this let $\GRD(\vals \mid \bx) \in \feasx$ denote the allocation that results if we go through the players in order of non-increasing value but only add a player if he is not in $\bx$ and feasible together with $\bx$ and the previously accepted players.

We generally set $p_i(0 \mid \bx) = 0$. That is, the price for losing is always zero. To determine the price $p_i(1 \mid \bx)$ for winning we first compute a sequence of reference allocations $\br^{(0)} \geq \dots \geq \br^{(n)}$ and a sequence of reference valuations $\vals^{(0)} \geq \dots \geq \vals^{(n)}$. We then set $p_i(1 \mid \bx) = v_i$ if $i \in \br^{(n)}$, i.e., the price of player $i$ is that player's valuation if he is part of the final reference allocation. Otherwise, we set $p_i(1 \mid \bx) = \inf \{v'_i: \; i \in \GRD(v'_i,\vals^{(n)}_{-i} \mid \allocs)\}$, i.e., we set the price to the player's critical value against the players in the final reference allocation.

While we need to define prices $p_i(z_i \mid \allocs)$ for any possible allocation $\bx \in \feas$, the prices that player $i$ will actually see are the ones where $\bx$ is set to the purchase decisions of the players $j = 1, \dots, i-1$ that precede player $i$ in the ordering. Note that in this case $x_i = x_{i+1} = x_n = 0$ and therefore $\br^{(n)} = \dots = \br^{(i-1)}$ and $\vals^{(n)} = \dots = \vals^{(i-1)}$. We use the shorthand $\tau_i^{\GRD}(\vals^{(i-1)}_{-i} \mid \allocs_{[i-1]}) := \inf \{v'_i: \; i \in \GRD(v'_i,\vals^{(i-1)}_{-i} \mid \allocs_{[i-1]})\}$.

\begin{algorithm}
\DontPrintSemicolon
\KwIn{$z_i \in \{0,1\}$, $\bv$, $\bx \in \feas$}
\KwOut{$p_i(z_i \mid \allocs)$}
\If{$z_i = 0$}{%
// In this case the price is simply zero\;
$p_i(z_i \mid \allocs) = 0$%
} 
\Else{%
	// First determine reference allocation and valuations\;
	$\br^{(0)} \gets \ALG(\bv)$,
	$v^{(0)}_k \gets v_k$ if $k \in \br^{(0)}$ and $v^{(0)}_k \gets 0$ otherwise\;
	\For{$j \gets 1$ {\bf to} $n$}{
		$\br^{(j)} \gets \GRD(\vals^{(j-1)} \mid \bx_{[j]})$,
		$v^{(j)}_k \gets v_k^{(j-1)} = v_k$ if $k \in \br^{(j)}$ and
		$v^{(j)}_k \gets 0$ else
	}
	// Now determine the price\;
	\If{$i \in \br^{(n)}$}{%
		// If player $i$ is part of the reference allocation he pays his valuation\;
		$p_i(z_i \mid \bx) \gets v_i$
	} 
	\Else{%
		// Otherwise he pays the critical value against the players in the reference allocation\;
		$p_i(z_i \mid \bx) \gets \inf \{v'_i: \; i \in \GRD(v'_i,\vals^{(n)}_{-i} \mid \allocs)\}$
	} 
} 
\Return{$p_i(z_i \mid \bx)$}\;
\caption{{\sc Pricing Rule Derived from $\GRD$} (Parametrized by $\ALG$)}
\label{alg:greedy-prices}
\end{algorithm}

\subsubsection{Proof of Theorem \ref{thm:greedy-perm}}

We prove the theorem in two steps. We first use permeability of the greedy allocation rule to establish Condition (a) (in Lemma \ref{lem:greedy-alpha}). We then show Condition (b). For this we first prove a novel combinatorial implication of permeability of the greedy allocation rule (in Lemma \ref{lem:zero-one}) by considering valuations that are either zero or one. We then use this property in a careful layering argument to establish Condition (b) (in Lemma \ref{lem:greedy-beta}).

\begin{lemma}\label{lem:greedy-alpha}
Let $\ALG$ be any allocation rule. Suppose that the greedy allocation rule $\GRD$ is $\gamma$-permeable for a subinstance-closed collection of binary, single-parameter problems $\Pi$. Then the pricing rule described in Algorithm \ref{alg:greedy-prices} fulfills Condition~(a) of Definition~\ref{def:ab2} with $\alpha = \gamma$ with respect to allocation rule $\ALG$ and the canonical exchange-feasible sets $(\feasx)_{\bx \in X}$.
\end{lemma}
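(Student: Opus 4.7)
The plan is to prove Condition (a) of Definition~\ref{def:ab2} with $\alpha=\gamma$ by telescoping along the sequence of reference allocations generated by Algorithm~\ref{alg:greedy-prices}. Fix $\bv$ and $\bx \in \feas$, and let $\br^{(0)}=\ALG(\bv), \br^{(1)}, \ldots, \br^{(n)}$ together with $\vals^{(0)}, \ldots, \vals^{(n)}$ be the objects produced by running the algorithm on the full allocation $\bx$. A direct induction on $k$ shows that the reference allocations computed inside the algorithm when evaluating $p_i(x_i \mid \bx_{[i-1]})$ agree with the above $\br^{(k)}$ for $k\leq i-1$ (because $(\bx_{[i-1]})_{[k]} = \bx_{[k]}$ in that regime) and then stabilize from step $i-1$ onwards (because $(\bx_{[i-1]})_{[k]} = \bx_{[i-1]}$ for $k\geq i-1$ and $\vals^{(k-1)}$ is supported on $\br^{(k-1)}$, which is already feasible with $\bx_{[i-1]}$). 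Hence the ``final'' reference allocation that determines the price of player $i$ is exactly $\br^{(i-1)}$.

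Set $V_j := \bv(\br^{(j)})$. Since greedy on $\vals^{(j-1)}$ only picks players with positive value, $\br^{(j)} \subseteq \br^{(j-1)}$ and $V_0 \geq V_1 \geq \cdots \geq V_n$, with $V_0 = \bv(\ALG(\bv))$. Moreover, $\br^{(n)}$ is disjoint from $\bx$ and $\br^{(n)} \cup \bx \in \feas$, so $\br^{(n)}$ belongs to the canonical exchange-feasible set $\feasx$ and therefore $V_n \leq \bv(\OPT(\bv, \feasx))$. Given these bounds, it suffices to prove the per-step inequality $\gamma \cdot p_i(x_i \mid \bx_{[i-1]}) \geq V_{i-1} - V_i$ for every $i$; summing and telescoping then yields $\gamma \sum_i p_i(x_i \mid \bx_{[i-1]}) \geq V_0 - V_n \geq \bv(\ALG(\bv)) - \bv(\OPT(\bv, \feasx))$, which is exactly Condition (a) with $\alpha=\gamma$.

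For the per-step bound, the case $x_i = 0$ is trivial: the feasibility constraint is unchanged from $\bx_{[i-1]}$ to $\bx_{[i]}$, so $\br^{(i)} = \br^{(i-1)}$ and both sides vanish. For $x_i=1$, let $D_i := \br^{(i-1)}\setminus \br^{(i)}$, so that $V_{i-1}-V_i = \bv(D_i)$; the elements of $D_i$ are precisely the members of $\br^{(i-1)}$ whose selection by greedy is blocked once player $i$ is added to the feasibility constraint. I would invoke Lemma~\ref{lem:zero-one}, applied in the subinstance contracted by $\bx_{[i-1]}$ with the bid vector $\vals^{(i-1)}$ (which is supported on $\br^{(i-1)}$ and whose structure is that of a weighted $\{0,1\}$ vector), to charge the total lost value $\bv(D_i)$ against a \emph{single} critical price for player $i$. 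Concretely: if $i \in \br^{(i-1)}$, the algorithm sets $p_i(1 \mid \bx_{[i-1]}) = v_i$ and the lemma yields $\bv(D_i) \leq \gamma \, v_i$; if $i \notin \br^{(i-1)}$, the algorithm sets $p_i(1 \mid \bx_{[i-1]}) = \tau_i^{\GRD}(\vals^{(i-1)}_{-i} \mid \bx_{[i-1]})$, and the lemma yields $\bv(D_i) \leq \gamma \, \tau_i^{\GRD}(\vals^{(i-1)}_{-i} \mid \bx_{[i-1]})$.

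The main obstacle is this per-step inequality for $x_i=1$. A naive application of $\gamma$-permeability (for instance, to $S = \br^{(i-1)}$ under modified bids, or to $S = D_i\cup\{i\}$ under $\vals^{(i-1)}$) produces an inequality in the wrong direction, because permeability bounds a \emph{sum} of critical prices by the greedy welfare, whereas here we need to bound a sum of values by a \emph{single} critical price. The role of Lemma~\ref{lem:zero-one} is precisely to convert $\gamma$-permeability into such a combinatorial statement by exploiting the $\{0,1\}$-like structure of $\vals^{(i-1)}$ on $\br^{(i-1)}$; once that step is granted, the rest of the argument is the telescoping bookkeeping described above together with the observation that $\br^{(n)}$ witnesses the residual-optimum bound $V_n \leq \bv(\OPT(\bv, \feasx))$.
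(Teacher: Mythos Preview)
Your telescoping framework is correct and matches the paper's proof: one shows the per-step inequality $p_i(x_i \mid \bx_{[i-1]}) \geq \tfrac{1}{\gamma}(\bv(\br^{(i-1)}) - \bv(\br^{(i)}))$, then sums and uses $\br^{(0)} = \ALG(\bv)$ and $\br^{(n)} \in \feasx$. The case $x_i=0$ is also handled correctly.

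The gap is in the per-step bound for $x_i=1$, which you delegate to Lemma~\ref{lem:zero-one}. That lemma only yields a \emph{cardinality} inequality $|C| \leq \gamma\,|B_0|$; it says nothing about weighted sums, and $\vals^{(i-1)}$ is not a $0/1$ vector---it carries the actual values $v_j$ on $\br^{(i-1)}$. In the paper, Lemma~\ref{lem:zero-one} is used only for Condition~(b), and there it is combined with a full layering argument over value levels to convert cardinality into weight. You have not supplied such a layering (nor specified the sets $A_t$, $B_t$, $C$), so as written the invocation does not establish $\bv(D_i) \leq \gamma\, p_i(1 \mid \bx_{[i-1]})$. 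Also, the subcase $i \in \br^{(i-1)}$ needs no lemma at all: since $\br^{(i-1)} \cup \bx_{[i-1]} \in \feas$ and $i \in \br^{(i-1)}$, removing $i$ from the pool leaves every other member of $\br^{(i-1)}$ feasible with $\bx_{[i]}$, so $D_i = \{i\}$ and $\bv(D_i) = v_i = p_i$.

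The paper's argument for the remaining subcase $i \notin \br^{(i-1)}$ is different and more direct. It applies $\gamma$-permeability once, in the subinstance holding $\bx_{[i-1]} \cup \br^{(i)}$ fixed, to the auxiliary valuation profile $\bv'$ with $v'_i = \tau_i^{\GRD}(\vals^{(i-1)}_{-i} \mid \bx_{[i-1]}) + \epsilon$, $v'_j = v_j$ for $j \in D_i$, and $v'_j = 0$ otherwise. In that subinstance greedy selects only $i$, and each $j \in D_i$ has greedy critical value at least $v'_i$; permeability then gives $|D_i|\cdot v'_i \leq \gamma\, v'_i$, i.e.\ $|D_i| \leq \gamma$. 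Separately, one observes that the price itself equals the \emph{largest} value in $D_i$: $p_i(1 \mid \bx_{[i-1]}) = \max_{j \in D_i} v_j$. Combining the two yields $p_i \geq \tfrac{1}{|D_i|}\sum_{j \in D_i} v_j \geq \tfrac{1}{\gamma}\,\bv(D_i)$, which is the missing per-step inequality. You correctly diagnosed that a ``naive'' use of permeability points the wrong way; the fix is not Lemma~\ref{lem:zero-one} but this cardinality bound on $D_i$ together with the identification of $p_i$ as $\max_{j \in D_i} v_j$.
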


\begin{proof}
Let $\allocs \in \feas$. We will show that $p_i(x_i \mid \allocs_{[i-1]}) \geq \frac{1}{\gamma} \cdot (\bv(\br^{(i-1)}) - \bv(\br^{(i)}))$. By a telescoping-sum argument, this then implies
\begin{align*}
\sum_{i \in N} p_i(x_i \mid \allocs_{[i-1]}) & \geq \sum_{i \in N} \frac{1}{\gamma} \cdot \bigg(\bv(\br^{(i-1)}) - \bv(\br^{(i)})\bigg) \\
&  = \frac{1}{\gamma} \cdot \bigg(\bv(\br^{(0)}) - \bv(\br^{(n)})\bigg) \geq \frac{1}{\gamma} \cdot \bigg(\bv(\ALG(\bv)) - \bv(\OPT(\bv, \feasx))\bigg),
\end{align*}
where the last step follows from the fact that $\br^{(0)} = \ALG(\bv)$ and $\br^{(n)} \in \feasx$.

So, it only remains to show $p_i(x_i \mid \allocs_{[i-1]}) \geq \frac{1}{\gamma} \cdot (\bv(\br^{(i-1)}) - \bv(\br^{(i)}))$. Observe that if $x_i = 0$, we have $\br^{(i-1)} = \br^{(i)}$ and this claim follows trivially. So, consider an arbitrary player $i$ for which $x_i = 1$. If $i \in \br^{(i-1)}$ then $\br^{(i-1)} \setminus \br^{(i)} = \{i\}$. So $p_i(1 \mid \allocs_{[i-1]}) = v_i$, while $\bv(\br^{(i-1)}) - \bv(\br^{(i)}) = v_i$ and the claim is true.

Otherwise, $i \not\in \br^{(i-1)}$, and we will first use smoothness with respect to subinstances to bound the size of the set
$\br^{(i-1)} \setminus \br^{(i)}$. For a fixed $\epsilon > 0$, define $\bv'$ by setting $v_i' = \tau_i^{\GRD}(\vals^{(i-1)}_{-i} \mid \allocs_{[i-1]}) + \epsilon$, $v_j' = v_j$ for $j \in \br^{(i-1)} \setminus \br^{(i)}$, and $v'_j = 0$ for all other $j$.

Now player $i \in \GRD(\bv' \mid \bx_{[i-1]} \cup \br^{(i)})$ by definition of $\bv'$ and $v'_i$ in particular, while for each player $j \in \br^{(i-1)} \setminus \br^{(i)}$ we have $j \not\in \GRD(\bv' \mid \allocs_{[i-1]} \cup \br^{(i)})$ because it cannot be added to $\allocs_{[i-1]} \cup \br^{(i)} \cup \{ i \} = \allocs_{[i]} \cup \br^{(i)}$ by definition of $\br^{(i)}$. Hence, the greedy critical values of each player $j \in \br^{(i-1)} \setminus \br^{(i)}$ must be at least $\tau_j^\GRD(\bv' \mid \allocs_{[i-1]} \cup \br^{(i)}) \geq v'_i$.

Since both player $i$ and the set of players $\br^{(i-1)} \setminus \br^{(i)}$ are feasible extensions to $\allocs_{[i-1]} \cup \br^{(i)}$,
we can use $\gamma$-permeability of the greedy allocation rule in the subinstance in which we hold $\allocs_{[i-1]} \cup \br^{(i)}$ fixed to obtain,
\[
|\br^{(i-1)}\setminus \br_{(i)}| \cdot v'_i \leq \sum_{j \in \br^{(i-1)}\setminus \br^{(i)}} \tau_j^{\GRD}(\bv' \mid \allocs_{[i-1]} \cup \br^{(i)})
\leq \gamma \cdot \bv'\left(\GRE(\bv' \mid \allocs_{[i-1]} \cup \br^{(i)})\right)
= \gamma \cdot v'_i.
\]
Cancelling $v'_i$ shows that $|\br^{(i-1)} \setminus \br^{(i)}| \leq \gamma$.

To show the claim it now suffices to observe that the greedy critical value of player $i$ in the subinstance
where we hold $\allocs_{[i-1]}$ fixed under the original valuations is the highest value of a player
$j \in \br^{(i-1)} \setminus \br^{(i)}$. Namely,
\[
	p_i(1 \mid \allocs_{[i-1]})
	= \max_{j \in \br^{(i-1)} \setminus \br^{(i)}} v_j
	\geq \frac{1}{\gamma} \cdot \sum_{j \in \br^{(i-1)} \setminus \br^{(i)}} v_j = \bv(\br^{(i-1)}) - \bv(\br^{(i)}),
\]
which concludes the proof.
\end{proof}

\begin{lemma}\label{lem:zero-one}
Suppose that the greedy allocation rule $\GRD$ is $\gamma$-permeable for a subinstance-closed collection of problems of binary, single-parameter problems $\Pi$. Consider any problem $\pi \in \Pi$ with feasibility structure $\feas$. Furthermore, let $B_0 \supseteq B_1 \supseteq \ldots \supseteq B_n$ and $A_0 \subseteq A_1 \subseteq \ldots \subseteq A_n$ with $B_t \cup A_t \in \feas$ for all $t$.
Consider a set $C$ that fulfills $C \cup A_n \in \F$ and for every $i \in C$ there is a $t \in \{0, 1, \ldots, n\}$ with $i \in B_t$ or $\{ i \} \cup B_t \cup A_t \not\in \F$. Then we have $\lvert C \rvert \leq \gamma \cdot \lvert B_0 \rvert$.
\end{lemma}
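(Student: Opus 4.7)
The plan is to apply the $\gamma$-permeability inequality once, in a carefully chosen subinstance (available by the subinstance-closure of $\Pi$) and with a carefully designed bid profile, taking $X = C \setminus B_0$ in the permeability bound. The combinatorial target is a setup where $\GRD$ selects exactly $B_0 \setminus C$ (each at bid $M$), while every element of $C \setminus B_0$ has greedy critical value exactly $M$. Dividing by $M$ then yields $|C \setminus B_0| \le \gamma\, |B_0 \setminus C|$, and combining with the trivial bound $|C \cap B_0| \le |B_0 \cap C|$ and $\gamma \ge 1$ gives $|C| \le \gamma\,|B_0|$.

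First I would prove the base case $n = 0$, which already contains the core idea. Work in the subinstance $\F/(A_0 \cup (B_0 \cap C))$ and set $b_i = M$ for $i \in B_0 \setminus C$, $b_i = 1$ for $i \in C \setminus B_0$, and $b_i = 0$ otherwise. Since $B_0 \cup A_0 \in \F$, downward closure makes $B_0 \setminus C$ feasible in the subinstance, so $\GRD$ selects all of $B_0 \setminus C$. Each $i \in C \setminus B_0$ is then rejected: adding it would produce $\{i\} \cup B_0 \cup A_0$, which the witness condition at $t = 0$ excludes from $\F$. Thus $\GRD(\bids) = B_0 \setminus C$, and for the same reason the critical value of every $i \in C \setminus B_0$ equals $M$ (bidding less leaves $i$ blocked; bidding more places $i$ first but still blocks at least one element of $B_0 \setminus C$). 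Moreover $C \setminus B_0$ is feasible in the subinstance because $(C \setminus B_0) \cup A_0 \cup (B_0 \cap C) = C \cup A_0 \subseteq C \cup A_n \in \F$. Applying $\gamma$-permeability with $X = C \setminus B_0$ and dividing by $M$ closes the base case.

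For general $n$, I would argue by induction on $n$, leveraging the following key observation: if $\{i\} \cup B_{t_i} \cup A_{t_i} \notin \F$, then since supersets of infeasible sets are infeasible, $\{i\} \cup B' \cup A'' \notin \F$ whenever $B' \supseteq B_{t_i}$ and $A'' \supseteq A_{t_i}$; in particular, if greedy has already picked $B_{t_i}$ and we work in a subinstance containing $A_{t_i}$, then $i$ will be rejected. This suggests stratifying the bids on $B_0 \setminus C$ by the last time each element belongs to the $B$-sequence (giving higher bids to elements that persist longer in $B_t$) so that greedy processes the ``deeper'' parts of the $B_t$-nest first. Elements of $C$ with witness time $t_i$ would then be processed only after greedy has already committed to a superset of $B_{t_i}$, forcing their rejection and pinning their critical value above the baseline bid.

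The main obstacle is that we have no a priori guarantee that $B_0 \cup A_n \in \F$, so we cannot simply run greedy in $\F / A_n$ with all of $B_0$ available. The subinstance must instead be built from an intersection of $A_n$ with the $B_t \cap C$ slices so that at every ``layer'' of the bid profile the relevant portion of $B_0$ is feasible there. The delicate accounting is to ensure that, across layers, each element of $C$ is charged exactly once to some element of $B_0 \setminus C$, yielding a total bound $\gamma\,|B_0|$ rather than the much weaker $\gamma \sum_t |B_t|$ that a naive decomposition by witness time would produce. Carrying out this single-shot charging argument, so that the nested structure $B_0 \supseteq \cdots \supseteq B_n$ and $A_0 \subseteq \cdots \subseteq A_n$ combine with downward-closedness of $\F$ to aggregate all the witness conditions into one permeability application, is the technical heart of the proof.
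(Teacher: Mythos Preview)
Your base case $n=0$ is correct and clean; the single permeability application with bids $M$ on $B_0\setminus C$ really does force $\tau_i=M$ for all $i\in C\setminus B_0$ and gives $|C|\le\gamma|B_0|$.

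The gap is the general case. You correctly identify the obstacle---that $B_0\cup A_n$ need not be feasible, so no single subinstance simultaneously contains enough of $A_{t_i}$ to reject every $i\in C$ while still admitting all of $B_0$---but you do not resolve it. Stratifying bids on $B_0$ by depth in the $B$-chain does not help by itself: for $i$ with witness time $t_i$ you need $A_{t_i}$ in the subinstance to force infeasibility of $\{i\}\cup B_{t_i}\cup A_{t_i}$, and these $A_{t_i}$ are different for different $i$. Your suggestion to build the subinstance from ``an intersection of $A_n$ with the $B_t\cap C$ slices'' is too vague to evaluate, and the induction you mention is never set up. As written, the proposal stops exactly where the real work begins.

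The paper does \emph{not} use a single permeability application. It applies permeability once per level $t=0,\dots,n$, each time in a different subinstance. The device that prevents the bound from degrading to $\gamma\sum_t|B_t|$ is a recursive set $C_t$: starting from $C_{n+1}=C$, take $C_t\subseteq C_{t+1}\setminus B_t$ maximal with $C_t\cup A_t\cup B_t\in\F$. Maximality forces every $i\in C_{t+1}\setminus(C_t\cup(B_t\cap C_{t+1}))$ to have greedy critical value $1$ against $B_t\setminus(B_{t+1}\cup(B_t\cap C_{t+1}))$ in the subinstance fixing $B_{t+1}\cup A_t\cup C_t\cup(B_t\cap C_{t+1})$, so permeability yields $|C_{t+1}\setminus C_t|\le\gamma|B_t\setminus B_{t+1}|$. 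Summing telescopes to $|C|-|C_0|\le\gamma|B_0|$, and the witness condition on $C$ together with the maximality guarantees $C_0=\emptyset$. This layer-by-layer recursion with the maximal $C_t$ is the missing idea in your proposal.
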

\begin{proof}
Set $B_{n+1} = \emptyset$ and define $C_{n+1} = C$. Furthermore, define $C_t$ for $0 \leq t \leq n$ recursively as a maximal subset of the players in $C_{t+1} \setminus B_t$ such that $C_t \cup A_t \cup B_t \in \F$.

We will show that for all $t \in \{0, 1, \ldots, n\}$,
\[
	|C_{t+1} \setminus C_t| \leq \gamma \cdot |B_t \setminus B_{t+1}|.
\]

Consider some fixed $t$ and define $D := B_t \cap C_{t+1}$.

Now a crucial observation is that the set $C_{t+1} \setminus (C_{t} \cup D)$ is feasible holding $E := B_{t+1} \cup A_{t} \cup C_{t} \cup D$
fixed. This is because
\[
\big( C_{t+1} \setminus (C_{t} \cup D) \big) \cup B_{t+1} \cup A_{t} \cup C_{t} \cup D
\; = \; B_{t+1} \cup A_{t} \cup C_{t+1}
\; \subseteq \; B_{t+1} \cup A_{t+1} \cup C_{t+1}
\; \in \; \F.
\]
Further note that by the way we have chosen $C_{t}$ we know that $B_{t}\setminus (B_{t+1} \cup D)$ is another feasible extension to $E$ because
\[
\left( B_{t} \setminus (B_{t+1} \cup D) \right) \cup B_{t+1} \cup A_{t} \cup C_{t} \cup D
\; = \; B_{t} \cup A_{t} \cup C_{t}
\; \in \; \F.
\]

To apply $\gamma$-permeability in the subinstance where we hold $E$ fixed, define a valuation profile $\bar{\bv}$ by setting $\bar{v}_i = 1$ for $i \in B_t \setminus (B_{t+1} \cup D)$ and $0$ otherwise. Now, for every $i \in C_{t+1} \setminus (C_t \cup D)$, we have
\[
\tau_i^{\GRE}(\bar{\bv} \mid E) = 1.
\]
This is due to the maximality of $C_t$: If for some $i \in C_{t+1} \setminus (C_t \cup D)$ this value is $0$, then also $\{ i \} \cup \left( B_t \setminus (B_{t+1} \cup D) \right) \cup B_{t+1} \cup A_t \cup C_t \cup D \in \F$.

So by permeability,
\begin{align*}
	\lvert C_{t+1} \setminus (C_{t} \cup D) \rvert = \sum_{i \in C_{t+1} \setminus (C_t \cup D)} \tau_i^{\GRE}(\bar{\bv} \mid E) \leq \gamma \cdot \bar{\bv}(\GRE(\bar{\bv} \mid E)) = \gamma \cdot \lvert B_{t} \setminus (B_{t+1} \cup D) \rvert,
\end{align*}
and therefore
\[
\lvert C_{t+1} \setminus C_{t} \rvert = \lvert D \rvert + \lvert C_{t+1} \setminus (C_{t} \cup D) \rvert \leq \lvert D \rvert + \gamma \cdot \lvert B_t \setminus (B_{t+1} \cup D) \rvert \leq \gamma \cdot \lvert B_t \setminus B_{t+1} \rvert.
\]

We now obtain the desired bound on the size of the set $C$ by summing the previous inequality over all $t$ and
using that it becomes a telescoping sum
\[
	|C| + |C_0| = |C_{n+1}| + |C_0|
	= \sum_{t = 0}^{n} |C_{t+1} \setminus C_t|
	\leq \gamma \cdot \sum_{t = 0}^{n}  |B_{t} \setminus B_{t+1}|
	= \gamma \cdot (|B_0| - |B_{n+1}|) = \gamma \cdot |B_0|.
\]
It remains to show that all players in $C$ will be covered (i.e., that $C_0 = \emptyset)$. This follows from the fact
that for each player $i \in C$ by the definition of $C$ there exists a $t$ such that $i \in B_t$ or $i$ does not fit into $B_t \cup A_t$
and, thus, in either case $i \not\in C_t$.
\end{proof}

\begin{figure}
\centering
\begin{tikzpicture}
\draw[very thick, gray]
  (-0.5,0) -- (8.5,0);
  \draw[dashed, thick, gray]
    (8,0.75+0.1)  -- (0,0.75+0.1) node[xshift=-0.5cm] {$v_{(1)}$};
  \draw[dashed, thick, gray]
    (8,1.25+0.1)  -- (0,1.25+0.1) node[xshift=-0.5cm] {$v_{(2)}$};
  \draw[dashed, thick, gray]
    (8,1.75+0.1)  -- (0,1.75+0.1) node[xshift=-0.5cm] {$v_{(3)}$};
  \draw[dashed, thick, gray]
    (8,2.75+0.1)  -- (0,2.75+0.1) node[xshift=-0.5cm] {$v_{(4)}$};
  \draw[dashed, thick, gray]
    (8,3.25+0.1)  -- (0,3.25+0.1) node[xshift=-0.5cm] {$v_{(5)}$};
\foreach \x/\y/\z in {0/2.75/dgray,1/1.75/lgray,2/0.75/white,3/1.25/lgray,4/3.25/dgray,5/0.75/white,6/2.75/dgray,7/0.75/white}
  \draw[rounded corners=5pt,thick,fill=\z]
    (\x+0.25,0.1) rectangle ++(0.5,\y);
\end{tikzpicture}
\caption{Our proof that the pricing rule derived from the greedy allocation rule satisfies Condition (\ref{condb}), relies on the fact that we can chop the valuation and price space into discrete layers, which reduces the problem to 0/1-valuations.} \label{fig:layering}
\end{figure}
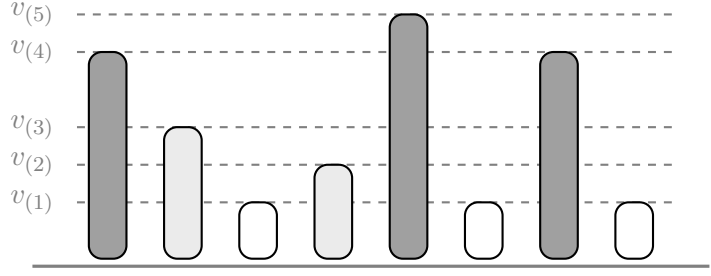

\begin{lemma}\label{lem:greedy-beta}
Let $ALG$ be any allocation rule. Suppose that the greedy allocation rule $\GRD$ is $\gamma$-permeable for a subinstance-closed collection of binary, single-parameter problems $\Pi$. Then the pricing rule described in Algorithm \ref{alg:greedy-prices} fulfills Condition~(b) of Definition~\ref{def:ab2} with $\beta_1 = 0$ and $\beta_2 = \gamma$ with respect to allocation rule $\ALG$ and the canonical exchange-feasible sets $(\feasx)_{\bx \in X}$.
\end{lemma}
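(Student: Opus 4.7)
My plan is to prove the stated upper bound by a layering argument that reduces Condition~(b), threshold by threshold, to the combinatorial Lemma~\ref{lem:zero-one}. Recall $p_i(0 \mid \cdot) = 0$ and that for $x'_i = 1$ the price $p_i(1 \mid \bx_{[i-1]})$ equals either $v_i$ (if $i \in \br^{(i-1)}$) or the greedy critical value $\tau_i^{\GRD}(\vals^{(i-1)}_{-i} \mid \bx_{[i-1]})$ (otherwise). The layer-cake identity then gives
\[
\sum_{i \in N} p_i(x'_i \mid \bx_{[i-1]}) = \int_0^\infty |C(w)|\, \dd w, \quad C(w) := \{i : x'_i = 1,\ p_i(1 \mid \bx_{[i-1]}) \geq w\},
\]
so it suffices to show $|C(w)| \leq \gamma \cdot |\{i \in \ALG(\bv) : v_i \geq w\}|$ for every $w > 0$; integrating this bound yields $\gamma \cdot \bv(\ALG(\bv))$, exactly what Condition~(b) with $\beta_1=0$, $\beta_2=\gamma$ requires.

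For each fixed $w > 0$ I will invoke Lemma~\ref{lem:zero-one} with $A_t := \{j \leq t : x_j = 1\}$, $B_t := \{j \in \br^{(t)} : v_j \geq w\}$, and $C := C(w)$. The chain $A_0 \subseteq \cdots \subseteq A_n$ is immediate, and feasibility $B_t \cup A_t \in \feas$ follows by downward closure from $\br^{(t)} \cup \bx_{[t]} \in \feas$, which holds by the very construction of $\br^{(t)} = \GRD(\vals^{(t-1)} \mid \bx_{[t]})$. Feasibility of $C(w) \cup A_n$ is clear because $C(w) \subseteq \bx'$ and $\bx \cup \bx' \in \feas$ since $\bx' \in \feasx$. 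For the monotonicity $B_0 \supseteq \cdots \supseteq B_n$, I will adopt the (harmless) convention that the greedy rule never selects a player of value~$0$; under this rule $\br^{(t)} \subseteq \br^{(t-1)}$, because only players in $\br^{(t-1)}$ carry positive value in $\vals^{(t-1)}$, and the layer sets inherit the inclusions.

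The delicate step, which I expect to be the main obstacle, is verifying the hypothesis of Lemma~\ref{lem:zero-one}: for each $i \in C(w)$ there must exist $t$ with $i \in B_t$ or $\{i\} \cup B_t \cup A_t \notin \feas$. I will always take $t = i-1$. If $i \in \br^{(i-1)}$ then $p_i(1 \mid \bx_{[i-1]}) = v_i \geq w$ by definition of $C(w)$, so $i \in B_{i-1}$. If $i \notin \br^{(i-1)}$ then the greedy critical value is at least $w$, so any $v'_i < w$ (chosen close enough to $w$ to be larger than every $v_j < w$ in $\br^{(i-1)}$) makes greedy on the profile $(v'_i, \vals^{(i-1)}_{-i})$ starting from $\bx_{[i-1]}$ fail to select $i$. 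With that choice of $v'_i$, the players processed strictly before $i$ are exactly those in $B_{i-1}$; they all fit, because $B_{i-1} \cup A_{i-1} \subseteq \br^{(i-1)} \cup \bx_{[i-1]} \in \feas$. Hence at the moment greedy considers $i$ the current partial allocation is precisely $B_{i-1} \cup A_{i-1}$, and non-selection of $i$ translates to $\{i\} \cup B_{i-1} \cup A_{i-1} \notin \feas$, as required.

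Putting everything together, Lemma~\ref{lem:zero-one} delivers $|C(w)| \leq \gamma \cdot |B_0| = \gamma \cdot |\{i \in \ALG(\bv) : v_i \geq w\}|$, and integrating over $w$ yields $\sum_i p_i(x'_i \mid \bx_{[i-1]}) \leq \gamma \cdot \bv(\ALG(\bv))$, which is Condition~(b) with $\beta_1 = 0$ and $\beta_2 = \gamma$. The only real subtlety is the tie-breaking / limit argument needed to (i) guarantee $\br^{(t)} \subseteq \br^{(t-1)}$ and (ii) identify the "processed-before-$i$" set with $B_{i-1}$; modulo this routine bookkeeping the reduction to Lemma~\ref{lem:zero-one} is clean and layer-by-layer, matching the intuition depicted in Figure~\ref{fig:layering}.
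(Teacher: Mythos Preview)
Your proof is correct and follows essentially the same approach as the paper's: both reduce Condition~(b) to Lemma~\ref{lem:zero-one} via a layering argument with the identical choice $A_t=\bx_{[t]}$, $B_t=\{j\in\br^{(t)}:v_j\ge w\}$, $C=\{i:x'_i=1,\ p_i(1\mid\bx_{[i-1]})\ge w\}$, and $t=i-1$ for the case split. The only cosmetic difference is that you use the continuous layer-cake integral while the paper layers discretely at the distinct values $v_{(1)},\dots,v_{(n)}$; you also spell out more carefully why $\{i\}\cup B_{i-1}\cup A_{i-1}\notin\feas$ in the $i\notin\br^{(i-1)}$ case and flag the zero-value tie-breaking convention needed for $B_0\supseteq\cdots\supseteq B_n$, both of which the paper leaves implicit.
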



\begin{proof}
Consider an arbitrary feasible $\allocs \in \F$ and an arbitrary feasible extension $\allocs' \in \feasx$. We want to show that
\[
	\sum_{i \in N} p_i(x_i' \mid \allocs_{[i-1]}) \leq \gamma \cdot \bv(\ALG(\bv)).
\]

We will prove this claim through a layering argument; as depicted in Figure \ref{fig:layering}. To this end, let $v_{(j)}$ be the $j$-th highest value of $v_1, \ldots, v_n$; furthermore $v_{(n+1)} = 0$. For each $j \in [n]$, let $S^j$ denote the set of players with value at least $v_{(j)}$ and let $T^j$ denote the set of players with $x'_i = 1$ that see a price $p_i(1 \mid \allocs_{[i-1]})$ of at least $v_{(j)}$.

We now apply Lemma~\ref{lem:zero-one} for each $j \in [n]$ by setting $A_t = \allocs_{[t]}$, $B_t = \br^{(t)} \cap S^j$, $C = T^j$. Note that $C \cup A_n \in \F$ and for every $i \in C$ there is a $t \in \{0, 1, \ldots, n\}$ such that $i \in B_t$ or $\{ i \} \cup B_t \cup A_t \not\in \F$. We obtain $\lvert T^j \rvert \leq \gamma \cdot \lvert \br^{(0)} \cap S^j \rvert$.

We conclude that
\begin{align*}
	\sum_{i \in N} p_i(x_i' \mid \allocs_{[i-1]}) & = \sum_{i \in N} \sum_{j=1}^{n} \ind_{i \in T^j} \cdot (v_{(j)} - v_{(j+1)}) \\
	&= \sum_{j=1}^{n} \lvert T^j \rvert \cdot (v_{(j)} - v_{(j+1)})\\
	&\leq \sum_{j=1}^{n} \gamma \cdot \lvert \br^{(0)} \cap S^j \rvert \cdot (v_{(j)} - v_{(j+1)})\\
	&= \gamma \cdot \bv(\ALG(\bv))\enspace,
\end{align*}
where the first equality holds by definition of the sets $T^j$, the second equality is basic calculus, the inequality follows from Lemma \ref{lem:zero-one} as argued above, and the final equality holds by definition of $\br^{(0)} = \ALG(\bv)$ and the sets $S^j$.
\end{proof}


\subsection{Proof of Theorem \ref{thm:opt}}
\label{app:opt}

In this subsection we prove Theorem \ref{thm:opt}, which claims that $(\lambda,\mu)$-smoothness of the pay-your-bid mechanism based on the welfare-maximizing allocation rule for subinstance-closed collection of binary, single-parameter problems $\Pi$ implies the existence of a weakly $(1,(\mu+1)/\lambda)$-balanced pricing rule. By Theorem \ref{thm:permeable} it suffices to show the following theorem.

\begin{theorem}\label{thm:opt-perm}
Let $\ALG$ be any allocation rule. Suppose that the welfare-maximizing allocation rule $\OPT$ is $\gamma$-permeable for a subinstance-closed collection of binary, single-parameter feasibility problems $\Pi$. Then there exists a pricing rule that is weakly $(1,0,\gamma^2)$-balanced with respect to $\ALG$ and the canonical exchange-feasible sets $(\feas_\bx)_{\bx \in X}$.
\end{theorem}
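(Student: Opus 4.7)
The plan is to adapt the construction from Theorem~\ref{thm:greedy-perm}. Specifically, I would use the natural OPT analog of Algorithm~\ref{alg:greedy-prices}: replace every occurrence of $\GRD$ with $\OPT$, so that the reference allocations become $\br^{(0)} := \ALG(\bv)$ and $\br^{(j)} := \OPT(\vals^{(j-1)} \mid \bx_{[j]})$, with $\vals^{(j-1)}$ agreeing with $\bv$ on $\br^{(j-1)}$ and zero elsewhere. The prices are $p_i(0 \mid \bx) = 0$, $p_i(1 \mid \bx) = v_i$ when $i \in \br^{(n)}$, and otherwise the $\OPT$-critical value $\inf\{v'_i : i \in \OPT((v'_i, \vals^{(n)}_{-i}) \mid \bx)\}$. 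As in the greedy case, agent $i$ sees $p_i(x_i \mid \bx_{[i-1]})$ at runtime, in which $\br^{(n)}$ collapses to $\br^{(i-1)}$.

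For Condition~(a) with $\alpha = 1$ I would follow the telescoping strategy of Lemma~\ref{lem:greedy-alpha}, but the welfare-maximisation of $\OPT$ sharpens each step. By the standard Myerson/VCG identity, when $i \notin \br^{(i-1)}$ the $\OPT$-critical value against $\vals^{(i-1)}_{-i}$ in $\feas/\bx_{[i-1]}$ equals \emph{exactly} the externality $\bv(\br^{(i-1)}) - \bv(\br^{(i)})$; when $i \in \br^{(i-1)}$ the charged $v_i$ still upper-bounds this externality because $\br^{(i-1)} \setminus \{i\}$ remains feasible in $\feas/\bx_{[i]}$. Summing the pointwise inequality $p_i(x_i \mid \bx_{[i-1]}) \geq \bv(\br^{(i-1)}) - \bv(\br^{(i)})$ and telescoping gives $\sum_i p_i(x_i \mid \bx_{[i-1]}) \geq \bv(\br^{(0)}) - \bv(\br^{(n)}) \geq \bv(\ALG(\bv)) - \bv(\OPT(\bv, \feasx))$, which is the tight $\alpha = 1$ bound (the factor of $\gamma$ that appears in the greedy analysis disappears).

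The main work lies in Condition~(b) with $\beta_2 = \gamma^2$. I would adapt the layering decomposition of Lemma~\ref{lem:greedy-beta}, writing $\sum_{i : x'_i = 1} p_i(1 \mid \bx_{[i-1]}) = \sum_j |T^j| (v_{(j)} - v_{(j+1)})$ with $T^j := \{i : x'_i = 1,\, p_i(1 \mid \bx_{[i-1]}) \geq v_{(j)}\}$, and aim for $|T^j| \leq \gamma^2 \cdot |\br^{(0)} \cap S^j|$. The new subtlety, absent in the greedy setting, is that $p_i \geq v_{(j)}$ only certifies that the \emph{total} welfare displaced at step $i$ is at least $v_{(j)}$; unlike for $\GRD$, this does not by itself force displacement of any layer-$j$ element of $\br^{(i-1)}$, because the displaced welfare could be spread across many below-layer-$j$ players. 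My plan is a two-stage charging: one factor of $\gamma$ comes from the 0/1 restriction of $\bv$ to layer $j$ via the OPT-permeable analog of Lemma~\ref{lem:zero-one} (whose proof uses $\gamma$-permeability only and so transfers verbatim to OPT-permeable systems), bounding the number of forcings that actually displace a layer-$j$ element of $\br^{(0)}$; a second factor of $\gamma$ comes from a further permeability invocation accounting for the slack between total displaced welfare and its layer-$j$ portion, handling those $i \in T^j$ whose displaced set does not intersect $S^j$.

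The principal obstacle is to coordinate these two 0/1-level permeability arguments across steps whose subinstances $\feas/\bx_{[i-1]}$ and auxiliary profiles $\vals^{(i-1)}_{-i}$ vary with $i$. An amortised charging that redistributes each price $p_i$ across the layers its displaced set actually intersects, combined with telescoping over the layer index, should make the factors multiply cleanly into $\gamma^2$ and yield the desired Condition~(b) bound.
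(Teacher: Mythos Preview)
Your construction and the argument for Condition~(a) match the paper exactly: the paper's Algorithm~\ref{alg:opt-prices} is precisely the $\OPT$-analog of Algorithm~\ref{alg:greedy-prices} you describe, and Lemma~\ref{lem:opt-alpha} proves $\alpha=1$ by the same telescoping (indeed the prices are defined so that $p_i(x_i\mid\bx_{[i-1]})=\bv(\br^{(i-1)})-\bv(\br^{(i)})$ exactly, not just as a lower bound).

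For Condition~(b), however, the paper does \emph{not} stay within $\OPT$ as you propose; it detours through $\GRD$. Two auxiliary lemmas carry the argument. First, Lemma~\ref{lem:greedyoptpermeability} shows that $\gamma$-permeability of $\OPT$ implies $\gamma$-permeability of $\GRD$, so the original (greedy) Lemma~\ref{lem:zero-one} is available as-is. Second, Lemma~\ref{lem:greedyoptthresholds} shows $\tau_i^{\OPT}(\vals^{(i-1)}_{-i}\mid\bx_{[i-1]})\le\gamma\cdot\tau_i^{\GRD}(\vals^{(i-1)}_{-i}\mid\bx_{[i-1]})$: the $\OPT$-critical value displaces at most $\gamma$ elements (by permeability), each of value at most the $\GRD$-critical value. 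The layering then defines $T^j$ at threshold $\gamma\cdot j\epsilon$ (not $j\epsilon$), so membership in $T^j$ forces $\tau_i^{\GRD}\ge j\epsilon$, which \emph{does} give the infeasibility hypothesis of Lemma~\ref{lem:zero-one}. One factor of $\gamma$ comes from this rescaling, the other from Lemma~\ref{lem:zero-one}.

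Your plan has a concrete gap exactly at the point you flag. You want to invoke an ``$\OPT$-permeable analog of Lemma~\ref{lem:zero-one}'' at layer $j$, but the hypothesis of that lemma is the \emph{infeasibility} condition $\{i\}\cup B_t\cup A_t\notin\feas$ with $B_t=\br^{(t)}\cap S^j$. For $\GRD$, a critical value $\ge v_{(j)}$ is equivalent to that infeasibility (greedy adds all layer-$j$ elements first). For $\OPT$ it is not: $\tau_i^{\OPT}\ge v_{(j)}$ only says the displaced set has total value $\ge v_{(j)}$, and $i$ may still fit together with every layer-$j$ element of $\br^{(i-1)}$. So you cannot feed $T^j$ (defined via $\OPT$-prices) into the lemma without the bridge that the paper supplies via $\GRD$. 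Your fallback ``amortised charging that redistributes each $p_i$ across the layers its displaced set intersects'' would need to bound, for each layer $j$, the quantity $\sum_{i\in\bx'}|Z'_i\cap S^j|$ by $\gamma^2\cdot|\br^{(0)}\cap S^j|$, where $Z'_i\subseteq\br^{(i-1)}$ is the displaced set; but the $Z'_i$ for different $i\in\bx'$ can overlap arbitrarily, and you have not said which permeability application controls that overlap. The paper's detour through $\GRD$ is precisely what makes the two factors of $\gamma$ ``multiply cleanly.''
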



As in the case of greedy we first describe the construction of the prices, and then we show that these prices are balanced. 


\subsubsection{Construction of the Prices}
\label{app:opt-prices}

We define the price $p_i(z_i \mid \allocs)$ for player $i \in N$ and outcome $z_i \in \{0,1\}$ and arbitrary but fixed valuation profile $\vals$ and allocation $\allocs \in \feas$ through Algorithm \ref{alg:opt-prices}. 
For this section, define $\OPT(\bv \mid \bx)$ as the allocation that results by padding the welfare-maximizing allocation for valuation profile $\bv$ over $\contraction$ with empty allocations. 

As in the the case of the greedy allocation rule, we again set $p_i(0 \mid \bx) = 0$ and we compute $p_i(1 \mid \bx)$ via reference allocations and reference valuations. We again define the initial reference allocation as $\br^{(0)} = \ALG(\vals)$ and the initial reference valuations by setting $v^{(0)}_j = v_j$ for $j \in \br^{(0)}$ and $v^{(0)}_j = 0$ otherwise.
The subsequent reference allocations and valuations are defined recursively as $r^{(i)} = \OPT(v^{(i-1)} \mid \allocs_{[i]})$ and $v^{(i)}_j = v^{(i-1)}_j = v_j$ for $j \in \br^{(0)}$ and $v^{(i)}_j = 0$ otherwise. We then set $p_i(1 \mid \bx) = v_i$ if $i \in \br^{(n)}$ and $p_i(1 \mid \bx) = \inf\{v'_i\mid i \in \OPT(v'_i,\valsmi^{(n)} \mid \allocs)\}$ otherwise. Note that this definition immediately implies that $p_i(x_i \mid \allocs_{[i-1]}) = \bv(\br^{(i-1)}) - \bv(\br^{(i)})$.



By substituting all occurrences of $n$ with $i-1$ we obtain the formula for the price $p_i(1 \mid \allocs_{[i-1]})$. We use the shorthand $\tau^{\OPT}(\valsmi^{(i-1)} \mid \allocs_{[i-1]}) := \inf\{v'_i\mid i \in \OPT(v'_i,\valsmi^{(i-1)}) \mid \allocs_{[i-1]})\}$.

\begin{algorithm}
\DontPrintSemicolon
\KwIn{$z_i \in \{0,1\}$, $\bv$, $\bx \in \feas$}
\KwOut{$p_i(z_i \mid \allocs)$}
\If{$z_i = 0$}{%
// In this case the price is simply zero\;
$p_i(z_i \mid \allocs) = 0$%
} 
\Else{%
	// First determine reference allocation and valuations\;
	$\br^{(0)} \gets \ALG(\bv)$,
	$v^{(0)}_k \gets v_k$ if $k \in \br^{(0)}$ and $v^{(0)}_k \gets 0$ otherwise\;
	\For{$j \gets 1$ {\bf to} $n$}{
		$\br^{(j)} \gets \OPT(\vals^{(j-1)} \mid \bx_{[j]})$,
		$v^{(j)}_k \gets v_k^{(j-1)} = v_k$ if $k \in \br^{(j)}$ and
		$v^{(j)}_k \gets 0$ else
	}
	// Now determine the price\;
	\If{$i \in \br^{(n)}$}{%
		// If player $i$ is part of the reference allocation he pays his valuation\;
		$p_i(z_i \mid \bx) \gets v_i$
	} 
	\Else{%
		// Otherwise he pays the critical value against the players in the reference allocation\;
		$p_i(z_i \mid \bx) \gets \inf \{v'_i: \; i \in \OPT(v'_i,\vals^{(n)}_{-i} \mid \allocs)\}$
	} 
} 
\Return{$p_i(z_i \mid \bx)$}\;
\caption{{\sc Pricing Rule Derived from $\OPT$} (Parametrized by $\ALG$)}
\label{alg:opt-prices}
\end{algorithm}


\subsubsection{Proof of Theorem \ref{thm:opt-perm}}

We again proceed in two steps. We first show Condition (a) (in Lemma \ref{lem:opt-alpha} below). Afterwards we show that the permeability of $\OPT$ provides an upper bound on the permeability of $\GRD$ and that the critical prices with respect to $\OPT$ are not much higher than those with respect to $\GRD$ (in Lemmas \ref{lem:greedyoptpermeability} and \ref{lem:greedyoptthresholds}). This allows us to bound Condition (b) using the same machinery that we used in the previous section (in Lemma \ref{lem:opt-beta})

\begin{lemma}\label{lem:opt-alpha}
Let $\ALG$ be any allocation rule. Suppose that the welfare-maximizing allocation rule $\OPT$ is $\gamma$-permeable for a subinstance-closed collection of problems of binary, single-parameter problems $\Pi$. Then the pricing rule described in Algorithm \ref{alg:opt-prices} fulfills Condition~(a) of Definition~\ref{def:ab2} with $\alpha = 1$ with respect to allocation rule $\ALG$ and the canonical exchange-feasible sets $(\feasx)_{\bx \in X}$.
\end{lemma}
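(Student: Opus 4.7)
The heart of the argument is the per-player bound
\[
p_i(x_i \mid \bx_{[i-1]}) \geq \bv(\br^{(i-1)}) - \bv(\br^{(i)}),
\]
where $(\br^{(j)}, \bv^{(j)})_{j=0}^{n}$ is the reference sequence produced by Algorithm~\ref{alg:opt-prices} on the global input $\bx$. Summing this inequality over $i$ telescopes to $\bv(\br^{(0)}) - \bv(\br^{(n)}) = \bv(\ALG(\bv)) - \bv(\br^{(n)})$. Since $\br^{(n)} = \OPT(\bv^{(n-1)}\mid \bx)$ lies, by construction, in the canonical exchange-feasible set $\feasx$, optimality gives $\bv(\br^{(n)}) \leq \bv(\OPT(\bv,\feasx))$, and Condition~(a) with $\alpha = 1$ follows.

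The per-player bound requires a small bookkeeping step. When Algorithm~\ref{alg:opt-prices} is invoked on the truncated input $\bx_{[i-1]}$ to compute $p_i(x_i \mid \bx_{[i-1]})$, it builds its own local reference sequence $\tilde{\br}^{(j)}$. A routine induction yields $\tilde{\br}^{(j)} = \br^{(j)}$ for $j \leq i-1$. For $j \geq i$, the local input $\bx_{[i-1]}$ no longer agrees with $\bx_{[j]}$; however, since $\bv^{(i-1)}$ is supported on $\br^{(i-1)}$ and $\br^{(i-1)} = \OPT(\bv^{(i-1)}\mid \bx_{[i-1]})$ (using the WLOG convention $\br^{(i-1)} \subseteq \br^{(i-2)}$), iterating the algorithm's recursion stabilizes at $\tilde{\br}^{(j)} = \br^{(i-1)}$. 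Consequently the algorithm's final reference is $\br^{(i-1)}$, and it outputs $p_i(1\mid \bx_{[i-1]}) = v_i$ precisely when $i \in \br^{(i-1)}$.

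With the bookkeeping in place, I proceed by a three-case split. \emph{Case $x_i = 0$:} then $\bx_{[i]} = \bx_{[i-1]}$, so $\br^{(i)} = \br^{(i-1)}$ and both sides of the bound vanish. \emph{Case $x_i = 1$ and $i \notin \br^{(i-1)}$:} downward-closure puts $\br^{(i-1)}$ inside both $\feas/\bx_{[i-1]}$ and $\feas/\bx_{[i]}$, so optimality forces $\bv(\br^{(i)}) = \bv(\br^{(i-1)})$, and the bound reduces to $p_i \geq 0$, which is trivial. \emph{Case $x_i = 1$ and $i \in \br^{(i-1)}$:} downward-closure places $\br^{(i-1)}\setminus\{i\}$ into $\feas/\bx_{[i]}$, hence $\bv(\br^{(i)}) \geq \bv(\br^{(i-1)}) - v_i$; and by the bookkeeping step the price is exactly $v_i$, closing the bound.

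The only real difficulty is the bookkeeping step: one must verify that the algorithm's terminal reference, when given the truncated input $\bx_{[i-1]}$, equals $\br^{(i-1)}$ rather than $\br^{(i)}$, since $\bx_{[i]}$ is never seen. Once this alignment is established, the bound $\alpha = 1$ drops out cleanly from the optimality of $\OPT$ together with downward-closure of $\feas$; unlike the greedy analysis of Lemma~\ref{lem:greedy-alpha}, no use of $\gamma$-permeability is needed here, which is why the factor improves all the way to $1$.
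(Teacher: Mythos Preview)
Your overall plan---prove the per-player inequality $p_i(x_i \mid \bx_{[i-1]}) \geq \bv(\br^{(i-1)}) - \bv(\br^{(i)})$, telescope, and use $\br^{(n)} \in \feasx$---matches the paper exactly (the paper in fact asserts \emph{equality} in the per-player bound and then telescopes). Your bookkeeping step and Cases~1 and~3 are fine.

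The gap is in your Case~2 ($x_i = 1$, $i \notin \br^{(i-1)}$). You write ``downward-closure puts $\br^{(i-1)}$ inside both $\feas/\bx_{[i-1]}$ and $\feas/\bx_{[i]}$,'' but this is false: membership in $\feas/\bx_{[i]}$ requires $\br^{(i-1)} \cup \bx_{[i-1]} \cup \{i\} \in \feas$, which is a \emph{superset} of the known-feasible set $\br^{(i-1)} \cup \bx_{[i-1]}$, so downward-closure says nothing. Concretely, take the single-item constraint with $\br^{(i-1)} = \{j\}$ for some $j \neq i$; then $\{j\}$ is infeasible together with $\{i\}$, so $\br^{(i)} = \emptyset$ and $\bv(\br^{(i-1)}) - \bv(\br^{(i)}) = v_j > 0$, contradicting your claim that the right-hand side vanishes.

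The correct argument in this case (implicit in the paper's ``this definition immediately implies\ldots'') is a VCG-style identity for the $\OPT$ critical value. With $v^{(i-1)}_i = 0$, the optimum of $\bv^{(i-1)}$ over $\feas/\bx_{[i-1]}$ is $\bv(\br^{(i-1)})$, and the optimum over sets $S$ with $S \cup \{i\} \in \feas/\bx_{[i-1]}$ is exactly $\bv(\br^{(i)})$ (since such $S$ are precisely the members of $\feas/\bx_{[i]}$ not containing $i$). Player $i$ enters $\OPT$ when $v'_i + \bv(\br^{(i)}) \geq \bv(\br^{(i-1)})$, so $\tau_i^{\OPT}(\bv^{(i-1)}_{-i} \mid \bx_{[i-1]}) = \bv(\br^{(i-1)}) - \bv(\br^{(i)})$, which is the price the algorithm outputs. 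This gives the per-player bound (indeed, equality) in Case~2 and repairs your proof.
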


\begin{proof}
Consider $\allocs \in \feas$. We defined prices exactly so that $p_i(x_i \mid \allocs_{[i-1]}) = \bv(\br^{(i-1)}) - \bv(\br^{(i)})$.  Therefore, using a telescoping-sum argument, we get
\[
\sum_{i \in N} p_i(x_i \mid \allocs_{[i-1]}) = \bv(\br^{(0)}) - \bv(\br^{(n)}).
\]
The claim now follows from the fact that $\br^{(0)} = \ALG(\bv)$ and $\br^{(n)} \in \feasx$.
\end{proof}

\begin{lemma}
\label{lem:greedyoptpermeability}
If the greedy allocation rule $\GRD$ is $\gamma^\GRE$-permeable and the welfare-maximizing allocation rule $\OPT$ is $\gamma^\OPT$-permeable for a subinstance-closed collection of of binary, single-parameter problems $\Pi$, then $\gamma^\OPT \geq \gamma^\GRE$.
\end{lemma}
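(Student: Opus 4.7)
Plan:

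Given an arbitrary GRD-permeability witness $(\bids, \bx)$ with $\bx \in \feas$, I would exhibit an OPT-permeability instance whose ratio is at least as large as $\sum_{i \in \bx} \tau_i^{\GRD}(\bidsmi) / \bids(\GRD(\bids))$. Let $G := \GRD(\bids)$ and note that $G$ is a \emph{maximal} feasible set (the greedy rule adds every player it feasibly can). Define the truncated bid profile $\bids^*$ by $b_j^* = b_j$ for $j \in G$ and $b_j^* = 0$ otherwise. Because $G$ is maximal and all bids outside $G$ are zero under $\bids^*$, every feasible set $S$ satisfies $\bids^*(S) = \sum_{j \in S \cap G} b_j \leq \bids(G)$ with equality iff $S \supseteq G$, which by maximality forces $S = G$. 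Hence $\OPT(\bids^*) = G$ and the denominators $\bids^*(\OPT(\bids^*))$ and $\bids(\GRD(\bids))$ coincide exactly.

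The core of the argument is a pointwise comparison of critical values for $i \notin G$. A direct computation from the definition of $\OPT$ and the zeroed-out bids gives the clean formula
\[
\tau_i^{\OPT}(\bids^*_{-i}) \;=\; \min_{T \in \feas,\; T \ni i}\,\sum_{j \in G \setminus T} b_j .
\]
For the reverse bound, if $b_i < \tau_i^{\GRD}(\bidsmi)$ then greedy run on $(b_i, \bidsmi)$ reaches player $i$ having accepted exactly $\{j \in G : b_j > b_i\}$ (the greedy run on the high-bid prefix agrees with $G$ restricted to that prefix), and fails to add $i$; hence this blocker set cannot be extended by $i$ while remaining feasible. Equivalently, every feasible $T \ni i$ omits some $j \in G$ with $b_j > b_i$, so $\sum_{j \in G \setminus T} b_j > b_i$. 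Taking the supremum over such $b_i$ yields $\tau_i^{\GRD}(\bidsmi) \leq \tau_i^{\OPT}(\bids^*_{-i})$ for every $i \notin G$.

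Plugging this coordinatewise comparison into the OPT permeability bound applied to $(\bids^*, \bx \setminus G)$, which is feasible because $\feas$ is downward closed and $\bx \in \feas$, gives
\[
\sum_{i \in \bx \setminus G}\tau_i^{\GRD}(\bidsmi) \;\leq\; \sum_{i \in \bx \setminus G}\tau_i^{\OPT}(\bids^*_{-i}) \;\leq\; \gamma^{\OPT}\cdot \bids^*(\OPT(\bids^*)) \;=\; \gamma^{\OPT}\cdot \bids(\GRD(\bids)),
\]
which is half of the target inequality. The remaining sum $\sum_{i \in \bx \cap G}\tau_i^{\GRD}(\bidsmi)$ satisfies the naive bound $\tau_i^{\GRD}(\bidsmi) \leq b_i$ because such $i$ is greedy-selected at its own bid. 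The main obstacle is absorbing this remaining contribution into the same $\gamma^{\OPT}$ factor rather than incurring an additive constant (a lazy analysis only gives $\gamma^{\GRE} \leq \gamma^{\OPT} + 1$). I would handle this by re-running the reduction in the subinstance $\feas/(\bx \cap G)$---available by the subinstance-closed hypothesis on $\Pi$---in which the analogue of $G$ is $G \setminus \bx$ and the players in $\bx \cap G$ are contracted out; matching the two bounds and using maximality of $G$ in both instances yields the clean inequality $\gamma^{\OPT} \geq \gamma^{\GRE}$.
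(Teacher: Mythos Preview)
Your approach is correct and essentially the same as the paper's: both zero out bids outside the greedy set $G$ so that $\OPT$ and $\GRD$ coincide on the modified profile, compare critical values pointwise for $i\notin G$, apply $\gamma^{\OPT}$-permeability, and absorb the overlap $\bx\cap G$ via $\tau_i^{\GRD}(\bidsmi)\le b_i$ together with $\gamma^{\OPT}\ge 1$. The paper simply contracts by $\by=\bx\cap G$ at the outset and runs the whole comparison inside that subinstance; your final paragraph lands on the same contraction but leaves implicit two routine facts you should spell out---that greedy in $\feas/(\bx\cap G)$ outputs exactly $G\setminus\bx$ with critical values no smaller than the original ones, and that the recombination $\bids(\bx\cap G)+\gamma^{\OPT}\cdot\bids(G\setminus\bx)\le\gamma^{\OPT}\cdot\bids(G)$ uses $\gamma^{\OPT}\ge 1$.
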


\begin{proof}
We only have to show that for all $\allocs \in \feas$, $\allocs' \in \feasx$ and all $\bv$, we have
\[
	\sum_{i \in \allocs'} \tau_i^\GRE(\valsmi \mid \allocs) \leq \gamma^\OPT \cdot \bv(\GRE(\bv \mid \allocs)).
\]

Let $\by = \allocs' \cap \GRE(\bv \mid \allocs)$. Observe that because $\by \subseteq \GRE(\bv \mid \allocs)$, we have $\GRE(\bv \mid \allocs \cup \by) = \GRE(\bv \mid \allocs)$.
Define a valuation profile $\bv'$ by setting $v_i' = v_i$ for all $i \in \allocs' \setminus Q$ and $v_i' = 0$ otherwise. By loser independence of greedy, $\GRE(\bv \mid \allocs \cup \by) = \GRE(\bv' \mid \allocs \cup \by)$.  Furthermore, $\OPT(\bv' \mid \allocs \cup \by) = \GRE(\bv' \mid \allocs \cup \by)$.

We claim that for $i \in \bx'\setminus \by$ we have
\[
\tau_i^\GRE(\valsmi \mid \allocs) \leq \tau_i^\GRE(\valsmi' \mid \allocs \cup \by) \leq \tau_i^\OPT(\valsmi' \mid \allocs \cup \by) \enspace.
\]

For the first inequality we use that $\tau_i^\GRE(\valsmi \mid \allocs)$ is the value $v_j$ of some $j$ that has to be outbid by player $i$. By fixing another set $\by$, this value can only go up because the options are limited further. Also, when fixing $\allocs \cup \by$, replacing the valuations by $\bv'$ has no influence because the set of players that are selected remains unchanged.

The second inequality holds because under $\bv'$ player $i$ in order to win when we hold $\bx \cup \by$ fixed has to force some subset $\bz \subseteq \GRE(\valsmi' \mid \allocs \cup \by) = \OPT(\valsmi' \mid \allocs \cup \by)$ out of the solution. Under $\OPT$ his payment is the sum of the respective players' valuations, under $\GRD$ it is just the highest valuation of any such player.

%


On the other hand, for players $i \in \by$, because $\by \subseteq \GRE(\vals \mid \allocs)$, the greedy critical value $\tau_i^\GRE(\valsmi \mid \allocs)$ is at most $v_i$.

Using these two bounds on $\tau_i^\GRE(\valsmi \mid \allocs)$ we obtain,
\begin{align*}
	\sum_{i \in \allocs'} \tau_i^\GRE(\valsmi \mid \allocs) & \leq \sum_{i \in \by} v_i + \sum_{i \in \allocs' \setminus \by} \tau_i^\GRE(\valsmi \mid \allocs) \\
	& \leq \sum_{i \in \by} v_i + \sum_{i \in \allocs' \setminus \by} \tau_i^\OPT(\valsmi' \mid \allocs \cup \by) \\
	& \leq \sum_{i \in \by} v_i + \gamma^\OPT \cdot \bv'(\OPT(\bv' \mid \allocs \cup \by))\\
	& = \sum_{i \in \by} v_i + \gamma^\OPT \cdot \bv'(\GRE(\bv' \mid \allocs \cup \by))\\
	& \leq \gamma^\OPT \cdot \sum_{i \in \by} v_i + \gamma^\OPT \cdot \bv'(\GRE(\bv' \mid \allocs \cup \by)) \\
	& = \gamma^\OPT \cdot \bv(\OPT(\bv \mid \allocs)\enspace,
\end{align*}
where the third inequality use $\gamma^\OPT$-permeability of the welfare-maximizing allocation rule $\OPT$ in the subinstance in which we hold $\bx \cup \by$ fixed, the subsequent equality holds by the definition of $\bv'$, the fourth inequality uses that $\gamma^\OPT \geq 1$, and the final equality holds by the definition of $\by$ and $\bv'$.
\end{proof}

\begin{lemma}
\label{lem:greedyoptthresholds}
If the greedy allocation rule $\GRD$ is $\gamma^\GRD$-permeable for a subinstance-closed collection of of binary, single-parameter problems $\Pi$, then
\[
	\tau_i^\OPT(\bv^{(i-1)} \mid \bx_{[i-1]}) \leq \gamma^\GRE \cdot \tau_i^\GRE(\bv^{(i-1)} \mid \bx_{[i-1]})
\]
\end{lemma}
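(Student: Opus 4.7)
Abbreviate $\tau = \tau_i^{\GRE}(\bv^{(i-1)}_{-i} \mid \bx_{[i-1]})$, $\sigma = \tau_i^{\OPT}(\bv^{(i-1)}_{-i} \mid \bx_{[i-1]})$, and write $\bv' = \bv^{(i-1)}_{-i}$ for brevity. Let $S^{\ast} = \OPT(\bv' \mid \bx_{[i-1]})$ and let $T^{\ast}$ be a welfare-maximizing feasible extension of $\bx_{[i-1]}$ that additionally contains $i$; by the standard externality interpretation of a welfare-maximizer's critical value, $\sigma = \bv'(S^{\ast}) - \bv'(T^{\ast} \setminus \{i\})$. The plan is to exhibit a set $I^{\ast} \subseteq S^{\ast}$ satisfying (i) $(S^{\ast} \setminus I^{\ast}) \cup \{i\}$ is a feasible extension of $\bx_{[i-1]}$, and (ii) $\bv'(I^{\ast}) \le \gamma^{\GRE} \cdot \tau$. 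From (i), the optimality of $T^{\ast}$ among $i$-containing feasible extensions forces $\bv'(T^{\ast} \setminus \{i\}) \ge \bv'(S^{\ast} \setminus I^{\ast})$, which combined with (ii) gives $\sigma \le \bv'(S^{\ast}) - \bv'(S^{\ast} \setminus I^{\ast}) = \bv'(I^{\ast}) \le \gamma^{\GRE} \tau$, exactly the lemma's conclusion.

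\paragraph{Constructing $I^{\ast}$.} Run greedy on the perturbed bid profile $(\tau, \bv'_{-i})$ starting from $\bx_{[i-1]}$, and denote its output by $G$; by the definition of $\tau$, $i \in G$. Setting $I^{\ast} = S^{\ast} \setminus G$ gives (i) at once, since $(S^{\ast} \setminus I^{\ast}) \cup \{i\} = (S^{\ast} \cap G) \cup \{i\} \subseteq G$ is feasible with $\bx_{[i-1]}$. It remains to verify (ii).

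\paragraph{Bounding $\bv'(I^{\ast})$.} For (ii) I would adapt the layering used in the proof of \lemref{lem:greedy-beta}, applied to \lemref{lem:zero-one} one value layer at a time. For each $\theta > 0$ let $I^{\ast}_{\theta} = \{j \in I^{\ast} : v'_j \ge \theta\}$, so that $\bv'(I^{\ast}) = \int_0^{\infty} |I^{\ast}_{\theta}| \, d\theta$. Enumerate greedy's processing order as $h_1, \ldots, h_n$ and write $G_{t-1}$ for the partial greedy allocation just before step $t$; take $A_t = \bx_{[i-1]} \cup G_{t-1}$ and $B_t = \{i\} \setminus G_{t-1}$, so that $A_t$ is monotonically increasing, $B_t$ shrinks from $\{i\}$ to $\emptyset$ once $i$ is selected, and $B_t \cup A_t \subseteq \bx_{[i-1]} \cup G$ is feasible throughout. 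Every $j \in I^{\ast}$ is rejected by greedy at some step $t(j)$ with $\{j\} \cup G_{t(j)-1}$ infeasible, and downward closure lifts this to $\{j\} \cup B_{t(j)} \cup A_{t(j)} \notin \F$, supplying the witness required by \lemref{lem:zero-one}. For $\theta \le \tau$ this yields $|I^{\ast}_{\theta}| \le \gamma^{\GRE} \cdot |B_0| = \gamma^{\GRE}$; for $\theta > \tau$, every such $j$ is processed strictly before $i$, so its rejection is witnessed by $A_{t(j)}$ alone, and an analogous application of \lemref{lem:zero-one} with $B_t \equiv \emptyset$ (after pruning $A_n$ to retain the ``$C \cup A_n \in \F$'' hypothesis) gives $|I^{\ast}_{\theta}| = 0$. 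Integrating over $\theta$ delivers $\bv'(I^{\ast}) \le \gamma^{\GRE} \tau$.

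\paragraph{Main obstacle.} The most delicate step is the upper-layer argument with $\theta > \tau$: in general (non-matroid) subinstance-closed feasibility structures, elements of $I^{\ast}$ with $v'_j > \tau$ really can exist, and their rejection is caused entirely by blockers in $G \setminus \{i\}$ rather than by $i$ itself. Arranging the sequences $(A_t, B_t)$ so that the monotonicity, the joint feasibility $B_t \cup A_t \in \F$, and the ``$C \cup A_n \in \F$'' hypothesis of \lemref{lem:zero-one} are simultaneously satisfied on each layer --- while still forcing $|B_0| = 0$ in the upper layers --- is the step requiring the most care, and it is here that the argument departs from a direct transcription of \lemref{lem:greedy-beta}.
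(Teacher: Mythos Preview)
Your overall plan---find $I^\ast\subseteq S^\ast$ with $(S^\ast\setminus I^\ast)\cup\{i\}$ feasible and $\bv'(I^\ast)\le\gamma^{\GRD}\tau$---is exactly right, and your choice $I^\ast=S^\ast\setminus G$ coincides with the set the paper calls $\bz$. The gap is in your argument for the bound on $\bv'(I^\ast)$.

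The application of \lemref{lem:zero-one} fails its hypothesis $C\cup A_n\in\feas$. With $A_{\text{final}}=\bx_{[i-1]}\cup G$ and $C=I^\ast_\theta\subseteq S^\ast\setminus G$, every $j\in C$ was rejected by greedy because $\{j\}$ together with some prefix $G_{t(j)-1}\subseteq G$ is infeasible with $\bx_{[i-1]}$; since supersets of infeasible sets are infeasible, $\{j\}\cup G$ is infeasible with $\bx_{[i-1]}$ as well. Thus $C\cup A_{\text{final}}\notin\feas$ whenever $C\neq\emptyset$, and the lemma cannot be invoked. Any pruning of $A_n$ to restore this hypothesis breaks the monotone chain $A_0\subseteq\cdots\subseteq A_n$ that \lemref{lem:zero-one} needs.

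You are also working harder than necessary because you do not exploit the special structure of $\bv^{(i-1)}$: its positive entries are supported on $\br^{(i-1)}$, which is itself feasible together with $\bx_{[i-1]}$. Two consequences follow immediately. First, $\GRD$ and $\OPT$ agree on $\bv^{(i-1)}_{-i}$, so $S^\ast=\by:=\br^{(i-1)}\setminus\{i\}$. Second, greedy on $(\tau+\epsilon,\bv^{(i-1)}_{-i})$ accepts every $j\in\by$ with $v_j>\tau$ before reaching $i$ (they are mutually compatible), so every $j\in I^\ast=\by\setminus G$ has $v_j\le\tau$; your ``main obstacle'' is vacuous here. The paper then bounds $\lvert I^\ast\rvert$ by a \emph{single} permeability application in the subinstance fixing $\bx_{[i-1]}\cup(\by\setminus I^\ast)$: both $\{i\}$ and $I^\ast$ are feasible extensions there, each $j\in I^\ast$ has greedy critical value exactly $b_i'=\tau+\epsilon$ against a profile that gives only $i$ and $I^\ast$ positive values, and permeability yields $\lvert I^\ast\rvert\cdot b_i'\le\gamma^{\GRD}\cdot b_i'$, hence $\lvert I^\ast\rvert\le\gamma^{\GRD}$. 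Combining, $\sigma\le\bv'(I^\ast)\le\lvert I^\ast\rvert\cdot\tau\le\gamma^{\GRD}\tau$, with no layering needed.
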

\begin{proof}
Note that for valuations $\vals^{(i-1)}$ both $\GRD$ and $\OPT$ over $\feas/\bx$ return the same set of players. 
The same is true if we drop any player $j$ from $\vals^{(i-1)}$. Dropping player $i$ we can define $\by = \GRD(\valsmi^{(i-1)} \mid \bx_{[i-1]}) = \OPT(\valsmi^{(i-1)} \mid \bx_{[i-1]})$.

Now consider player $i$ bidding $b'_i = \tau_i^\GRD(\valsmi^{(i-1)} \mid \bx_{[i-1]}) + \epsilon$. Then $i \in \GRD(b'_i,\valsmi^{(i-1)} \mid \bx_{[i-1]})$. The addition of player $i$ causes the removal of a (possibly empty) subset of players $\bz \subseteq \by$. That is, $\GRD(b'_i,\valsmi^{(i-1)} \mid \bx_{[i-1]}) = (\by \setminus \bz) \cup \{i\}$.

Define valuations $\vals'$ by setting $v'_i = b'_i$, $v'_j = v^{(i-1)}_j$ for $j \in \bz$, and $v'_j = 0$ for every other player $j$. Consider the subinstance in which we hold $\by \setminus \bz$ fixed. Since both player $i$ and the set of players $\bz$ are feasible extensions we can apply $\gamma^\GRD$-permeability of the greedy allocation rule to obtain
\[
	|\bz| \cdot b'_i
	= \sum_{j \in \bz} \tau_j^{\GRD}(\vals' \mid \by \setminus \bz)\\
	\leq \gamma^\GRE \cdot \bv'(\GRD(\bv' \mid \by \setminus \bz))
	= \gamma^\GRE \cdot b'_i,
\]
and therefore $|\bz| \leq \gamma^\GRD.$

The final step is now to observe that the critical value $\tau_i^\OPT(\bv^{(i-1)} \mid \allocs_{[i-1]})$ of player $i$ under the
welfare-maximizing allocation rule is at most $|T|$ times the critical value $p_i^\GRE(\bv^{(i-1)} \mid \allocs_{[i-1]})$
of player $i$ under the greedy allocation rule.
To see this let $\bz' \subseteq \by$ be the set with the smallest value such that
$(\by \setminus \bz') \cup \{i\} \in \F$. Then,
\[
	\tau_i^\OPT(\bv^{(i-1)} \mid \allocs_{[i-1]})
	= \sum_{j \in \bz'} v_j
	\leq \sum_{j \in \bz} v_j \leq |T| \cdot \max_{j \in \bz} v_j
	= |\bz| \cdot \tau_i^\GRE(\bv^{(i-1)} \mid \allocs_{[i-1]}),
\]
where we used that $\bz$ is some subset of $\by$ such that $(\by \setminus \bz) \cup \{i\} \in \F$
and so its combined value can only be larger than that of $\bz'$.
\end{proof}

\begin{lemma}\label{lem:opt-beta}
Let $\ALG$ be any allocation rule. Suppose that the welfare-maximizing allocation rule $\OPT$ is $\gamma$-permeable for a subinstance-closed collection of binary, single-parameter problems $\Pi$. Then the pricing rule described in Algorithm \ref{alg:opt-prices} fulfills Condition~(b) of Definition~\ref{def:ab2} with $\beta_1 = 0$ and $\beta_2 = \gamma^2$ with respect to allocation rule $\ALG$ and the canonical exchange-feasible sets $(\feasx)_{\bx \in X}$.
\end{lemma}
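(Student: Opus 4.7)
The plan is to combine the two bridges between OPT and GRD critical values provided by Lemmas~\ref{lem:greedyoptpermeability} and~\ref{lem:greedyoptthresholds} with a layering argument that essentially replays the proof of Lemma~\ref{lem:greedy-beta}. The final factor $\gamma^2$ arises as the product of two losses of factor $\gamma$: one when we replace the OPT critical values in the pricing rule by GRD critical values, and one inside the layering step where we invoke Lemma~\ref{lem:zero-one}.

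Concretely, I would first reduce to a ``hybrid'' pricing rule $\tilde{p}_i$ that is identical to the rule from Algorithm~\ref{alg:opt-prices} but uses the greedy critical value in the second case: set $\tilde{p}_i(1\mid \bx_{[i-1]}) = v_i$ if $i \in \br^{(i-1)}$, and $\tilde{p}_i(1\mid \bx_{[i-1]}) = \tau_i^{\GRD}(\bv^{(i-1)}_{-i} \mid \bx_{[i-1]})$ otherwise, where the reference allocations $\br^{(0)} \supseteq \br^{(1)} \supseteq \ldots \supseteq \br^{(n)}$ are still the OPT-based ones from Algorithm~\ref{alg:opt-prices}. Lemma~\ref{lem:greedyoptthresholds} (applied inside the subinstance with $\bx_{[i-1]}$ fixed) gives $p_i(x_i' \mid \bx_{[i-1]}) \leq \gamma^{\GRD} \cdot \tilde{p}_i(x_i' \mid \bx_{[i-1]})$ for every $i$, so it suffices to establish $\sum_i \tilde{p}_i(x_i' \mid \bx_{[i-1]}) \leq \gamma^{\GRD} \cdot \bv(\ALG(\bv))$; invoking $\gamma^{\GRD} \leq \gamma$ from Lemma~\ref{lem:greedyoptpermeability} then yields the target bound of $\gamma^2 \cdot \bv(\ALG(\bv))$.

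To control the hybrid prices, I would mimic the layering argument from Lemma~\ref{lem:greedy-beta} almost verbatim. With $v_{(1)} \geq \ldots \geq v_{(n)}$ the sorted values and $v_{(n+1)} = 0$, I would define $S^j = \{k : v_k \geq v_{(j)}\}$ and $T^j = \{i : x_i' = 1,\ \tilde{p}_i(1 \mid \bx_{[i-1]}) \geq v_{(j)}\}$, and apply Lemma~\ref{lem:zero-one} with $A_t = \bx_{[t]}$, $B_t = \br^{(t)} \cap S^j$, and $C = T^j$ to get $|T^j| \leq \gamma^{\GRD} \cdot |\br^{(0)} \cap S^j|$. Summing against $v_{(j)} - v_{(j+1)}$ then telescopes to $\gamma^{\GRD} \cdot \bv(\ALG(\bv))$ exactly as in the greedy case. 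The chain and feasibility hypotheses of Lemma~\ref{lem:zero-one} are immediate: $\br^{(t)}$ is nested because $\bv^{(t-1)}$ is supported on $\br^{(t-1)}$, while $\br^{(t)} \cup \bx_{[t]} \in \feas$ and $T^j \cup \bx \subseteq \bx' \cup \bx \in \feas$ hold by construction.

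The main obstacle is verifying the remaining hypothesis of Lemma~\ref{lem:zero-one}: for every $i \in T^j$ there is some $t$ with $i \in B_t$ or $\{i\} \cup B_t \cup A_t \notin \feas$. If $i \in \br^{(i-1)}$, then $\tilde{p}_i = v_i \geq v_{(j)}$ forces $i \in S^j$ and hence $i \in B_{i-1}$. The interesting case is $i \notin \br^{(i-1)}$, where $\tau_i^{\GRD}(\bv^{(i-1)}_{-i} \mid \bx_{[i-1]}) \geq v_{(j)}$. Here I would pick a bid $b_i$ strictly below $v_{(j)}$ but above every value $v_k < v_{(j)}$ that appears in $\br^{(i-1)}$ (possible after an arbitrarily small generic perturbation of the values): under $\bv^{(i-1)}_{-i}$, the players processed by greedy before $i$ are then exactly those in $\br^{(i-1)} \cap S^j = B_{i-1}$; they are all added because $\br^{(i-1)} \cup \bx_{[i-1]} \in \feas$; and since $b_i < \tau_i^{\GRD}$, player $i$ must fail to be added, which is precisely $\{i\} \cup B_{i-1} \cup A_{i-1} \notin \feas$. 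This verifies the hypothesis with $t = i-1$ and closes the proof.
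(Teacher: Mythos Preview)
Your proposal is correct and follows essentially the same route as the paper: both combine Lemma~\ref{lem:greedyoptthresholds} (to pass from the $\OPT$ critical value to the $\GRD$ critical value at a cost of $\gamma^{\GRD}$), Lemma~\ref{lem:zero-one} inside a layering argument (for the second factor $\gamma^{\GRD}$), and Lemma~\ref{lem:greedyoptpermeability} (to replace $\gamma^{\GRD}$ by $\gamma$). The only differences are cosmetic: you factor out the first $\gamma^{\GRD}$ explicitly via the hybrid price $\tilde{p}$ and then layer at the sorted values $v_{(j)}$, whereas the paper keeps the factor inside the definition of $T^j$ (thresholding at $\gamma^{\GRD}\cdot j\epsilon$) and uses an $\epsilon$-fine layering. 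Your discrete layering is legitimate because the hybrid prices $\tilde p_i$ take values in $\{0,v_1,\dots,v_n\}$, and the ``perturbation'' remark is in fact unnecessary: whenever $v_{(j)}=v_{(j+1)}$ the corresponding layer contributes zero, and when $v_{(j)}>v_{(j+1)}$ any bid $b_i\in(v_{(j+1)},v_{(j)})$ already makes the players processed before $i$ equal to $\br^{(i-1)}\cap S^j$ exactly.
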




\begin{proof}
Consider an arbitrary allocation $\allocs \in \feas$ and feasible extension $\allocs' \in \feasx$. We want to show that
\[
	\sum_{i \in N} p_i(x'_i \mid \allocs_{[i-1]}) \leq (\gamma^{\OPT})^2 \cdot \vals(\ALG(\vals)).
\]

We will again show this claim through layering. This time, however, we need the layering to arbitrarily fine-grained. We will specify the granularity by $\epsilon > 0$. For each $j \in \mathbb{N}$, let $S^j$ denote the set of players with value at least $j \cdot \epsilon$. Let $T^j$ denote the set of players with $x'_i = 1$ that see a price $p_i^\OPT(x'_i \mid \allocs_{[i-1]})$ of at least $\gamma^{\GRD} \cdot j \cdot \epsilon$.

For any fixed $j \in \mathbb{N}$ we now bound $\lvert T^j \rvert$ using Lemma~\ref{lem:zero-one}. We set $A_t = x_{[t]}$, $B_t = \br^{(t)} \cap S^j$, $C = T^j$. Note that $C \cup A_n \in \F$. We claim that for every $i \in C$ we have $i \in B_t$ or $\{ i \} \cup B_t \cup A_t \not\in \F$ for $t = i-1$. To see this, consider the two options how $p_i^\OPT(x'_i \mid \allocs_{[i-1]})$ can be set. If $i \in \br^{(i-1)}$, then $p_i^\OPT(x'_i \mid \allocs_{[i-1]}) = v_i$. As by definition $p_i^\OPT(x'_i \mid \allocs_{[i-1]}) \geq \gamma^\GRE \cdot j \cdot \epsilon \geq j \cdot \epsilon$, this implies $v_i \geq j \cdot \epsilon$ and so $i \in B_{i-1} = \br^{(i-1)} \cap S^j$. Otherwise, if $i \not\in \br^{(i-1)}$, then $p_i^\OPT(x'_i \mid \allocs_{[i-1]}) = \tau_i^\OPT(\valsmi^{(i-1)} \mid \allocs_{[i-1]})$. In this case, we can apply Lemma~\ref{lem:greedyoptthresholds} to get
\[
p_i^\OPT(x'_i \mid \allocs_{[i-1]}) = \tau_i^\OPT(\valsmi^{(i-1)} \mid \allocs_{[i-1]}) \leq \gamma^\GRE \cdot \tau_i^\GRD(\valsmi^{(i-1)} \mid \allocs_{[i-1]}).
\]
So, we know that $\tau_i^\GRD(\valsmi^{(i-1)} \mid \allocs_{[i-1]}) \geq j \cdot \epsilon$ and therefore $\{ i \} \cup \allocs_{[i-1]} \cup (\br^{(i-1)} \cap Y^j) \not\in \F$.

So, by Lemma~\ref{lem:zero-one}, we get $\lvert T^j \rvert \leq \gamma^\GRE \cdot \lvert \br^{(0)} \cap Y^j \rvert$.

We conclude that
\begin{align*}
	\sum_{i \in \allocs'} p_i^\OPT(x'_i \mid \allocs_{[i-1]})
	&\leq n \epsilon + \sum_{i \in \allocs'} \sum_{j=1}^{\infty} \ind_{i \in T^j} \cdot \gamma^\GRD \cdot \epsilon \\
	&= n \epsilon + \gamma^\GRD \cdot \sum_{j=1}^{\infty} |T^j| \cdot \epsilon \\
	&\leq n \epsilon + \gamma^\GRD \cdot \sum_{j=1}^{\infty} \gamma^\GRD \cdot \lvert \br^{(0)} \cap Y^j \rvert \cdot \epsilon\\
	&\leq (\gamma^\GRD)^2 \cdot \bv(\ALG(\bv)) + 2 n \epsilon.
\end{align*}
As this argument holds for any $\epsilon > 0$, we also have $\sum_{i \in \allocs'} p_i^\OPT(x'_i \mid \allocs_{[i-1]}) \leq (\gamma^\GRE)^2 \cdot \bv(\ALG(\bv))$. By Lemma \ref{lem:greedyoptpermeability}, $\gamma^\GRD \leq \gamma^\OPT$ and the claim follows.
\end{proof}

\begin{remark}
\label{rem:monotone}
When the prices defined by Algorithm \ref{alg:opt-prices} are non-decreasing then $\gamma$-permeability of $\OPT$ immediately implies that $\tau_i^\GRD(\bv^{(i-1)}_{-i} \mid \bx_{[i-1]}) \leq \tau_i^\GRD(\bv^{(i-1)}_{-i} \mid \bx) \leq \gamma \cdot \bv(\OPT(\bv, \feasx))$ implying that the pricing rule is $(\alpha,\beta)$-balanced with $\beta = \gamma$. In this case Theorem \ref{thm:opt-perm} can be strengthened to show the existence of a $(1,\gamma)$-balanced pricing rule.
\end{remark}

\end{document}